\documentclass[aps,twocolumn,prx,floatfix,preprintnumbers, nofootinbib,10pt]{revtex4-2}
\usepackage{xpatch}
\makeatletter
\patchcmd{\@ssect@ltx}
    {\addcontentsline{toc}{#1}{\protect\numberline{}#8}}
    {}
    {}
    {}
\makeatother
\pdfoutput=1
\usepackage{color}
\usepackage{enumitem}

\usepackage{mathtools}
\usepackage[dvipsnames]{xcolor}
\usepackage{float}

\usepackage{hyperref}
\usepackage{cleveref}
\definecolor{amaranth}{rgb}{0.9, 0.17, 0.31}
\definecolor{forestForestGreen(web)}{rgb}{0.13, 0.55, 0.13}
\definecolor{SymTFTColores}{rgb}{0, 0.33, 70.1}
\definecolor{blue(munsell)}{HTML}{005567}
\definecolor{oxfordblue}{rgb}{0.0, 0.2, 0.4}
\definecolor{bblue}{rgb}{0.0, 0.58, 0.71}
\hypersetup{
pdfstartview={FitH}, 
pdftitle={}, 
colorlinks=true, 
linkcolor=oxfordblue, 
citecolor=oxfordblue, 
filecolor=oxfordblue, 
urlcolor=oxfordblue
}
\usepackage{pgfplots}
\pgfplotsset{compat=1.18}
\usepackage{amssymb}
\usepackage{amsfonts}
\usepackage{graphicx}
\usepackage{epstopdf}
\usepackage{dcolumn}
\usepackage{amsmath}
\usepackage{latexsym,bm}
\usepackage{amsthm}
\usepackage{slashed}
\usepackage{color}
\usepackage{url}
\usepackage{rotating}
\usepackage[normalem]{ulem}
\usepackage{faktor}
\usepackage{tikz-cd}
\usepackage{tensor}
\usepackage{array}
\usepackage{multirow}
\usepackage{comment}

\usepackage{tabularx}
\usepackage{makecell}
\usepackage[normalem]{ulem}
\usepackage{changepage}
\numberwithin{equation}{section}

\usepackage{makecell}
\newtheorem{theorem}{Theorem}[section]
\newtheorem{proposition}[theorem]{Proposition}
\newtheorem{lemma}[theorem]{Lemma}
\newtheorem{corollary}[theorem]{Corollary}
\theoremstyle{definition}
\newtheorem{definition}[theorem]{Definition}
\newtheorem{remark}[theorem]{\bf Remark}
\newtheorem{example}[theorem]{\bf Example}

\usepackage{tikz}
\usepackage{diagbox}
\usetikzlibrary{positioning}
\usetikzlibrary{calc}
\usetikzlibrary{decorations.pathreplacing,calligraphy}

\usepackage{xstring}
\usetikzlibrary{decorations.pathmorphing} 
\usetikzlibrary{decorations.markings} 
\usetikzlibrary{arrows} 
\usetikzlibrary{shapes} 
\usetikzlibrary{matrix} 
\usetikzlibrary{positioning} 
\usepackage[english]{babel} 
\usepackage[autostyle]{csquotes}
\usepackage{pifont}
\usetikzlibrary{shapes.multipart}

\tikzset{->-/.style={decoration={
  markings,
  mark=at position .6 with {\arrow{Latex[length=1.5mm,width=1.5mm]}}},postaction={decorate}}}

\usepackage{quantikz}

\newcommand{\bea}{\begin{eqnarray}}
\newcommand{\eea}{\end{eqnarray}}
\newcommand{\be}{\begin{equation}}
\newcommand{\ee}{\end{equation}}
\newcommand{\ba}{\begin{aligned}}
\newcommand{\ea}{\end{aligned}}
\newcommand{\bit}{\begin{itemize}}
\newcommand{\eit}{\end{itemize}}
\newcommand{\ben}{\begin{enumerate}}
\newcommand{\een}{\end{enumerate}}
\newcommand{\nn}{\nonumber}
\newcommand{\id}{\text{id}}

\newcommand{\Dir}{\text{Dir}}

\newcommand{\Bsym}{\mathfrak{B}^{\text{sym}}}
\newcommand{\Bphys}{\mathfrak{B}^{\text{phys}}}

\newcommand{\lb}{\left(}
\newcommand{\rb}{\right)}

\newcommand{\wt}{\widetilde}
\newcommand{\wh}{\widehat}
\newcommand{\ot}{\otimes}

\newcommand{\Z}{{\mathbb Z}}

\newcommand{\bC}{{\mathbb C}}

\newcommand{\Ind}{\text{Ind}}

\newcommand{\Spin}{\text{Spin}}
\newcommand{\cA}{\mathcal{A}}
\newcommand{\cB}{\mathcal{B}}

\newcommand{\cC}{\mathcal{C}}

\newcommand{\cI}{\mathcal{I}}

\newcommand{\cL}{\mathcal{L}}
\newcommand{\cM}{\mathcal{M}}

\newcommand{\cO}{\mathcal{O}}
\newcommand{\cP}{\mathcal{P}}

\newcommand{\cS}{\mathcal{S}}

\newcommand{\cZ}{\mathcal{Z}}

\newcommand{\A}{\mathsf{A}}

\newcommand{\Tr}{\text{Tr}}

\newcommand{\Hom}{\text{Hom}}
\newcommand{\End}{\text{End}}
\renewcommand{\Vec}{\mathsf{Vec}}
\newcommand{\Rep}{\mathsf{Rep}}
\newcommand{\Mod}{\mathsf{Mod}}
\newcommand{\Bimod}{\mathsf{Bimod}}
\renewcommand{\dim}{\text{dim}}

\newcommand{\Vect}{\mathsf{Vec}}

\newcommand{\Ising}{\mathsf{Ising}}

\newcommand{\sym}{\text{sym}}
\newcommand{\phys}{\text{phys}}

\makeatother

\newcommand{\eps}{\epsilon}
\newcommand{\vp}{\varphi}
\renewcommand{\ol}{\overline}
\definecolor{mygreen}{RGB}{28,172,0}
\definecolor{mybluegreen}{HTML}{10414a}
\definecolor{mycyan}{HTML}{00a5fa}

\makeatletter
\renewcommand\subsubsection{\@startsection{subsubsection}{3}{\z@}%
                                     {-3.25ex\@plus -1ex \@minus -.2ex}%
                                     {2.5ex \@plus .2ex}%
                                     {\centering\itshape\bfseries\small}}
\makeatother

\usepackage{physics}
\renewcommand{\ol}{\overline}

\newcommand{\diag}{\text{diag}}
\newcommand{\folded}{\text{folded}}

\newcommand{\III}{\text{III}}
\newcommand{\Irr}{\text{Irr}}

\def\ad{\mathop{\mathrm{ad}}\nolimits}

\def\unit{{1\kern-.65ex {\rm l}}}
\def\1{{1\kern-.65ex {\rm l}}}

\usepackage{makecell}

\makeatletter
\newcommand\xlabel[2][]{\phantomsection\def\@currentlabelname{#1}\label{#2}}
\makeatother

\makeatletter
\def\l@subsubsection#1#2{}
\makeatother

\begin{document}

\title{Twin Algebras: \\
Condensable Algebras beyond Anyons}

\author{Yuhan Gai}
\author{Sakura Sch\"afer-Nameki}
\author{Alison Warman}

\affiliation{Mathematical Institute, University
of Oxford, Woodstock Road, Oxford, OX2 6GG, United Kingdom}

\begin{abstract}
\noindent 
Condensable algebras in 2+1d non-chiral topological orders characterize gapped boundary conditions and interfaces. 
Applied to the Symmetry Topological Field Theory, they allow classification of symmetric gapped phases and impose sharp constraints on possible phase transitions. 
A condensable algebra is specified not only by its underlying set of anyons, which end on the boundary or interface, but also by its algebra structure. 
We introduce the concept of twin condensable algebras, which have the same anyon decomposition, but inequivalent algebra structure. 
We revisit the classification of condensable algebras in $\mathcal{Z}(\text{Vec}_G^\omega)$, i.e. in group-theoretical topological orders for finite groups $G$ with anomaly $\omega$. 
In this context we are able to identify twin algebras that arise from different mechanisms, such as subgroup data, SPT cocycles, and symmetry actions. 
In particular, we construct infinite families of examples of twins from so-called Gassmann triples, and exhibit cases in which the reduced topological orders are inequivalent despite having identical anyon content. 
Physically, twin algebras describe distinct symmetric phases that
have isomorphic spaces of ground states, but inequivalent order parameters. Such twin phases never exhibit relative spontaneous symmetry breaking, and can be used to construct phase transitions without hidden symmetry breaking, which are intrinsically beyond Landau transitions.
\end{abstract}


\maketitle
\tableofcontents

\section{Introduction}
\label{sec:Intro}



The study of symmetries is vital in the characterization of phases of matter. 
Much of the recent excitement has arisen from considering generalized symmetries \cite{Gaiotto:2014kfa} and in particular non-invertible symmetries (for reviews see \cite{Schafer-Nameki:2023jdn, Shao:2023gho, Bhardwaj:2023kri}). Surprisingly, however, even for finite group symmetries, we can still learn new things. This is the main point of the present paper. 

The insights are however a result of developments of  systematic tools that honed our ability to study symmetries, gauging, gapped and gapless phases, in the realm of categorical symmetries. In particular in 1+1d, the tools for the study of categorical symmetries have led to an improved understanding and approach to so-called fusion category symmetries, which are the natural generalization of finite group symmetries, and include non-invertible symmetries. 

A particularly useful framework is that of the Symmetry topological field theory (SymTFT) \cite{Ji:2019jhk,  Gaiotto:2020iye, Apruzzi:2021nmk, Freed:2022qnc}, which allows a systematic exploration of symmetric gapped phases for fusion categories \cite{Bhardwaj:2024qrf,  Bhardwaj:2024wlr, Bhardwaj:2024kvy, Chatterjee:2024ych, Warman:2024lir, Bottini:2025hri}. Here, we will focus mostly on the case of group-theoretical symmetries, i.e. those obtained by gauging of finite group symmetries, however the framework is applicable to any fusion category. 

\begin{figure}
$$
\begin{tikzpicture}
\begin{scope}[shift={(0,0)}]
\draw [SymTFTColores, fill= SymTFTColores, opacity = 0.2] 
(0,0) -- (0,2) -- (2,2) -- (2,0) -- (0,0) ; 
\draw [white] (0,0) -- (0,2) -- (2,2) -- (2,0) -- (0,0)  ; 
\draw [very thick] (0,0) -- (0,2) ;
\draw [very thick] (2,0) -- (2,2) ;
\draw [ thick] (0,0.8) -- (2,0.8) ;
\fill[] (0,0.8) circle (0.05cm);
\fill[] (2,0.8) circle (0.05cm);
\node[above] at (1,0.8) {$a$};
\node[above] at  (1,1.3)  {$\cZ(\cS)$} ;
\node[above] at (0,2) {$\cL^\sym_{\cS}$}; 
\node[above] at (2.2,2) {$\cL^\phys$}; 
\end{scope}
\begin{scope}[shift={(4,0)}]
\node at (-1,1) {$=$} ;
\draw [very thick] (0,0) -- (0,2) ;
\fill[] (0,0.8) circle (0.05cm);
\node[right]at  (0,0.8) {$\cO_{a}$};
\node[above] at (0.1,2) {$\cP_{\cL}^{\cS}$}; 
\end{scope}
\end{tikzpicture}
$$
\caption{The SymTFT for $\cS$-symmetric phases: $\cL_\cS^\sym$ is the symmetry boundary (Lagrangian algebra), and $\cL^\phys$ the physical, and for gapped phases is chosen to be a Lagrangian algebra. The interval compatification gives rise to the phase $\cP$. Local order parameters re $\cO_a$, that arise from anyons in the SymTFT (generalized charges) that end on  both boundaries. \label{fig:SymTFTAGAIN}}
\end{figure}

\vspace{2mm}
\noindent{\bf Gapped boundary conditions.}
The SymTFT takes in a symmetry $\cS$ and realizes it on the gapped boundary of a 2+1d TQFT. Putting this theory on a finite interval, with a gapped second boundary, the physical boundary, gives rise to a gapped phase, see  \Cref{fig:SymTFTAGAIN}. Thus, clearly, studying gapped boundary conditions of the SymTFT is paramount to understanding gapped phases in 1+1d. Examples we will focus on are $\cS$ that are obtained from finite group symmetries $G$, possibly after gauging. The SymTFT is the $G$-gauge theory or Dijkgraaf-Witten theory (with a twist, if $G$ has an 't Hooft anomaly $\omega$). The topological line defects of the SymTFT form a braided fusion category, the Drinfeld center $\cZ (\cS)$, and will be referred to as anyons. 
The SymTFT paradigm now states, that gapped boundary conditions are in 1-1 correspondence with so-called Lagrangian algebras in the Drinfeld center $\cZ(\cS)$ of $\cS$. 
Lagrangian algebras are  decomposable as non-negative integer combination of anyons $a_i$ 
\be
\cL = \bigoplus_i n_i a_i \,.
\ee
However crucially, there is an algebra structure, i.e multiplication 
\be
m: \quad \cL \otimes \cL \to \cL \,.
\ee
{The multiplication $m$ is an element in the vector space $\Hom_{\cZ(\cS)}(\cL\otimes \cL, \cL)$.}
It is this algebra structure that we will be most interested in in this paper. 

Similarly condensable algebras more generally are \'etale algebras in $\cZ (\cS)$, which need not be of maximal dimension. See \Cref{app:defn-cond-alg} for definition of condensable and Lagrangian algebras. Lagrangian algebras are condensable algebras of maximal dimension. Non-Lagrangian condensable algebras organize gapless phases \cite{Chatterjee:2022tyg,Bhardwaj:2023bbf, Wen:2023otf, Bhardwaj:2024qrf}. 

\vspace{2mm}
\noindent{\bf Twin Algebras.}  
We define twin algebras to be condensable algebras, which have the same anyon decomposition $\oplus_i n_i a_i$ but a different multiplication $m$. 
Lagrangian algebras with this property are referred to as twin Lagrangian algebras.  
Twin algebras give rise to extremely interesting phases and phase transitions, which will be explored, both here and in the companion paper \cite{WGS}. 

However, 
in general it is hard to find or even classify condensable algebras in modular tensor categories. Even for group-theoretical fusion categories, where the SymTFT has defects in $\cZ (\Vec_G^\omega)$, we found that there are some loopholes in the literature which we will close before embarking on a general study of twin algebras.

\vspace{2mm}
\noindent{\bf Group-theoretical Symmetries.} 
We focus on group-theoretical symmetries 
which can be obtained by gauging (Morita dual\footnote{See \Cref{append:alg_Morita_equiv} for the definition of Morita dual and the notation $\cC(G, \omega, K, \beta):=(\Vec_G^\omega)_{\cM(K, \beta)}^*$. }) from group symmetries $G$ with anomaly $\omega \in H^3 (G, U(1))$, i.e. $\Vec_G^\omega$.   Physically, these are symmetries obtained from $\Vec_G^\omega$ by stacking a $K$-SPT $\beta$ and gauging a subgroup $K$ of $G$. 

As the Drinfeld center is invariant under gauging the symmetry, luckily all these categories have the same SymTFT: 
\be 
\cZ(\cC(G, \omega, K, \beta))\cong \cZ(\Vec_G^\omega) \,.
\ee
To study symmetires and phases, we will be interested in the condensable algebras in $\cZ(\Vec_G^\omega)$. These are classified and labeled by 
\be \label{AHNGE}
A(H, N, \gamma, \eps) \,,
\ee 
where $N\triangleleft H \subset G$ are subgroups, $\gamma$ trivializes $\omega$ on $N$, and together with $\eps$, they define $\alpha\in H^3(H/N, U(1))$ that pulls back to $\omega$ on $H$ \cite{Davydov2009ModularIF, davydov2017lagrangian, Hannah:2023tae}.
These give rise to interfaces from $\cZ (\Vec_G^\omega)$ to the reduced TO
\be
\label{ReducedTO}
\cZ(\Vec_{H/N}^\alpha)
\ee
where $\alpha$ depends on $\omega$ and the SPT data of the algebra and is defined in \eqref{eqn:twist_reduced_TO}.
Very broadly speaking, in terms of  $\Vec_G^\omega$-symmetric phases, the ones corresponding to $A(H, N, \gamma, \eps)$ have preserved symmetry $H$, $N$ labels the support of the twisted sectors, $\gamma$ modifies the multiplication of (twisted) operators and $\eps$ modifies the symmetry action on twisted operators. The Lagrangian algebras are singled out by $H=N$ and $\epsilon$ is fixed in terms of $\gamma$. We will denote Lagrangian algebras by 
\be
\cL (H, \gamma) := A (H, H, \gamma, \epsilon(\gamma)) \,.
\ee
These are the gapped boundary conditions, obtained by stacking $\gamma$ and gauging $H$ on the Dirichlet boundary that realizes $\Vec_G^\omega$, i.e. $\cL(1,1)$.

However, this classification has redundancies. We identify the isomorphic one in \Cref{sec:IsomAlg}. This is important for physical applications because, we are interested in distinct, inequivalent symmetric gapped phases.

\vspace{2mm}
\noindent{\bf Twins for Group-theoretical Symmetries.}
With this precise (and free of redundancies) classification at hand, we can then explore the landscape of algebras systematically. In particular we are able to identify criteria for when algebras are twin in terms of the classification data (\ref{AHNGE}).

We find twin algebras in $\cZ(\Vec_G^\omega)$ of different types, which can be distinguished by the data in (\ref{AHNGE}) and will be referred to as $H$-, $N$-, $\gamma$- and $\eps$-type twin algebras, which differ in these entries of the algebra classification datum. Their anyon decomposition is the same, but the algebra structure differs. 

Twin algebras in $\cZ(\Vec_G^\omega)$ are closely related to  {\bf Gassmann triples} \cite{Sunada1985RiemannianCA, Gassmann:1926}, which we  review  in \Cref{sec:Gassman}: let $H_i$ be subgroups of $G$. 
$(G, H_1, H_2)$ form a Gassmann triple if 
\be 
|H_1 \cap [g]|=|H_2\cap [g]|
\ee
for any conjugacy class $[g]$ of $G$. Note that these can also be trivial if $H_1=gH_2g^{-1}$ for some $g\in G$.
For twins in $\cZ(\Vec_G^\omega)$, we show that their $H$ and $N$ subgroups have to form Gassmann triples of $G$.

Twin algebras can be distinguished by the different algebra structure on the same set of anyons, however, we have found other interesting imprints of them. E.g.  found non-maximal (i.e. non-Lagrangian) twin algebras of $H$-type can lead to inequivalent reduced TOs (\ref{ReducedTO}) and give examples of these in \Cref{sec:example_Gassmann_Triple_different_reduced_TOs}.

The $\gamma$-type twin algebras include the ones obtained from Bogomolov multiplier \cite{Pollmann:2012, Davydov:2013xov, Kobayashi:2025ykb, Kobayashi:2025pxs}. We also found $\gamma$-type twins beyond Bogomolov multipliers. 
The $\eps$-type twin algebras define twin phases that can be distinguished using generalized string order parameters.

\vspace{2mm}
\noindent{\bf Physical Implications.}
The physical content of twin algebras is developed in \Cref{sec:TwinPhys} and the compatnion paper \cite{WGS}. Twin Lagrangian algebras $\cL_1$ and $\cL_2$ in the
SymTFT $\cZ(\mathcal{S})$ define \textbf{twin gapped phases}
$\mathcal{P}^{\mathcal{S}}_{\cL_1},\mathcal{P}^{\mathcal{S}}_{\cL_2}$ via
interval compactification with a fixed symmetry boundary. Although these
phases share the same generalized-charges, they are physically
inequivalent: their algebras of order parameters differ. We make this
precise by recovering the (twisted) local-operator algebra of a phase
$\mathcal{P}^{\mathcal{S}}_{\cL^\phys}$ directly from the
condensable algebra $\cL^{\phys}$ via the forgetful functor
\be
F_{{\mathcal{S}}}:\quad \cZ(\mathcal{S})\to\mathcal{S} \,,
\ee
with multiplication
inherited from the comultiplication on $\cL^{\phys}$. Twin algebras,
by definition, agree as objects and disagree precisely on this
multiplication. The main consequence is a sharp dichotomy between twin and non-twin Lagrangians: 
two phases exhibit \textbf{no relative
spontaneous symmetry breaking (SSB)}:
for every fixed choice of symmetry boundary (i.e. for all symmetries related by gauging, or Morita equivalent symmetries), the  twin phases do not have relative SSBing, i.e. they have the same number of vacua (of course the number of vacua depends on the choice of symmetry). Twin gapped phases are therefore the natural candidates for direct transitions 
without hidden symmetry breaking, and, equipped with an anomaly, for intrinsically beyond Landau transitions. 
concrete realizations are the
subject of \Cref{sec:G32-43_examples} and the companion paper \cite{WGS}. These provide new examples of DQCPs \cite{Senthil:2023vqd} in 1+1d, which remain DQCPs for any Morita equivalent symmetry.

\section{Condensable Algebras in $\cZ(\Vec_G^\omega)$}
\label{sec:algebras}

Let $G$ be any finite group and $\omega\in H^3(G, U(1))$.
A condensable algebra in the Drinfeld center $\cZ(\Vec_G^\omega)$ is classified in \cite{davydov2017lagrangian, Hannah:2023tae}, which, however, contains some redundancies, which we will identify in \Cref{sec:IsomAlg}.    
Physically, such equivalence classes characterize all possible inequivalent interfaces from $\cZ(\Vec_G^\omega)$ topological order (TO) to a reduced TO. 
For physical applications, where the topological defects in $\cZ (\Vec_G^\omega)$ play the role of generalized charges, it is useful to have a decomposition of the algebras in terms of anyons. 
In \Cref{sec:AnyonDecomp} we provide a derivation of the anyon decomposition for any condensable algebra in $\cZ (\Vec_G^\omega)$. However, inequivalent algebras can have the same anyon decomposition, meaning they are isomorphic as objects. Such pairs of algebras will be called twin algebras, and are defined in \Cref{sec:AnyonDecomp}. Finally we discuss the partial order (Hasse diagram) on condensable algebras in \Cref{sec:Hasse} and the interfaces and reduced TOs in \Cref{sec:Fold}.

\subsection{Classification of Condensable Algebras} \label{sec:CondAlg_class}

The condensable algebras and their reduced TOs (whose definitions are recalled in \Cref{append:math-details})  
in $\cZ(\Vec_G^\omega)$ were previously discussed in \cite{Davydov2009ModularIF, davydov2017lagrangian, Hannah:2023tae}. 
Here, we review this classification in \Cref{sec:review-algebras}. We then determine the  isomorphism classes of condensable algebras in Lemma~\ref{lem:Morita_equiv} (with proof provided in Appendix~\ref{append:alg_Morita_equiv}).

\subsubsection{Condensable Algebras}
\label{sec:review-algebras}

The following description of condensable algebras in the TO $\cZ(\Vec_G^\omega)$ and their corresponding reduced TO were proven in \cite{davydov2017lagrangian}:

\begin{theorem}[\cite{davydov2017lagrangian}, Thm 3.15]
\label{thm:DS-cond-alg}
    Let $G$ be a finite group and $\omega\in H^3(G, U(1))$.
    A condensable algebra in $\cZ(\Vec_G^\omega)$ is determined by:
    \begin{itemize}
    	\item $H\subseteq G$ is a subgroup;
    	\item $N \triangleleft H$ is a normal subgroup;
    	\item $\gamma: N\times N \rightarrow U(1)$ such that $d\gamma = \omega|_{N\times N \times N}$;
    	\item $\epsilon: H\times N \rightarrow U(1)$ satisfying compatibility conditions (see \Cref{append:math-details}),
    \end{itemize}
    denote the corresponding algebra by $A(H, N, \gamma, \epsilon)$.
\end{theorem}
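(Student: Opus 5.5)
The plan is to translate condensable algebras in $\cZ(\Vec_G^\omega)$ into module-theoretic data over $\Vec_G^\omega$, and then classify those data by group cohomology. The first step uses the standard equivalence $\cZ(\Vec_G^\omega)\simeq {}_{\Vec_G^\omega}\Bimod(\Vec_G^\omega)$ together with the Davydov--M\"uger--Nikshych--Ostrik correspondence. Under it, a connected étale algebra $A\in\cZ(\cC)$ with $\cC=\Vec_G^\omega$ is sent to its image $F(A)$ under the forgetful functor $F:\cZ(\cC)\to\cC$. This image is a connected (haploid in the center sense), indecomposable separable algebra in $\cC$. It carries a half-braiding that makes it commutative in $\cZ(\cC)$.

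The second step is to identify this image, up to Morita-type data, with the Ostrik classification of indecomposable algebras in $\Vec_G^\omega$. Since $A$ is connected in $\cZ(\cC)$ but $F(A)$ need not be connected in $\cC$, I would instead work with the category $\cC_A$ of $A$-modules in $\cC$. This is an indecomposable $\cC$-module category, so by Ostrik it is $\cM(H,\psi)$ for a subgroup $H\subseteq G$ and a 2-cochain $\psi$ with $d\psi=\omega|_H$.

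The main obstacle is the third step, which has to explain where the extra data $N$ and $\epsilon$ come from. My approach is to look at the support of $F(A)$. It is a union of $G$-conjugates, and the component of $A$ supported over the identity coset recovers $H$. The subgroup $N\subseteq H$ should be exactly the set of degrees $n$ with $F(A)_n\neq 0$ in the relevant fibre, i.e.\ the magnetic fluxes that condense. Commutativity with respect to the half-braiding forces $N$ to be closed under $H$-conjugation, hence normal in $H$. The multiplication restricted to the graded pieces over $N$ is a twisted group algebra, so it gives a 2-cochain $\gamma$ on $N$. Associativity in $\Vec_G^\omega$ forces $d\gamma=\omega|_{N\times N\times N}$. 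The half-braiding of $A$ with the simple objects $\delta_h$, $h\in H$, acts on the component over $n\in N$ by a scalar $\epsilon(h,n)$. The compatibility of half-braiding with tensor product and with multiplication then yields the cocycle-type conditions on $\epsilon$ listed in the appendix.

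Concretely, I would reduce to $G$-equivariant data by writing
\[A\;\cong\;\mathrm{Ind}_H^G\big(\bC_{\gamma}[N]\big),\]
with $H$-equivariant structure twisted by $\epsilon$. From this presentation I would check directly that unit, multiplication, commutativity and separability correspond exactly to the stated conditions. Connectedness is $\dim\Hom(\mathbf 1,A)=1$, which holds because the $H$-invariants of the trivial-flux component form a one-dimensional space. Conversely, given any $(H,N,\gamma,\epsilon)$, I would build the induced algebra and verify it is étale. This gives surjectivity.

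The delicate part is showing that every connected étale $A$ really arises from such an induced structure. That requires a Frobenius-reciprocity/Mackey argument identifying the stabilizer of the trivial-flux fibre with $H$. It also requires checking that choices of coset representatives only change $(\gamma,\epsilon)$ by coboundaries. I expect this to be the hardest step, and I would follow the treatment in \cite{davydov2017lagrangian}.
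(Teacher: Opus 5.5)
Your end point, $A\cong\Ind_H^G(\bC_\gamma[N])$ with the $H$-action twisted by $\epsilon$, is the correct one. It is exactly the description $A(H,N,\gamma,\epsilon)=E_H(\bC[N,\gamma,\epsilon])$ that the paper recalls from \cite{davydov2017lagrangian} in Appendix~\ref{append:partialorder}; the paper gives no proof beyond that citation. However, the route you propose for getting there fails at steps 1--2. For a connected \'etale $A$, the algebra $F(A)$ is in general not indecomposable in $\cC=\Vec_G^\omega$, and $\cC_{F(A)}$ is not an indecomposable $\cC$-module category. The idempotents $v_{r,1}$, $r\in G/H$, are central in $F(A)$. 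So $F(A)$ is a product of $[G:H]$ twisted group algebras on the conjugates ${}^rN$, and $\cC_{F(A)}\simeq\bigoplus_{r\in G/H}\cM({}^rN,\gamma_r)$ for suitably transported cocycles $\gamma_r$. For the Dirichlet algebra $\cL(1,1)$ this gives $F(A)\cong\mathbf 1^{\oplus|G|}$ and $\cC_{F(A)}\simeq\cC^{\oplus|G|}$. Even when $H=G$, Ostrik's theorem returns $(N,\gamma)$ rather than $H$: the unit algebra $A(G,1,1,1)$ gives $\cM(1,1)$. Likewise the support of $F(A)$ is $\bigcup_r{}^rN$ and cannot detect $H$, since every $A(H,1,1,1)$ is concentrated in degree $1$. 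So $H$ is invisible both to $\cC_{F(A)}$ and to the grading. It is encoded only in how the $G$-action permutes the summands. That is precisely the step you defer as the ``delicate part'', so as written the proposal does not show that every condensable algebra has the stated form.

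The missing idea is to start from the trivial-degree component $A_1$ instead of from Ostrik's classification.
\begin{enumerate}
\item Commutativity in $\cZ(\Vec_G^\omega)$ reads $ab=(|a|\cdot b)\,a$ up to $\omega$-phases, so $A_1$ is central in $F(A)$.
\item $A_1$ is a commutative semisimple algebra (it is $\End_{\cC_{F(A)}}(F(A))$) with a $G$-action. Connectedness, $\dim(A_1)^G=1$, forces $A_1\cong\bC(G/H)$ with transitive action.
\item The minimal idempotents $e_x$ of $A_1$ are central and permuted by $G$. Hence $F(A)=\prod_x e_xA$ and $A\cong\Ind_H^G(e_{x_0}A)$, with $H$ the stabilizer of $x_0$.
\item Applying commutativity to $b\in e_{x_0}A$ of degree $g$ gives $b\in e_{gx_0}A$. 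So the fibre $B=e_{x_0}A$ is supported in $H$, and it is a separable algebra in $\cZ(\Vec_H^{\omega|_H})$ with $\dim B_1=1$.
\item Nondegeneracy of the Frobenius pairing makes every nonzero homogeneous element of $B$ invertible. Hence $B\cong\bC_\gamma[N]$ with $N$ its support.
\item $N\triangleleft H$ because $H$ preserves $B$ and acts on degrees by conjugation; this comes from equivariance, not commutativity.
\item The phase $\epsilon$ records this $H$-action, and \eqref{eps_from_gamma} is the commutativity of $B$.
\end{enumerate}
No Mackey argument is needed, and coset representatives play no role. The residual freedom is rescaling the basis of each $B_n$ (Lemma~\ref{lem:DS-isom-gamma-epsilon}) and the choice of $x_0$ (Lemma~\ref{lem:Morita_equiv}).
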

We refer to $\gamma$ as the \emph{SPT} and $\eps$ the \emph{$H$-action phase}. These algebras give rise to interfaces between $\cZ (\Vec_G^\omega)$ and a reduced TO which is determined as follows: 
\begin{theorem}[\cite{davydov2017lagrangian}, Thm 3.16]
\label{thm:DS-reducedTO}
    The reduced TO corresponding to $A(H,N,\gamma,\epsilon)$ in $\cZ(\Vec_{G}^{\omega})$ is
    \be \label{eq:RedTO}
\cZ(\Vec_{G}^{\omega})_{A(H,N,\gamma,\epsilon)}^{\text{loc}}\cong \cZ(\Vec_{H/N}^{\alpha}),
    \ee
    where $\alpha \in H^3(H/N, U(1))$ is determined by $\omega,\gamma,\epsilon$, with an explicit expression given by formula \eqref{eqn:twist_reduced_TO}.
\end{theorem}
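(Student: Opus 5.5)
We will prove Theorem~\ref{thm:DS-reducedTO} by computing the category of local modules in two stages, passing through the fusion-categorical description of $\cZ(\Vec_G^\omega)$. The basic tool is the standard equivalence for a condensable algebra $A$ in a Drinfeld center $\cZ(\cC)$: the category of local modules satisfies $\cZ(\cC)^{\text{loc}}_A\cong \cZ(\cC_A)$ whenever $A$ arises as the full centre of an algebra in $\cC$, or more generally whenever $\cZ(\cC)_A$ is identified with a suitable module category. Concretely, $\cZ(\cC)_A\simeq \cZ_{\cD}(\cC)$ is a relative centre, and its local part is the Drinfeld centre of the ``condensed'' fusion category. Accordingly, we split the condensation $A(H,N,\gamma,\epsilon)$ into two steps: first restrict the symmetry from $G$ to $H$, then condense the $N$-part twisted by $\gamma$ and $\epsilon$.

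\textbf{Step 1 (restriction to $H$).} Consider the algebra $A(H,1,1,1)$, the function algebra $\bC[G/H]$ with its canonical half-braiding. We claim that
\be
\cZ(\Vec_G^\omega)^{\text{loc}}_{A(H,1,1,1)}\cong \cZ(\Vec_H^{\omega|_H}).
\ee
To see this, note that modules over $\bC[G/H]$ in $\cZ(\Vec_G^\omega)$ are $G$-equivariant bundles over $G\times G/H$. By induction, these are the same as $H$-equivariant objects over $G$, i.e.\ $\cZ_{\Vec_H^{\omega}}(\Vec_G^\omega)$. Locality forces the support to lie in $H$, and we recover the $H$-equivariant $\omega|_H$-twisted bundles over $H$.

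\textbf{Step 2 (condensing $N$ in $\cZ(\Vec_H^{\omega|_H})$).} The data $(N,\gamma,\epsilon)$ now define a condensable algebra $B$ in $\cZ(\Vec_H^{\omega|_H})$. Its underlying object is $\bigoplus_{n\in N}\delta_n$, with multiplication twisted by $\gamma$ and half-braiding/equivariance twisted by $\epsilon$. Moreover, $A(H,N,\gamma,\epsilon)$ is the image of $B$ under the induction functor, so transitivity of local modules, $(\cC^{\text{loc}}_{A_1})^{\text{loc}}_{B}\cong\cC^{\text{loc}}_{A}$, reduces the problem to $B$. The key observation is that $N\triangleleft H$ together with $\gamma$ (with $d\gamma=\omega|_N$) makes $N$ a normal subgroup on which $\omega|_H$ is trivialized. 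The $H$-conjugation action on $(N,\gamma)$ is then corrected by $\epsilon$, so that $\epsilon$ exhibits $\Vec_N^{\omega}\subset\Vec_H^\omega$ as an $H$-equivariantly trivialized, i.e.\ central-type, subcategory. In this situation $\Vec_H^{\omega}$ is an $H/N$-graded extension of $\Vec_N$, and de-equivariantization (equivalently, gauging $N$) yields a braided equivalence
\be
\cZ(\Vec_H^{\omega})^{\text{loc}}_{B}\cong \cZ(\Vec_{H/N}^{\alpha}),
\ee
where $\alpha$ is the obstruction class of the extension. The 3-cocycle $\alpha$ is computed explicitly: choose a section $s:H/N\to H$ and write $s(x)s(y)=n_{x,y}s(xy)$. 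Then $\alpha(x,y,z)$ is $\omega(s(x),s(y),s(z))$ corrected by terms $\gamma(n_{x,y},n_{xy,z})/\gamma(\dots)$ and $\epsilon(s(x),n_{y,z})$. This is the content of \eqref{eqn:twist_reduced_TO}, and the cocycle condition follows from $d\gamma=\omega|_N$ together with the compatibility conditions on $\epsilon$.

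\textbf{Main obstacle.} We expect the hardest part to be Step 2: showing that the local modules really form a Drinfeld centre, rather than merely some modular category of the right dimension, and pinning down $\alpha$ at cochain level. As a first attempt, we would verify the dimension count $\dim \cZ(\Vec_G^\omega)/\dim(A)^2=|H/N|^2$, using $\dim A=|G|\,|N|/|H|$. Then we would exhibit an explicit Lagrangian algebra in the local module category (the image of $\cL(H,\gamma)$), which shows the result is a Drinfeld centre $\cZ(\cD)$. Finally, we would identify $\cD$ as the category of $\cL$-modules, which is pointed with group $H/N$ and associator $\alpha$, by computing its associator directly from the $\gamma,\epsilon$-twisted module structures.
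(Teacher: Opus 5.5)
The paper gives no proof of this statement; it quotes it from \cite{davydov2017lagrangian}, where $A(H,N,\gamma,\eps)$ is built as $E_H(\bC[N,\gamma,\eps])$, as recalled in the appendix. Your Step 1 and your reduction fit that set-up and are fine. $A(H,1,1,1)\preceq A(H,N,\gamma,\eps)$, a commutative algebra is automatically a local module over any subalgebra, and $E_H$ realizes the forgetful functor from $\cZ(\Vec_G^\omega)^{\text{loc}}_{A(H,1,1,1)}\simeq\cZ(\Vec_H^{\omega|_H})$ to $\cZ(\Vec_G^\omega)$. So everything does reduce to local modules over $B=\bC[N,\gamma,\eps]$.

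The gap is Step 2, which carries all the content, and the mechanism you invoke there does not work.
\begin{itemize}
\item Gauging $N$ in $\Vec_H^{\omega}$ produces a Morita equivalent category. Its centre is again $\cZ(\Vec_H^{\omega})$, of dimension $|H|^2$, not $|H/N|^2$.
\item De-equivariantization requires $B$ to be the function algebra of a Tannakian $\Rep(K)$ inside the centre. But $B$ is supported on fluxes in $N$, and for nonabelian $N$ its summands need not even span a fusion subcategory. For example, take $H=N=S_3$, so $B=\bigoplus_{[n]}([n],1)$.
\item Most decisively, $\alpha$ cannot be ``the obstruction class of the extension $\Vec_N^{\omega|_N}\subset\Vec_H^{\omega}$''. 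That extension does not know $(\gamma,\eps)$, while $\alpha$ does.
\end{itemize}
For a concrete instance of the last point, take $G=H=\Z_4$, $N=\Z_2$, $\omega=1$, $\gamma=1$. Then $\eps(\cdot,2)=\chi$ is a character of $\Z_4$ with $\chi(2)=1$. For $\chi=1$ you get $A=(0,1)\oplus(2,1)$, whose reduced TO is $\cZ(\Vec_{\Z_2})$. For $\chi(1)=-1$ you get $A=(0,1)\oplus(2,\chi)$, and \eqref{eqn:twist_reduced_TO} gives $\alpha(1,1,1)=\eps(1,2)=-1$, which is the double semion. The extension is the same in both cases, but $\alpha$ differs.

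Consequently your formula for $\alpha$ (``$\omega(s(x),s(y),s(z))$ corrected by $\gamma$- and $\eps$-terms'') only guesses its shape; it is not derived. The statement asserts the specific cocycle \eqref{eqn:twist_reduced_TO}, so the proof must produce it, at least up to coboundary.

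Your fallback plan does not close this gap either. $\cL(H,\gamma)$ is undefined in general: $\gamma$ lives only on $N$, and $\omega|_H$ need not be trivializable. For example, $A^\omega_3=A(\Z_2^3,1,1,1)$ in $\cZ(\Vec^\omega_{G_{32}})$ has reduced TO $\cZ(\Vec^{\omega_{\text{III}}}_{\Z_2^3})$. The Lagrangian that always contains $A(H,N,\gamma,\eps)$ is $\cL(N,\gamma)$. Even with it, showing that $\mathcal{D}$ is pointed with group $H/N$ and computing its associator is the entire theorem, and your plan leaves it undone.

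The tool that does work is the one you half-stated at the start, minus the full-centre hypothesis (full centres are Lagrangian, which is the opposite extreme). For any connected \'etale $B\in\cZ(\cC)$ one has $\cZ(\cC)^{\text{loc}}_B\simeq\cZ(\cC_B)$. Here $\cC_B$ is the category of right modules over the image of $B$ in $\cC$, with $\otimes_B$ defined through the half-braiding \cite{davydov2013witt}. Apply this with $\cC=\Vec_H^{\omega|_H}$:
\begin{itemize}
\item the image of $B$ is $\bC_\gamma[N]$;
\item the simple modules are the free modules $\bC_{s(x)}\otimes B$, $x\in H/N$, all invertible;
\item hence $\cC_B\simeq\Vec_{H/N}^{\alpha}$.
\end{itemize}
Transporting the associator through the free-module functor then gives an explicit cocycle. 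The $\omega$ factors come from $\Vec_H^\omega$, the $\gamma$ factors from multiplying the $e_n$, and the $\eps$ factors from moving $B$ past $\bC_{s(y)}$. You then still have to match the result with \eqref{eqn:twist_reduced_TO}.
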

A special case are Lagrangian algebras, which are condensable algebras of maximal dimension (see \Cref{append:math-details} for definition) \cite{davydov2013witt}, which give rise to gapped boundary conditions: 
\begin{corollary}[Lagrangian algebras]
    $A(H,N, \gamma, \epsilon)$ is a Lagrangian algebra if and only if $H=N$. In this case $\epsilon$ is uniquely determined by $\gamma$ (see App~\ref{append:math-details}), hence denote a Lagrangian algebra by $\cL(H,\gamma)$.
\end{corollary}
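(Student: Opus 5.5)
The plan is to read off the Lagrangian condition from the reduced topological order of Theorem~\ref{thm:DS-reducedTO}, then use the compatibility conditions on $\epsilon$ to show it is pinned down by $\gamma$. The key input is the standard fact \cite{davydov2013witt} that a condensable algebra $A$ in a non-degenerate braided fusion category $\cC$ is Lagrangian if and only if its category of local modules is trivial, $\cC_A^{\text{loc}}\simeq\Vec$. Equivalently, one can use the dimension formula
\be
\dim(\cC_A^{\text{loc}}) = \frac{\dim \cC}{(\dim A)^2},
\ee
so that $\dim A=\sqrt{\dim\cC}$ exactly when $\dim(\cC_A^{\text{loc}})=1$.

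\textbf{Step 1: the Lagrangian condition.} By Theorem~\ref{thm:DS-reducedTO} we have $\cZ(\Vec_G^\omega)^{\text{loc}}_{A(H,N,\gamma,\epsilon)}\cong\cZ(\Vec_{H/N}^\alpha)$, whose Frobenius--Perron dimension is $|H/N|^2$. This is $1$ precisely when $H=N$, which gives the equivalence claimed in the statement.

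As a cross-check, I would compute $\dim A(H,N,\gamma,\epsilon)$ directly. The underlying object of the algebra is the function algebra on $G\times_H N$, where $H$ acts on $G\times N$ by $h\cdot(g,n)=(gh^{-1},hnh^{-1})$, twisted by the cocycle data. Its dimension is $[G:H]\,|N|$. Comparing with $\sqrt{\dim\cZ(\Vec_G^\omega)}=|G|$ gives $[G:H]\,|N|=|G|$, i.e.\ $|N|=|H|$, and hence $N=H$ because $N\subseteq H$. This recovers Step 1 independently of $\alpha$.

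\textbf{Step 2: $\epsilon$ is determined by $\gamma$ when $H=N$.} Here I would use the compatibility conditions of \Cref{append:math-details}. These presumably say that $\epsilon(h,-)$ intertwines $\gamma$ with its conjugate ${}^h\gamma$ up to the canonical $\omega$-dependent cochain, in the form
\be
\gamma(hnh^{-1},hmh^{-1})\,\gamma(n,m)^{-1}\sim \frac{\epsilon(h,n)\,\epsilon(h,m)}{\epsilon(h,nm)}\times(\omega\text{-terms}),
\ee
together with a normalization that ties $\epsilon$ on $N\times N$ to $\gamma$. When $N=H$, every $h$ lies in $N$, so this normalization fixes $\epsilon$ completely, typically as
\[
\epsilon(h,n)=\frac{\gamma(h,n)}{\gamma(hnh^{-1},h)}
\]
up to $\omega$-factors. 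This yields the uniqueness claim, and justifies the notation $\cL(H,\gamma)$.

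\textbf{Main obstacle.} I expect the main difficulty to be Step 2: writing out the appendix compatibility conditions precisely and checking that, for $h\in N$, the condition genuinely forces the formula for $\epsilon$ rather than only constraining it up to a character of $H/N$. That residual freedom is exactly the extra data one has when $N\neq H$, and it disappears when $H/N$ is trivial.
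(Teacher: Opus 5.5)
Your proposal is correct and follows essentially the same route as the paper, which derives the corollary from Theorem~\ref{thm:DS-reducedTO} (the reduced TO $\cZ(\Vec_{H/N}^\alpha)$ is trivial iff $H=N$, equivalently $\dim A(H,N,\gamma,\epsilon)=|G||N|/|H|$ equals $|G|$ iff $N=H$) together with the compatibility condition \eqref{eps_from_gamma}. That condition reads exactly $\epsilon(n_1,n_2)=\gamma(n_1,n_2)/\gamma({}^{n_1}n_2,n_1)$, with no extra $\omega$-factors, so when $H=N$ it fixes $\epsilon$ on all of $H\times N$ directly and the residual freedom you were worried about does not arise.
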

An example of Lagrangian algebras are  
$A(1,1)$, which realizes the $\Vec_G$ symmetry, also known as the Dirichlet boundary. 
For trivial $\omega$, there is also the boundary $\cL(G,1)$ obtained by gauging $G$, which defines the associated Neumann $\Rep(G)$ symmetry boundary. 
More generally $\cL (H, \gamma)$  is the gapped boundary condition, obtained from the $\Vec_G$ Dirichlet boundary after stacking with $\gamma$ and gauging $H$.

The  {\bf algebra structure} of $A(H,N,\gamma,\eps)$ in $\cZ(\Vect_G^\omega)$, whose objects can be modeled as $G$-graded vector spaces equipped with $G$-actions, is made explicit in \cite{Davydov2009ModularIF, Hannah:2023tae}. We summarize the algebra structure as follows:
As a vector space over $\bC$, $A(H,N,\gamma,\epsilon)$ in $\cZ(\Vec_G^\omega)$ is spanned by $v_{g,n}$, $g\in G, n\in N$,
subject to the relation
\be \label{eq:v_equiv_rel}
    v_{gh,n} = \frac{\epsilon(h,n)}{\omega(g,h|n)}v_{g,hnh^{-1}}, \quad \forall h\in H,
\ee
where $\omega(g,h|n)\in U(1)$ is the transgression of $\omega$, whose explicit formula can be found in \eqref{eqn:alpha_proj_action}. 
Fixing a set $R$ of representatives for $G/H$ fixes a basis 
\be
\label{eqn:basiselm}
    B=\{v_{r,n} \,|\, r\in R, n\in N\}\,.
\ee
which implies that the dimension of the condensable algebra $A(H,N,\gamma,\epsilon)$ in $\cZ(\Vec_G^\omega)$ is
\be
	\dim(A(H,N,\gamma,\eps))=\frac{|G|\times|N|}{|H|}\,.
\ee 
The $G$-grading is 
\be
    |v_{g,n}| = gng^{-1},
\ee
and the $G$-action is
\be
\label{eqn:G-action_algebra_basis}
    g'\cdot v_{g,n} = \omega( g',g|n )\, v_{g'g,n},
\ee
where $g,g'\in G$ and $n,n'\in N$. Multiplication is defined on the spanning set by
\be
\label{eqn:cond_alg_multiplication}
    v_{g,n}v_{g',n'} = \delta_{gH,g'H} \frac{\gamma(n,n')}{\Omega_g(n,n')} v_{g,nn'}\,,
\ee
here, $\Omega_g:N \times N \rightarrow U(1)$ trivializes\footnote{That is, $d\Omega_g=\omega/\omega^{g}$. We introduce notations: ${}^gx:=gxg^{-1}$ and $f^g(x_1, \cdots, x_n):=f({}^gx_1, \cdots, {}^gx_n)$ for $n\in \mathbb{N}$, group elements $g,x,x_i\in G$, $i\in \{1, \cdots, n\}$ and function $f$ on $G^n$.} $\omega/\omega^{g}$, whose explicit formula in terms of $\omega$ can be found in \eqref{eq:Omega_f}. 

The dimension of $A(H,N,\gamma,\epsilon)$ is $|G||N|/|H|$.
    For a fixed $R$, $\{v_{r,1} \,|\, r\in R\}$ forms a complete system of orthogonal idempotents in $A(H, N, \gamma, \epsilon)$.

However, this description contains redundancies. In applications, identifying non-isomorphic algebras is key, e.g., when classifying distinct phases of matter. 
Some of the redundancies were pointed out in \cite{davydov2017lagrangian} as follows:
\begin{lemma}[\cite{davydov2017lagrangian}, Lemma 3.12]
\label{lem:DS-isom-gamma-epsilon}
    Given two condensable algebras $A(H,N, \gamma,\epsilon)$ and $A(H,N,\gamma',\epsilon')$ in $\cZ(\Vec_G^\omega)$, if there exists $c:N \rightarrow U(1)$ such that
    \begin{align}
        \gamma'(n_1,n_2) &= \frac{c(n_1 n_2)}{c(n_1)c(n_2)} \gamma(n_1,n_2), \label{eqn:isom_alg_gamma}\\
        \epsilon'(h,n) & = \epsilon(h,n)\frac{c({}^{h}n)}{c(n)}, \label{eqn:isom_alg_epsilon}
    \end{align}
    for any $n_1,n_2,n\in N$, $h\in H$ and ${}^{h}n:=hnh^{-1}$,
    then $A(H,N, \gamma,\epsilon)$ and $A(H,N,\gamma',\epsilon')$ are isomorphic algebras.
\end{lemma}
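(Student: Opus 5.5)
The plan is to write down an explicit isomorphism on the spanning sets. I would define $\phi: A(H,N,\gamma,\epsilon)\to A(H,N,\gamma',\epsilon')$ by
\be
\phi(v_{g,n}) = c(n)^{-1}\, v'_{g,n}, \qquad g\in G,\ n\in N,
\ee
where $v'_{g,n}$ denotes the spanning vectors of the second algebra. The exponent of $c$ (here $-1$) will be fixed by checking the relations; if the convention comes out the other way, I would replace it by $c(n)$. I would then verify four things: that $\phi$ is well defined, that it is a morphism in $\cZ(\Vec_G^\omega)$, that it is multiplicative and unital, and that it is bijective.

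\textbf{Step 1: well-definedness.} The spanning set is subject to relation \eqref{eq:v_equiv_rel}, so $\phi$ must respect it. Apply $\phi$ to $v_{gh,n}=\frac{\epsilon(h,n)}{\omega(g,h|n)}v_{g,{}^hn}$. The left side becomes $c(n)^{-1}v'_{gh,n}$. Using the primed relation, this equals $c(n)^{-1}\frac{\epsilon'(h,n)}{\omega(g,h|n)}v'_{g,{}^hn}$. The right side becomes $\frac{\epsilon(h,n)}{\omega(g,h|n)}c({}^hn)^{-1}v'_{g,{}^hn}$. These agree exactly when $\epsilon'(h,n)=\epsilon(h,n)\,c(n)/c({}^hn)$.

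Compare this with \eqref{eqn:isom_alg_epsilon}. If the sign convention turns out opposite, I would instead take $\phi(v_{g,n})=c(n)v'_{g,n}$. With that choice, well-definedness needs $\epsilon'(h,n)=\epsilon(h,n)\,c({}^hn)/c(n)$, which is precisely \eqref{eqn:isom_alg_epsilon}. I would then check that multiplicativity with this choice matches \eqref{eqn:isom_alg_gamma} as written, rather than its inverse.

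I expect this consistency of conventions between the two conditions to be the main obstacle. The remedy, if needed, is to replace $c$ by $c^{-1}$ consistently in both conditions, which is harmless since the hypothesis is symmetric under this substitution.

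\textbf{Step 2: morphism in the center.} The grading $|v_{g,n}|={}^gn$ is independent of $\gamma,\epsilon$, so $\phi$ preserves it. The $G$-action \eqref{eqn:G-action_algebra_basis} multiplies by $\omega(g',g|n)$ and sends $v_{g,n}$ to $v_{g'g,n}$, keeping the label $n$ fixed. Since $\phi$ only rescales by a function of $n$, it commutes with this action.

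\textbf{Step 3: multiplicativity, unit and bijectivity.} By \eqref{eqn:cond_alg_multiplication}, the product $v_{g,n}v_{g',n'}$ is nonzero only when $gH=g'H$. By the relation from Step 1, we may rewrite both factors with the same representative $g$, so it suffices to compare $\phi(v_{g,n})\phi(v_{g,n'})$ with $\phi(v_{g,nn'})$. The factor $\Omega_g$ is common to both algebras. The required identity is therefore $\gamma'(n,n')\,c(n)c(n')=\gamma(n,n')\,c(nn')$ (with the chosen orientation of $c$), which is \eqref{eqn:isom_alg_gamma}.

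For the unit, normalize $c(1)=1$. This is possible since normalized cochains satisfy $\gamma(1,1)=\gamma'(1,1)=1$, and then \eqref{eqn:isom_alg_gamma} forces $c(1)=1$. Hence the idempotents $v_{r,1}$ are preserved, and so is the unit $\sum_r v_{r,1}$.

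Finally, $\phi$ maps the basis $B$ of \eqref{eqn:basiselm} to nonzero multiples of the corresponding basis of the primed algebra, so it is an isomorphism of algebras in $\cZ(\Vec_G^\omega)$.
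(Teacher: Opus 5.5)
Your proposal is correct. If you take $\phi(v_{g,n})=c(n)\,v'_{g,n}$ from the start, then:
\begin{itemize}
\item compatibility with \eqref{eq:v_equiv_rel} is exactly \eqref{eqn:isom_alg_epsilon};
\item multiplicativity on the basis $B$ of \eqref{eqn:basiselm} is exactly \eqref{eqn:isom_alg_gamma}. Products $v_{r,n}v_{r',n'}$ with $r\neq r'$ vanish on both sides, and $\Omega_r$ cancels.
\item the grading and the $G$-action are preserved, as you say, because $\phi$ keeps the label $(g,n)$ and the action fixes $n$ with a scalar independent of $(\gamma,\epsilon)$.
\end{itemize}
So there is no orientation conflict, and you can drop the $c^{-1}$ detour. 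Your remark that the hypothesis is ``symmetric under $c\mapsto c^{-1}$'' only holds after also swapping the primed and unprimed data. Unit preservation is automatic for a bijective multiplicative map between unital algebras, so normalizing $c(1)=1$ is a convenience rather than a necessity.

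The paper gives no proof of this lemma; it imports it from Davydov--Simmons. Its own setup in Appendix~\ref{append:partialorder}, where $A(H,N,\gamma,\epsilon)=E_H(\bC[N,\gamma,\epsilon])$, suggests a more structural route:
\begin{itemize}
\item check that $e_n\mapsto c(n)e'_n$ is an algebra isomorphism $\bC[N,\gamma,\epsilon]\to\bC[N,\gamma',\epsilon']$ in $\cZ(\Vec_H^{\omega|_H})$. These are the same two identities, but with no transgression or $\Omega_g$ factors and no coset bookkeeping.
\item then apply $E_H$, as the paper does for the subalgebra inclusions in the partial-order proof.
\end{itemize}
Your computation is the image of that argument under $E_H$. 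It trades the functoriality of $E_H$ on algebras for an explicit check against the presentation of the induced algebra. That makes it more self-contained, since it needs only the formulas in \Cref{sec:review-algebras}.
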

From now on, we always take such identification into account. The possible $\gamma$'s in $A(H,N,\gamma,\eps)$ form an $H^2(N, U(1))$-torsor. The above still does not identify all the isomorphic algebras, which we address next.

\subsubsection{Isomorphism Classes of Condensable Algebras}
\label{sec:IsomAlg}

We now identify the complete set of isomorphism classes of condensable algebras in $\cZ(\Vec_G^\omega)$, with proof given in \Cref{append:alg_Morita_equiv}. This will then allow us to identify distinct phases in applications to quantum systems. 

\begin{lemma}[Isomorphisms of Condensable Algebras]
\label{lem:Morita_equiv}
    Two condensable algebras $A(H,N,\gamma,\epsilon)$ and $A(H',N',\gamma',\epsilon')$ in $\cZ(\Vec_G^\omega)$ are isomorphic, if and only if there exists $g\in G$ such that 
    \be\ba 
        H' & = {}^g\!H, & \gamma'({}^g n_1, {}^g n_2) & = \frac{\gamma(n_1,n_2)}{\Omega_g(n_1,n_2)},\\
        N' & = {}^g\!N, & \epsilon'({}^{g}h, {}^{g}n) & =\frac{\Omega_g({}^{h}n,h)}{\Omega_g(h,n)}\epsilon(h,n),\label{eqn:MoritaEq}
    \ea\ee
    where we use notation: ${}^{g}x = gxg^{-1}$, ${}^{g}\!H = gHg^{-1}$, for $h\in H$ and $n_1,n_2,n\in N$ and $\Omega_g$ is defined as 
    \be
    \Omega_f(g,h)=\frac{\omega(f,g,h)\omega({}^{f}\!g,{}^{f}h,f)}{\omega({}^{f}\!g,f,h)}, \label{eq:Omega_f}
    \ee
\end{lemma}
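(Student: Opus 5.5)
The proof has two directions, and I would handle them separately using the explicit model of $A(H,N,\gamma,\epsilon)$ recalled above: spanning vectors $v_{g,n}$ with relation \eqref{eq:v_equiv_rel}, grading $|v_{g,n}|={}^g n$, $G$-action \eqref{eqn:G-action_algebra_basis} and multiplication \eqref{eqn:cond_alg_multiplication}.

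\emph{Sufficiency.} Suppose the data are related by conjugation by $g$ as in \eqref{eqn:MoritaEq}. I would write down the map
\be
\phi:\ A(H,N,\gamma,\epsilon)\to A({}^g\!H,{}^g\!N,\gamma',\epsilon'),\qquad v_{x,n}\mapsto \lambda(x,n)\, v_{xg^{-1},\,{}^g n},
\ee
with $\lambda$ a phase built from $\omega$ (essentially a transgression $\omega(\cdot,g^{-1}|\cdot)$ factor). The grading is preserved automatically, because $xg^{-1}\,{}^g n\, g x^{-1}={}^x n$. I would then fix $\lambda$ by requiring compatibility with the $G$-action \eqref{eqn:G-action_algebra_basis}; this turns into a 2-cocycle identity for the transgression. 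Next I would check three things in turn. First, $\phi$ respects the relation \eqref{eq:v_equiv_rel}; this is exactly where the $\epsilon'$ condition enters, through the standard identity relating $\Omega_g$ and the transgression $\omega(\cdot,\cdot|\cdot)$. Second, $\phi$ is multiplicative; this uses the $\gamma'$ condition together with the identity $\Omega_{xg^{-1}}^{g}\,\Omega_g=\Omega_x$ up to a coboundary, which comes from $d\Omega_g=\omega/\omega^g$. Third, $\phi$ is a bijection, since it sends the basis \eqref{eqn:basiselm} for representatives $R$ to the basis for representatives $Rg^{-1}$.

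\emph{Necessity.} Let $\phi:A\to A'$ be an algebra isomorphism in $\cZ(\Vec_G^\omega)$. I would proceed as follows.
\begin{enumerate}
\item The trivially graded part $A_1$ is the commutative algebra $\bC[G/H]$ of functions, spanned by the idempotents $v_{r,1}$, and $G$ acts on its minimal idempotents by permutation. Since $\phi$ is a $G$-equivariant algebra map, it restricts to a $G$-equivariant isomorphism $\bC[G/H]\cong\bC[G/H']$. It therefore induces an isomorphism of $G$-sets $G/H\cong G/H'$, which forces $H'={}^g H$ for some $g$.
\item Composing with the isomorphism from the sufficiency direction, I may assume $g=1$ and $\phi(v_{1,1})=v_{1,1}$.
\item Cutting down by this idempotent, $v_{1,1}A\,v_{1,1}$ is the twisted group algebra $\bC^\gamma[N]$, graded by $N\subset G$ and carrying an $H$-action twisted by $\epsilon$. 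Since $\phi$ preserves the grading, the supports agree, so $N'=N$.
\item The restricted map is then graded, $v_{1,n}\mapsto c(n)v_{1,n}$. Multiplicativity gives the $\gamma$ relation \eqref{eqn:isom_alg_gamma}, and $H$-equivariance gives the $\epsilon$ relation \eqref{eqn:isom_alg_epsilon}. This is Lemma~\ref{lem:DS-isom-gamma-epsilon} read in reverse, i.e. absorbed into the identification already imposed.
\end{enumerate}

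I expect the main obstacle to be the cocycle bookkeeping. Specifically, one has to verify that the phases $\Omega_g$ of \eqref{eq:Omega_f} are exactly the factors produced when $v_{x,n}$ is conjugated, and that the transgression identities close consistently. I would first settle the case where $\omega$ is normalized and $g$ normalizes $H$ to fix the formulas, and only then treat general $g$.
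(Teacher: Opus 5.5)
Your outline is correct and takes a genuinely different route from the paper. The paper never works in the explicit model. It realizes $\cZ(\Vec_G^\omega)$ as $\Bimod_{\Vec_{G\times G}^{\omega_1/\omega_2}}(A(G^{\diag},1))$ and lifts the conjugation functors $\ad_{(g,g')}$, whose tensor structure is $\Omega_g^{-1}$, to braided autoequivalences $\wt{\ad}_{(g,g')}$. It reads off that these send $A(H,N,\gamma,\epsilon)$ to $A({}^g\!H,{}^g\!N,\gamma',\epsilon')$ with \eqref{eqn:MoritaEq}. It then gets the converse from Natale's theorem (Morita classes of algebras in a pointed fusion category are conjugation orbits) together with the fact that commutative algebras are Morita equivalent iff isomorphic.

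You stay inside the model. Sufficiency comes from an explicit graded, $G$-equivariant isomorphism. Necessity comes from reading $H$ off the $G$-set of minimal idempotents of $A_1\cong\bC[G/H]$, and $(N,\gamma,\epsilon)$ off the corner $v_{1,1}Av_{1,1}$. This makes the \emph{only if} direction elementary and self-contained, whereas the paper's version leans on a correspondence between condensable algebras in $\cZ(\Vec_G^\omega)$ and algebras in $\Vec_{G\times G}^{\omega_1/\omega_2}$ that is only sketched. It also identifies the conjugating element concretely: take $g$ with $\phi(v_{1,1})=v'_{g^{-1},1}$, not an arbitrary $g$ with $H'={}^g\!H$, since elements of $N_G(H)$ can move $N$. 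The price is explicit cocycle bookkeeping in the \emph{if} direction, which the paper avoids because $\Omega_g$ enters directly as the tensor structure of $\ad_g$.

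Two points about that bookkeeping.

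First, $d\Omega_g=\omega/\omega^g$ only tells you that $\Omega_x/(\Omega^g_{xg^{-1}}\Omega_g)$ is a $2$-cocycle, not that it is a coboundary. Multiplicativity needs the explicit identity $\Omega_{ab}(n,n')=\Omega_b(n,n')\,\Omega_a({}^bn,{}^bn')\,\omega(a,b|nn')/\bigl(\omega(a,b|n)\,\omega(a,b|n')\bigr)$, which follows from the $3$-cocycle condition. Use it with $a=xg^{-1}$, $b=g$, so that the coboundary is exactly that of the $\lambda(x,\cdot)$ already fixed by $G$-equivariance.

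Second, do not run the checks with \eqref{eq:v_equiv_rel} and \eqref{eqn:G-action_algebra_basis} as printed. Computing $h_1\cdot v_{h_2,n}$ directly, and again after rewriting $v_{h_2,n}=\epsilon(h_2,n)v_{1,{}^{h_2}n}$, gives $\epsilon(h_1h_2,n)/\bigl(\epsilon(h_1,{}^{h_2}n)\,\epsilon(h_2,n)\bigr)=\omega(h_1,h_2|n)^{-1}$. This contradicts \eqref{eqn:epsilon_relation_transgression}. With those signs the checks do not close; the $\epsilon$-check returns the inverse of the ratio in \eqref{eqn:MoritaEq}.

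Replace $\omega(\cdot,\cdot|\cdot)$ by its inverse in those two formulas. Then the choice $\lambda(x,n)=\omega(xg^{-1},g|n)$ makes $G$-equivariance, compatibility with \eqref{eq:v_equiv_rel}, and multiplicativity hold exactly with the $\gamma',\epsilon'$ of \eqref{eqn:MoritaEq}. The $\epsilon$-check uses two facts:
\begin{itemize}
\item the twisted cocycle identity $\omega(a,b|{}^cn)\,\omega(ab,c|n)=\omega(a,bc|n)\,\omega(b,c|n)$;
\item the identity $\omega(g,h|n)/\omega({}^gh,g|n)=\Omega_g({}^hn,h)/\Omega_g(h,n)$, which holds on the nose from \eqref{eqn:alpha_proj_action} and \eqref{eq:Omega_f}.
\end{itemize}
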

\begin{proof}
    See Appendix~\ref{append:alg_Morita_equiv}.
\end{proof}
The relation for Lagrangian algebras is then simply: 
\begin{corollary}[Isomorphism of Lagrangian Algebras]
    Lagrangian algebras $\cL(H, \gamma)$ and $\cL(H', \gamma')$ in $\cZ(\Vec_G^\omega)$ are isomorphic if and only if $H'={}^{g}\!H$ and  $\gamma'=\gamma^{g^{-1}}\Omega_{g^{-1}}$ for some $g\in G$. 
\end{corollary}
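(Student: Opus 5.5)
This corollary should follow directly from Lemma~\ref{lem:Morita_equiv} by specializing to $H=N$ and $H'=N'$, together with the Lagrangian corollary stating that for $H=N$ the phase $\epsilon$ is uniquely determined by $\gamma$.

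First, by Lemma~\ref{lem:Morita_equiv}, $\cL(H,\gamma)=A(H,H,\gamma,\epsilon(\gamma))$ and $\cL(H',\gamma')=A(H',H',\gamma',\epsilon(\gamma'))$ are isomorphic if and only if there is $g\in G$ with $H'={}^g\!H$ (the condition $N'={}^gN$ then being the same statement), together with the relations for $\gamma'$ and $\epsilon'$.

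Next we must show the $\epsilon'$ condition is automatic. Given $\gamma'$ satisfying the $\gamma$-relation, the transported $\epsilon$ defined by the right-hand side of \eqref{eqn:MoritaEq} is a valid $H'$-action phase compatible with $\gamma'$. This holds because conjugation by $g$ carries the compatibility conditions of \Cref{append:math-details} for $(H,\gamma,\epsilon)$ to those for $({}^gH,\gamma',\cdot)$; equivalently, conjugation by $g$ itself induces an isomorphism of algebras in the ``if'' direction of the lemma, so its image is again a condensable algebra. Uniqueness of $\epsilon$ for $H=N$ then forces it to equal $\epsilon(\gamma')$. Hence only the $\gamma$ condition remains.

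Finally we rewrite that condition. It reads $\gamma'({}^gn_1,{}^gn_2)=\gamma(n_1,n_2)/\Omega_g(n_1,n_2)$, i.e.\ $\gamma'=\gamma^{g^{-1}}\cdot(\Omega_g^{g^{-1}})^{-1}$ as functions on $H'\times H'$. To match the stated form $\gamma'=\gamma^{g^{-1}}\Omega_{g^{-1}}$, we need the cocycle identity $\Omega_{g^{-1}}(x,y)\,\Omega_g({}^{g^{-1}}x,{}^{g^{-1}}y)=1$, at least up to a coboundary $dc$, which is harmless since $\gamma$ is only defined up to Lemma~\ref{lem:DS-isom-gamma-epsilon}. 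This identity should follow from the composition law $\Omega_{fg}=\Omega_f^{\,g}\,\Omega_g\cdot(\text{coboundary})$ applied with $f=g^{-1}$, using $\Omega_e=1$ for normalized $\omega$; one can check it directly from \eqref{eq:Omega_f} and the 3-cocycle condition.

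The main obstacle is this last identity, namely ensuring the inverse-$\Omega$ relation holds exactly or only up to a coboundary, and tracking the convention $g$ versus $g^{-1}$.
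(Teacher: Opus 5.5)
Your proposal is correct and follows the paper's route: the paper presents this corollary as an immediate specialization of Lemma~\ref{lem:Morita_equiv} to $N=H$, $N'=H'$. You additionally make explicit the two steps the paper leaves implicit. First, the $\epsilon$-relation is automatic once $\epsilon$ is fixed by \eqref{eps_from_gamma}; in fact this is a one-line check, since substituting the $\gamma$-relation into $\gamma'({}^gh,{}^gn)/\gamma'({}^{gh}n,{}^gh)$ directly gives $\frac{\Omega_g({}^hn,h)}{\Omega_g(h,n)}\epsilon(h,n)$. Second, $\Omega_{g^{-1}}\,\Omega_g^{g^{-1}}$ is a coboundary by the composition law for $\Omega$ with $\Omega_e=1$, and that is all you need given the identification of Lemma~\ref{lem:DS-isom-gamma-epsilon}.
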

This agrees with the classification of indecomposable module categories over group-theoretical fusion categories \cite{Natale2017} (also see \cite{Ostrikmodule}).

\subsection{Anyon Decomposition of Algebras}  
\label{sec:AnyonDecomp}

For applications, e.g. to the classification of symmetric phases, it is informative to have a decomposition of the algebras in terms of anyons. For instance, they correspond to the generalized charges. 
Here, we provide a formula to decompose each condensable algebra $A(H,N,\gamma,\eps)$ into simple anyons. This is achieved using characters of objects in $\cZ(\Vec_G^\omega)$, studied in \cite{Bantay:1993ku, davydov2017lagrangian}. 

Recall that objects in $\cZ(\Vect_G^\omega)$ are $G$-graded vector spaces with $\omega$-projective $G$-actions \cite{davydov2017lagrangian}. A simple object in $\cZ(\Vec_G^\omega)$ is labeled by $([x],\rho)$, where $[x]\subset G$ is a conjugacy class 
and $\rho$ an irreducible (normalized) $(\omega(\cdot,\cdot|x)^{-1})$-projective representation of the centralizer $C_G(x)$ (of a representative\footnote{For each $x\in [x]$, denote the representation of $C_G(x)$ by $\rho_x$. For each $f\in[x]$, the characters of $\rho_f$ are related to those of $\rho_x$ by means of \cite[equation (2.10)]{gruen2021computing}. } $x\in [x]$).

\begin{proposition}[Anyon Decomposition]
\label{prop:anyon}
    Condensable algebra $A(H, N, \gamma, \epsilon)$ in $\cZ(\Vec_G^\omega)$ has decomposition:
    \be
        A(H,N,\gamma,\epsilon) \cong \bigoplus_{([x],\rho)} n_{([x],\rho)} ([x],\rho),
    \ee
    with multiplicity of simple object $([x], \rho)$ being
    \be
        n_{([x],\rho)} = \!\!\!\!\!\sum_{\substack{f\in [x],\\ g\in C_G(f)}}\!\!\! \biggl\{ \frac{\ol{\Tr(\rho_{f}(g))}}{|G|} \!\!\!\!\!\!\!\!\!\sum_{\substack{y\in R,\\ {}^{\bar{y}}f\in N,g^y\in H}}\!\!\!\!\!\!\frac{\omega({}^{\bar{y}}g,y^{-1}|f)}{\omega(y^{-1},g|f)}\epsilon({}^{\bar{y}}g,{}^{\bar{y}}f)\biggr\},\label{eqn:scalar_prod_anyon_decom}
    \ee
    with $\ol{\Tr(\rho_{f}(g))}$ the complex conjugation of $\Tr(\rho_{f}(g))$.
\end{proposition}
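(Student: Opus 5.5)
The plan is to compute the multiplicities with the character theory of $\cZ(\Vec_G^\omega)$. An object $V$ of $\cZ(\Vec_G^\omega)$ is a $G$-graded vector space $V=\bigoplus_{f\in G}V_f$ with an $\omega$-projective $G$-action satisfying $g\cdot V_f\subset V_{{}^g f}$. Its character is the function on commuting pairs
\be
\chi_V(f,g)=\Tr\big(g\cdot\,|_{V_f}\big),\qquad g\in C_G(f).
\ee
By \cite{Bantay:1993ku, davydov2017lagrangian} the characters of the simples $([x],\rho)$ are $\chi_{([x],\rho)}(f,g)=\delta_{f\in[x]}\Tr(\rho_f(g))$. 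These are orthonormal for the inner product
\be
\langle \chi,\chi'\rangle=\frac{1}{|G|}\sum_{f}\sum_{g\in C_G(f)}\chi(f,g)\overline{\chi'(f,g)},
\ee
once the projective cocycles are normalized compatibly, as in the convention of \cite[(2.10)]{gruen2021computing}. It follows that $n_{([x],\rho)}=\langle \chi_A,\chi_{([x],\rho)}\rangle$, which is exactly the outer structure of \eqref{eqn:scalar_prod_anyon_decom}. The remaining task is to compute $\chi_A(f,g)$ for $A=A(H,N,\gamma,\epsilon)$.

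To compute the character I will use the basis $B=\{v_{r,n}\}$ from \eqref{eqn:basiselm}. The degree-$f$ component $A_f$ is spanned by the vectors $v_{r,n}$ with $rnr^{-1}=f$, i.e. $n={}^{r^{-1}}f\in N$. So for each $r\in R$ there is at most one basis vector in degree $f$, and it exists iff ${}^{\bar r}f\in N$ (writing $\bar r=r^{-1}$).

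Next I apply $g\in C_G(f)$. By \eqref{eqn:G-action_algebra_basis}, $g\cdot v_{r,n}=\omega(g,r|n)\,v_{gr,n}$. I then rewrite $v_{gr,n}$ in the basis: write $gr=r'h$ with $r'\in R$, $h\in H$, and use \eqref{eq:v_equiv_rel} to move $h$ across. This gives a vector proportional to $v_{r',{}^h n}$. The diagonal entry is nonzero only when $r'=r$, i.e. $r^{-1}gr=g^{r}\in H$ (in the paper's notation $h={}^{\bar r}g$). In that case the coefficient is $\omega(g,r|n)\,\epsilon(h,n)/\omega(r,h|n)$. Since $g$ commutes with $f$, we get ${}^h n=n$, so the vector really returns to itself.

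Summing the diagonal entries over $r\in R$ with ${}^{\bar r}f\in N$ and $g^r\in H$ gives $\chi_A(f,g)$. Substituting into the inner product then yields \eqref{eqn:scalar_prod_anyon_decom}. For each $f$, only the $f\in[x]$ survive because of the support of $\chi_{([x],\rho)}$.

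The main obstacle is matching the cocycle factors with the precise form $\omega({}^{\bar y}g,y^{-1}|f)/\omega(y^{-1},g|f)$. My direct computation produces transgression factors evaluated at $n={}^{\bar y}f$, together with the relation $gy=y\,({}^{\bar y}g)$. To convert these I will use the twisted cocycle identity for the transgression,
\be
\omega(a,b|{}^{c}x)\,\omega(c,ab|x)\cdots,
\ee
i.e. the property that $\omega(\cdot,\cdot|\cdot)$ defines a groupoid 2-cocycle on the action groupoid. I will also need to fix the conventions (left versus right actions, and which conjugate $\rho$ carries) so that the complex conjugate falls on $\Tr(\rho_f(g))$.

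A convenient consistency check is to verify independence of the choice of $R$, using \eqref{eq:v_equiv_rel} and the cocycle condition on $\epsilon$ from \Cref{append:math-details}. A second check is the case $\omega=1$, where the formula reduces to the familiar induced-character count $\frac{1}{|G|}\sum \overline{\Tr\rho_f(g)}\sum_y \epsilon({}^{\bar y}g,{}^{\bar y}f)$.
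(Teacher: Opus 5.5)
Your skeleton is the paper's. Orthogonality of characters in $\cZ(\Vec_G^\omega)$ reduces the claim to the character \eqref{eqn:character_cond_alg}; the paper imports that character from Lemma 5.6 of \cite{davydov2017lagrangian}, while you compute it by hand. Your support analysis is correct: there is one basis vector $v_{y,n}$ with $n={}^{\bar y}f$ for each $y\in R$ with ${}^{\bar y}f\in N$, it is diagonal iff $h:={}^{\bar y}g\in H$, and then ${}^hn=n$. Your diagonal entry $\frac{\omega(g,y|n)}{\omega(y,h|n)}\epsilon(h,n)$ is also correct for \eqref{eqn:G-action_algebra_basis} and \eqref{eq:v_equiv_rel} as printed.

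The gap is the matching step. The transgression \eqref{eqn:alpha_proj_action} satisfies $\omega(b,c|x)\,\omega(a,bc|x)=\omega(ab,c|x)\,\omega(a,b|{}^{c}x)$, which follows from the 3-cocycle condition. Apply it at $(a,b,c;x)=(y^{-1},g,y;n)$, $(h,y^{-1},y;n)$ and $(y^{-1},y,h;n)$, using $gy=yh$, ${}^yn=f$, ${}^hn=n$. This gives
\[
\frac{\omega(g,y|n)}{\omega(y,h|n)}=\frac{\omega(y^{-1},g|f)}{\omega({}^{\bar y}g,\,y^{-1}|f)}\,,
\]
which is the \emph{reciprocal} of the phase in the statement. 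No cocycle identity can bridge this. Carried out as proposed, your argument proves the formula with $\omega(\cdot,\cdot|\cdot)$ inverted.

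The cause is that the printed basis description uses the convention opposite to \eqref{eqn:epsilon_relation_transgression}. Requiring \eqref{eqn:G-action_algebra_basis} to descend through \eqref{eq:v_equiv_rel} is exactly an instance of the identity above. Rewriting $v_{gh_1h_2,n}$ in two ways via \eqref{eq:v_equiv_rel} then forces $\epsilon(h_1h_2,n)=\omega(h_1,h_2|n)^{-1}\epsilon(h_1,{}^{h_2}n)\epsilon(h_2,n)$, the inverse of \eqref{eqn:epsilon_relation_transgression}. Your planned $R$-independence check would not go through in general for the same reason. So you must fix a consistent convention before computing.

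The cleanest choice is the model $E_H(\bC[N,\gamma,\epsilon])$ of \eqref{eqn:defn_transfer_functor}. Let $a_{y,m}$ be supported on $yH$ with $a_{y,m}(y)=e_m$, where $m={}^{\bar y}f$. Then $(g\cdot a_{y,m})(y)=\omega(y^{-1},g|f)^{-1}a_{y,m}(yh^{-1})$. Quasi-periodicity with $h^{-1}\in H$ gives $a_{y,m}(yh^{-1})=\omega(h,y^{-1}|f)\,\epsilon(h,m)\,e_m$. So the diagonal entry is exactly $\frac{\omega({}^{\bar y}g,y^{-1}|f)}{\omega(y^{-1},g|f)}\epsilon({}^{\bar y}g,{}^{\bar y}f)$, with no cocycle manipulation; this is what the cited lemma records.

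Equivalently, in the $v$-basis use $g'\cdot v_{g,n}=\omega(g',g|n)^{-1}v_{g'g,n}$ and $v_{gh,n}=\epsilon(h,n)\,\omega(g,h|n)\,v_{g,{}^hn}$, which are consistent with \eqref{eqn:epsilon_relation_transgression}. The diagonal entry becomes $\frac{\omega(y,h|n)}{\omega(g,y|n)}\epsilon(h,n)$, and the same three instances of the cocycle identity convert it into the stated form.
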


\begin{proof}
    Recall from \cite{davydov2017lagrangian} that for any object $V$ in $\cZ(\Vec_G^\omega)$, its character is defined on commuting pairs of elements $f,g\in G$, $fg=gf$, to be
    \be
        \chi_V(f,g) = \Tr_{V_f}(g)\,,
    \ee
    where $V_f$ denotes the $f$-graded component of $V$, and it takes the value of zero on non-commuting elements.
    The scalar product of characters $\chi_V, \chi_W$ of objects $V,W$ of $\cZ(\Vec_G^{\omega})$ is defined by 
    \be
        \langle \chi_V, \chi_W \rangle := \frac{1}{|G|}\sum_{\substack{f,g\in G\\ fg=gf}} \ol{\chi_V(f, g)}\; \chi_W(f,g)
    \ee
and    satisfies the orthogonality relation 
    \be
    \label{eqn:orthogonality_character_Drinfeld_center_of_group}
        \langle \chi_V, \chi_W\rangle = \dim \lb \Hom_{\cZ(\Vec_G^{\omega})}(V,W) \rb \,.
    \ee
    Applying \cite[Lemma 5.6]{davydov2017lagrangian}, we can compute the character of the condensable algebra $A(H,N,\gamma,\epsilon)$ in $\cZ(\Vec_G^\omega)$ to be 
    \be \label{eqn:character_cond_alg}
    	\chi_{A(H,N,\gamma,\epsilon)}(f,g)=\sum_{\substack{y\in R,\\ {}^{\bar{y}}f\in N,\,{}^{\bar{y}}g\in H}}\frac{\omega({}^{\bar{y}}g,y^{-1}|f)}{\omega(y^{-1},g|f)}\epsilon({}^{\bar{y}}g,{}^{\bar{y}}f)\,,
    \ee
    for $f,g\in G$ such that $fg=gf$, and $R$ any complete set of cosets of $G/H$. The character vanishes on pairs of non-commuting group elements. The orthogonality relation \eqref{eqn:orthogonality_character_Drinfeld_center_of_group} leads us to expression~\eqref{eqn:scalar_prod_anyon_decom} as claimed.
\end{proof}

\begin{example}[Abelian Groups]
    To illustrate the above formula \eqref{eqn:scalar_prod_anyon_decom}, we use it to decompose any condensable algebra $A(H,N,\gamma,\epsilon)$ in $\cZ(\Vec_A)$, where $A$ is a finite abelian group, as follows:
    \be
    \label{eqn:abelian_group_trivial_twist_cond_alg}
        A(H,N,\gamma,\epsilon) = \bigoplus_{\substack{a\in N, \, \wh{a}\in \wh{A},\\
        \wh{a}|_H=\epsilon(\cdot, a)}} (a, \wh{a})\,.
    \ee
    This follows by starting with the general expression
    \be
        A(H,N,\gamma,\epsilon) = \bigoplus_{(a,\wh{a})\in A\times \wh{A}} n_{a,\wh{a}}(a,\wh{a}),
    \ee
    and applying equation \eqref{eqn:scalar_prod_anyon_decom}, which simplifies to
    \be
        n_{a,\wh{a}} = \begin{cases}
            0, & a\notin N, \\
            \langle\wh{a}, \epsilon(\cdot, a)\rangle_H, & a\in N,
        \end{cases}
    \ee
    where $\langle\cdot, \cdot\rangle_H$ denotes the scalar product of (class) functions\footnote{More explicitly, let $\chi,\psi:H\rightarrow \bC$ be two (class) functions, define scalar product $\langle\chi, \psi\rangle_{H}:=1/|H|\sum_{h\in H}\chi(h^{-1})\psi(h)$.} on $H$. Hence equation \eqref{eqn:abelian_group_trivial_twist_cond_alg} holds.
\end{example}

\begin{example}[Lagrangian algebras]
    Note that for a Lagrangian algebra $\cL(H, 1)$ in $\cZ(\Vec_G)$, the multiplicity $n_{([1],\rho)}$ of anyon $([1],\rho)$, where $\rho$ denotes an irreducible representation of $G$, agrees with the multiplicity of the trivial representation of $H$ in $\text{Res}^G_H \rho$. This follows from:
    \begin{equation}
        \begin{split}
            n_{([1],\rho)} & = \frac{1}{|G|}\sum_{g\in G} \overline{\Tr(\rho(g))} \sum_{\substack{y\in R\\{}^{\bar{y}}g\in H}}1\\
            & = \langle \chi_\rho, \text{Ind}^G_H 1 \rangle_{G} \\
            & = \langle \text{Res}^G_H\chi_\rho, 1\rangle_H,
        \end{split}
    \end{equation}
    where $\chi_\rho$ denotes the character of $\rho$ and the final line follows from Frobenius reciprocity.

    For more specific examples,
    \be 
     \cL(1,1)\cong\!\!\bigoplus_{r_i \in \Rep (G)} \text{dim} (r_i) ([1], r_i)
    \ee
    in $\cZ(\Vec_G^\omega)$
    defines the Lagrangian algebra for the Dirichlet boundary for $\Vec_G^\omega$ symmetry, and
    \begin{equation}
        \cL(G,1)\cong \bigoplus_{[g]}([g],1)
    \end{equation}
    in $\cZ(\Vec_G)$ defines the Neumann symmetry boundary for $\Rep(G)$ symmetry.
\end{example}

\subsection{Twin Algebras}
\label{sec:TwinAlgebras}

It is rather common to refer to algebras simply in terms of their anyon decomposition. However, there can be algebras whose anyon decomposition is the same, but whose multiplicative structure is different. I.e., 
there can be non-isomorphic condensable algebras but isomorphic as objects:
\begin{definition}
\label{defn:twins}
    Two condensable algebras $A_1$ and $A_2$ in a modular tensor category $\cB$ are called {\bf twin algebras} (or simply {\bf twins}) if $A_1\cong A_2$ as objects but $A_1\not \cong A_2$ as algebras in $\cB$.
\end{definition}
Twin algebras exist in $\cZ(\Vec_G^\omega)$, both Lagrangian and non-maximal condensable algebras. 
We provide a characterization of twins in terms of the classification data $A (H, N, \gamma, \epsilon)$, in \Cref{sec:examples-of-twin-algs}.

\subsection{Interfaces from Condensable Algebras}
\label{sec:Fold}

One of the main applications of non-maximal condensable algebras is that they define topological interfaces between topological orders. 
Here we will describe interfaces $\cI_{A(H,N,\gamma,\epsilon)}$ associated to $A(H, N, \gamma, \epsilon)$, which define interfaces from 
$\cZ(\Vec_G^{\omega})$  to the reduced topological order, which is given by $\cZ(\Vec_{H/N}^{\alpha})$ by Theorem~\ref{thm:DS-reducedTO}: 
\be\label{GLAGR}
\begin{tikzpicture}[baseline={(current bounding box.center)}]
\begin{scope}[shift={(0,0)}]
\draw[SymTFTColores, fill= SymTFTColores, opacity = 0.2]   (0,0) -- (0,2) -- (2,2) -- (2,0) --(0,0); 
\draw[SymTFTColores, fill= SymTFTColores, opacity = 0.1]  (2,0) -- (2,2) -- (4,2) -- (4,0) --(2,0); 
\draw [very thick]  (2,0) -- (2,2); 
\node[above] at (2,2) {$\cI_{A(H, N, \gamma, \epsilon)}$};
\node at (1,1) {$\cZ(G,{\omega})$}; 
\node at (3,1) {$\cZ({H/N},{\alpha})$};
\end{scope}
\end{tikzpicture}
\ee
A description in terms of anyons was given in many instances in \cite{Beigi:2010htr,Chatterjee:2022tyg,Bhardwaj:2023bbf,Bhardwaj:2024qrf,Bhardwaj:2025jtf}. 
By ``folding'' along the interface $\cI_{A(H, N, \gamma, \epsilon)}$, we can associate to the condensable algebra $A(H, N, \gamma, \epsilon)$
a Lagrangian algebra in the folded setup 
\be\ba \label{eq:Afolded}
    \cL^\folded(H^\diag,\vp) & \in \cZ(\Vec_G^{\omega})\boxtimes\overline{\cZ(\Vec_{H/N}^{\alpha})}\\
    & \cong \cZ(\Vec_{G\times H/N}^{\omega\times\overline{\alpha}})\,,
\ea\ee
where
\begin{itemize}
    \item denoting by $p:H\rightarrow H/N$ the projection, 
    \be \label{eq:Hdiag}
       H^\diag=\{(h,p(h))\;:\;h\in H\}\,.
    \ee
    
    \item $\vp$ is a 2-cocycle on $H^\diag$ such that $A^\folded(H^\diag, \varphi)$ contains
    \be\ba \label{eq:alg_to_include}
    A(H, N, \gamma, \epsilon) \boxtimes A(H/N, 1, 1, 1) \\
    = A(H \times H/N, N \times 1, \wt{\gamma}, \wt{\epsilon}),
    \ea\ee
    as a subalgebra. Here, $A(H/N, 1, 1, 1)$ is the trivial condensable algebra in $\overline{\cZ(\Vec_{H/N}^{\alpha})}$
    and 
    \be\ba
    \wt{\gamma}((n_1,1), (n_2,1)) &= \gamma(n_1,n_2)\,,\\
    \wt{\epsilon}((h_1, p(h_2)), (n,1)) &= \epsilon(h_1, n)\,,
    \ea\ee
    for $n_1,n_2,n\in N$, $h_1,h_2\in H$. 
\end{itemize}
Note that $\omega\times\overline{\alpha}$ trivializes on $H^\diag$ because $p^*\alpha$ is equivalent to $\omega|_H$ \cite{Hannah:2023tae}. Hence $A^{\folded}(H^{\diag}, \varphi)$ exists.

Lagrangian algebras in the folded theory corresponding to a condensable algebra are not unique. They can be related via an autoequivalence on the reduced TO. For example, consider the subgroup 
\be
    H^2(H/N, U(1)) \subset \text{Aut}^{\text{br}}(\cZ(\Vec_{H/N}^\alpha)),
\ee
an element $\phi$ in this subgroup $H^2(H/N, U(1))$ induces a 2-cocycle $p^*\phi$ on $H^{\diag}$ via pullback along the projection $p:H^{\diag} \rightarrow H/N$. $p^*\phi$ defines an autoequivalence that relates the two Lagrangian algebras 
\be
    \cL(H^{\diag}, \varphi), \quad \cL(H^{\diag}, \varphi (p^*\phi)),
\ee
in the folded theory $\cZ(\Vec_{G\times H/N}^{\omega\times \overline{\alpha}})$.

\vspace{2mm}
\noindent{\bf Map of Anyons.} As mentioned, a condensable algebra defines a map on anyons from the reduced TO to the original TO, which can be read off from the folded Lagrangian. The folded Lagrangian $\cL^\folded(H^\diag,\vp)$ can be decomposed in terms of the anyons using \eqref{eqn:scalar_prod_anyon_decom}
\be \label{eq:L_folded}
\cL^{\folded} (H^\diag, \vp) = \bigoplus_{\substack{a \in \cZ(\Vec_G^{\omega})\\ b \in \cZ(\Vec_{H/N}^{\alpha})}} n_{ab} \;{a} \otimes \ol{b} \,.
\ee
The anyon decomposition of $A^\folded(H^\diag,\vp)$ encodes a map of anyons from the reduced TO $\cZ(\Vec_{H/N}^{\alpha})$ to the original TO $\cZ(\Vec_G^{\omega})$, 
\begin{align}
    \cZ(\Vec_{H/N}^{\alpha}) & \rightarrow \cZ(\Vec_G^{\omega})\nn\\
    b & \mapsto a \,.
\end{align}
The anyons in $\cZ(\Vec_G^\omega)$ that do not appear in the image of this map are confined and do not pass through the interface: for example, the anyons $([x],\rho)$ in $\cZ(\Vec_G^\omega)$ such that $[x]\cap H = \emptyset$ are confined.
Note that as an equivalence of modular tensor categories, such a map preserves spins and braidings, i.e. it needs to be compatible with modular data as discussed in \cite{Chatterjee:2022tyg,Ng:2023wsc}, in particular:
\be\ba
    Sn=nS'\,,\quad Tn=nT'\,,
\ea\ee
where $S,T$ and $S',T'$ denote the modular matrices of $\cZ(\Vec_G^\omega)$ and $\cZ(\Vec_{H/N}^{\alpha})$ respectively and $n$ is the matrix of non-negative integer coefficients appearing in \eqref{eq:L_folded}.

Note that the compatibility conditions on objects and modular data do not capture the full algebra structure of condensable algebras, and the modular data does not fully specify an MTC \cite{mignard2021modular,Delaney:2021owx}, which is why we impose the stronger requirement of algebra inclusion of \eqref{eq:alg_to_include} in $A^\folded(H^\diag,\vp)$.

\subsection{Partial Order and Hasse Diagram}
\label{sec:Hasse}

Condensable algebras for a fixed topological order have a partial order on them. This has surprising connections to the structure of phases and phase-transitions, and provides a useful way to characterize all symmetric phases for a given  symmetry \cite{Bhardwaj:2024qrf}. In particular given that it is independent of the symmetry boundary and only depends on the bulk topological order, it is the same Hasse diagram for all Morita equivalent symmetries. 

We now define the partial order on condensable algebras in $\cZ(\Vec_G^\omega)$ and the resulting Hasse diagram: 
\begin{proposition}
\label{prop:partial-order}
    There is a  partial order on condensable algebras in $\cZ(\Vec_G^\omega)$ defined by: 
\be 
A(H, N, \gamma, \epsilon ) \preceq A(H', N', \gamma', \epsilon ') 
\ee
if     
\be \label{eqn:partial_order_N}
H = H'\,,\quad N < N' \,,\quad  
\gamma = \gamma'|_{N} \,, \quad \epsilon= \epsilon'|_{H \times N}
\ee
or 
\be\label{eqn:partial_order_H}
H' < H \,,\quad N= N' \,,\quad \gamma= \gamma' \,,\quad \epsilon' = \epsilon|_{H'\times N'} \,.
\ee
\end{proposition}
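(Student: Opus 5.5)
The plan is to read $\preceq$ as the relation generated by \eqref{eqn:partial_order_N} and \eqref{eqn:partial_order_H}, i.e. its reflexive--transitive closure taken on isomorphism classes in the sense of Lemma~\ref{lem:Morita_equiv}. I will show two things: each generating relation is realized by an injective morphism of algebras in $\cZ(\Vec_G^\omega)$, and the closure is antisymmetric. Reflexivity and transitivity are then automatic, since composites of injective algebra morphisms are injective algebra morphisms. This also gives the intended meaning of $\preceq$ as inclusion of condensable subalgebras.

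\emph{Step 1 ($N$-type step).} Take $H=H'$, $N<N'$, $\gamma=\gamma'|_N$ and $\epsilon=\epsilon'|_{H\times N}$. I define $\iota(v_{g,n})=v'_{g,n}$ on the spanning sets. This is well defined because the relation \eqref{eq:v_equiv_rel} in the source involves only $\epsilon(h,n)$ and the transgression $\omega(g,h|n)$. Both agree with the corresponding data in the target, since $\epsilon$ is the restriction of $\epsilon'$. The map visibly preserves the grading $|v_{g,n}|={}^g n$ and the $G$-action \eqref{eqn:G-action_algebra_basis}. It preserves the multiplication \eqref{eqn:cond_alg_multiplication} because $\gamma=\gamma'|_N$ and the factor $\Omega_g$ is the same function of $\omega$ in both algebras. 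Injectivity follows by fixing a single set $R$ of representatives of $G/H$: the basis $B$ of \eqref{eqn:basiselm} is then sent into the basis $B'$.

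\emph{Step 2 ($H$-type step).} Take $H'<H$, $N=N'$, $\gamma=\gamma'$ and $\epsilon'=\epsilon|_{H'\times N}$. Here the unit idempotents $v_{r,1}$ of $A(H,\dots)$ must split into sums of the idempotents $v'_{rk,1}$, where $k$ runs over representatives $K$ of $H/H'$. I therefore try
\begin{equation}
\iota(v_{g,n})=\sum_{k\in K}\frac{\epsilon(k,{}^{k^{-1}}n)}{\omega(g,k|{}^{k^{-1}}n)}\,v'_{gk,\,{}^{k^{-1}}n}.
\end{equation}
The phase is forced by rewriting $v_{g,n}=v_{gk,{}^{k^{-1}}n}$ inside $A(H,\dots)$ using \eqref{eq:v_equiv_rel}.

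I then check four things in order. First, independence of the choice of $K$: changing $k$ to $kh'$ with $h'\in H'$ is absorbed by \eqref{eq:v_equiv_rel} in the target, because $\epsilon'=\epsilon|_{H'\times N}$. Second, compatibility with the relation \eqref{eq:v_equiv_rel} in the source; this is a reindexing of $K$ by left multiplication by $h\in H$. Third, preservation of grading and $G$-action, which is direct. Fourth, preservation of multiplication. In the product of two sums, the factor $\delta_{gkH',g'k'H'}$ in the target kills all cross terms, and the surviving diagonal terms reproduce $\delta_{gH,g'H}\,\gamma/\Omega_g$. Injectivity holds because the images of the basis elements $v_{r,n}$ have disjoint supports on $B'$.

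\emph{Step 3 (antisymmetry).} By the dimension formula, $\dim A(H,N,\gamma,\epsilon)=|G||N|/|H|$. Each generating step with a strict inclusion strictly increases this dimension. Hence $A\preceq A'\preceq A$ forces every step in both chains to be trivial, and so $A\cong A'$.

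The main obstacle is the phase bookkeeping in Step~2, particularly the multiplicativity and $H$-equivariance checks. These require the cocycle identities satisfied by the transgression $\omega(\cdot,\cdot|\cdot)$, by $\Omega_g$, and by $\epsilon$ (the compatibility conditions of Theorem~\ref{thm:DS-cond-alg}). My first attempt at a cleaner route would be to identify $A(H,N,\gamma,\epsilon)$ as an induction from $H$ to $G$ of a twisted group algebra of $N$. With that identification, Step~2 becomes the transitivity of induction, $\mathrm{Ind}_{H'}^G=\mathrm{Ind}_H^G\mathrm{Ind}_{H'}^H$, together with the unit map of the algebra.
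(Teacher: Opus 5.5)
Your strategy matches the paper's: realize each generating relation by an injective algebra map in $\cZ(\Vec_G^\omega)$. Your Step 1 is the paper's first inclusion, written in the $v_{g,n}$ basis.

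The routes differ in Step 2. The paper works in the model $A(H,N,\gamma,\epsilon)=E_H(\bC[N,\gamma,\epsilon])$, a space of functions $a\colon G\to\bC[N,\gamma,\epsilon]$ obeying an $H$-equivariance condition. Since $H$-equivariance implies $H'$-equivariance, the embedding into $E_{H'}(\bC[N,\gamma,\epsilon|_{H'\times N}])$ is literally $a\mapsto a$, and grading, $G$-action and multiplication are inherited with no phase bookkeeping. This is the induction route you sketch at the end; your sum over $K$ is its coordinate form.

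Conversely, your Step 3 adds something the paper omits. The paper only exhibits the two inclusions and never checks reflexivity, transitivity or antisymmetry, and the literal ``or'' definition is not transitive. Your closure reading plus the dimension count fills that in.

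There is one concrete error to fix: your Step 2 coefficient is inverted. Applying \eqref{eq:v_equiv_rel} with $h=k$ and $m_k:={}^{k^{-1}}n$ gives $v_{gk,m_k}=\frac{\epsilon(k,m_k)}{\omega(g,k|m_k)}\,v_{g,n}$. Rewriting $v_{g,n}$ in terms of $v_{gk,m_k}$ therefore forces
\[
\iota(v_{g,n})=\sum_{k\in K}\frac{\omega(g,k|m_k)}{\epsilon(k,m_k)}\,v'_{gk,m_k}.
\]
With your phase, the first two checks fail in general. Already for $\omega=1$ and $G=H$ abelian, replacing $k$ by $kh'$ multiplies your $k$-th term by $\epsilon(h',n)^2$ instead of leaving it invariant.

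With the corrected phase, the obstacle you anticipate disappears:
\begin{itemize}
\item For every $x\in gH$, the vectors $\{v_{x,m}\}_{m\in N}$ form a basis of the block of $A$ spanned by $\{v_{g,n}\}_{n\in N}$.
\item On each $H'$-coset $xH'\subset gH$, pick one $x$, expand in this basis, and replace $v_{x,m}$ by $v'_{x,m}$. This reproduces the corrected formula.
\item The choice of $x$ within $xH'$ does not matter, because the relations for $h'\in H'$ coincide in $A$ and $A'$, as $\epsilon'=\epsilon|_{H'\times N}$.
\item \eqref{eqn:G-action_algebra_basis} and \eqref{eqn:cond_alg_multiplication} are literally the same formulas in both algebras.
\item Products of basis vectors attached to different $H'$-cosets vanish in $A'$.
\end{itemize}
Hence grading, $G$-action, unit (for normalized cochains) and multiplication are preserved. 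No cocycle identities are needed beyond those that already make the presentation of $A$ consistent.
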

\begin{proof}
    See Appendix~\ref{append:partialorder}.
\end{proof}
A useful depiction of the partial order is the Hasse diagram of condensable algebras \cite{Bhardwaj:2024qrf}, which uses the partial order and arranges the algebras by quantum dimension 
where 
\be
\dim(A (H, N, \gamma, \epsilon))={|G||N|\over |H|} \,.
\ee
The Lagrangian algebras, as the category, have quantum dimension $|G|$. The Hasse diagram is obtained by arranging  algebras by quantum dim along the vertical axis, starting at the top with $A(G,1,1,1)= 1$ and the Lagrangian algebras at the bottom, and two algebras are connected, if they satisfy $A \preceq A'$.

There are interesting features of the Hasse diagram, which have deep implications in the structure of phases and transitions. 

An example considers the application to direct phase transitions: 
Define $A (H, N, \gamma, \epsilon)$ as a \emph{sub-Lagrangian algebra}, if it is non-Lagrangian,   $\dim (A (H, N, \gamma, \epsilon)) < |G|$, but there are no non-Lagrangian algebras that it is contained in. Then:
\begin{corollary}
Any sub-Lagrangian algebra is contained in one or two Lagrangian algebras. If it is only contained in one Lagrangian, the reduced topological order is anomalous. 
\end{corollary}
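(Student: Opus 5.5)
Let $A=A(H,N,\gamma,\epsilon)$ be sub-Lagrangian, so $N\neq H$, and work throughout with the partial order of Proposition~\ref{prop:partial-order}. The algebras above $A$ are reached by chains of elementary steps of two kinds: enlarging $N$ inside a fixed $H$ (rule \eqref{eqn:partial_order_N}), or shrinking $H$ with $N$ fixed (rule \eqref{eqn:partial_order_H}). Both steps increase the dimension $|G||N|/|H|$. Maximality among non-Lagrangian algebras forces every strictly larger algebra to be Lagrangian, and any algebra above $A$ reached by moving to $N'$ with $N<N'<H$, or to $H'$ with $N<H'<H$, is still non-Lagrangian. So $N$ must be maximal among normal subgroups of $H$ admitting a compatible extension of $(\gamma,\epsilon)$, and $H$ must be minimal among groups properly containing $N$ with the data restricting.

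The cleanest route is through the reduced topological order. By Theorem~\ref{thm:DS-reducedTO}, condensable algebras containing $A$ correspond bijectively to condensable algebras in $\cZ(\Vec_{H/N}^{\alpha})$; this is the standard correspondence between local modules and algebras over $A$. Under this correspondence Lagrangians go to Lagrangians, and $A$ itself goes to the unit. So $A$ is sub-Lagrangian exactly when $\cZ(\Vec_{H/N}^\alpha)$ has no condensable algebras other than the unit and Lagrangian ones. The counting problem becomes: how many Lagrangian algebras does $\cZ(\Vec_{K}^\alpha)$ have, for $K=H/N$, when it has no intermediate condensable algebras?

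Apply Theorem~\ref{thm:DS-cond-alg} to $\cZ(\Vec_K^\alpha)$. If $K$ had a proper nontrivial subgroup $L$, then $A(L,1,1,1)$ would be a non-unit, non-Lagrangian condensable algebra, since $L\neq 1$ and $L\neq K$. Hence $K\cong\Z_p$ with $p$ prime. The Lagrangian algebras are then $\cL(1,1)$, which always exists, and $\cL(\Z_p,\gamma)$, which exists precisely when $\alpha|_{\Z_p}$ is trivial, i.e.\ $\alpha=0$. In that case $\gamma$ is unique because $H^2(\Z_p,U(1))=0$. We must also exclude $A(\Z_p,1,1,\epsilon)$-type non-Lagrangian algebras; these coincide with the unit. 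By Lemma~\ref{lem:Morita_equiv}, $\cL(1,1)$ and $\cL(\Z_p,1)$ are non-isomorphic because they have different $H$.

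The count is therefore as follows: if $\alpha$ is trivial there are exactly two Lagrangians over $A$, and if $\alpha\neq0$ there is exactly one, with the reduced TO $\cZ(\Vec_{\Z_p}^\alpha)$ anomalous. Pulled back, these give the claimed one or two Lagrangians containing $A$.

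The main obstacle is justifying the correspondence between algebras over $A$ and algebras in the reduced TO, compatibly with the partial order as defined in Proposition~\ref{prop:partial-order}. The plan is to cite the general fact for étale algebras (Davydov–Müger–Nikshych–Ostrik). If needed, I would check it directly: the Lagrangians over $A$ are $\cL(N',\gamma')$ with $N\subseteq N'\subseteq H$ and $N'/N\in\{1,\Z_p\}$, and the obstruction to $\cL(H,\cdot)$ existing is exactly $\alpha$ as given by \eqref{eqn:twist_reduced_TO}.
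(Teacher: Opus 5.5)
Your main line is sound, and it takes a different route from the paper's.

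\textbf{How the two routes differ.} The paper stays inside $\cZ(\Vec_G^\omega)$ and argues with the partial order of Proposition~\ref{prop:partial-order}. First, $A\preceq\cL(N,\gamma)$ always holds. Second, any further Lagrangian $\cL(H',\gamma')\succeq A$ with $N\subsetneq H'$ yields the intermediate algebra $A(H',N,\gamma,\epsilon'|_{H'\times N})$, so sub-Lagrangianity forces $H'=H$. The dichotomy is then whether a compatible $\cL(H,\gamma')$ exists.

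You instead move the whole count into the reduced TO via the \'etale-algebra correspondence. This gives you several things the paper leaves implicit:
\begin{itemize}
\item the structural fact $H/N\cong\Z_p$;
\item uniqueness of the second Lagrangian through $H^2(\Z_p,U(1))=0$, whereas the paper neither excludes several compatible $\gamma'$ nor justifies that every Lagrangian over $A$ is reached by the two moves of Proposition~\ref{prop:partial-order};
\item the sharper statement ``exactly one iff $\alpha\neq0$'';
\item the correct obstruction, namely $\alpha$ rather than $\omega|_H$.
\end{itemize}
The last point matters. The paper's parenthetical, that the one-Lagrangian case forces $\omega|_H$ to be non-trivializable, is false. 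Take $G=H=\Z_4$, $N=\Z_2$, $\omega=1$, $\gamma=1$, $\epsilon(h,2)=(-1)^h$. Then $A\cong 1\oplus([2],\chi)$ with $\chi(1)=-1$. Its reduced TO is $\cZ(\Vec_{\Z_2}^{\alpha})$ with $\alpha\neq0$ (the double semion), and $A$ lies only in $\cL(\Z_2,1)$. Here $\cL(\Z_4,1)$ exists but does not contain $A$, even though $\omega|_H$ is trivial. What the paper's route offers in exchange is that it names the Lagrangians, $\cL(N,\gamma)$ and $\cL(H,\gamma')$, so their non-isomorphism follows at once from Lemma~\ref{lem:Morita_equiv}.

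\textbf{The step you need to tighten.} The correspondence you cite is a bijection between condensable algebras in $\cZ(\Vec_G^\omega)_A^{\mathrm{loc}}$ and condensable algebras $B$ in $\cZ(\Vec_G^\omega)$ \emph{together with} an embedding $A\hookrightarrow B$. Non-isomorphic algebras in the reduced TO can a priori have isomorphic underlying algebras upstairs, via inequivalent embeddings of $A$.
\begin{itemize}
\item This is harmless for ``one or two Lagrangians'' and for the characterization $H/N\cong\Z_p$, since those only use surjectivity and existence.
\item It is not harmless for ``only one Lagrangian $\Rightarrow$ anomalous''. There you must show that, when $\alpha=0$, the images of $\cL(1,1)$ and $\cL(\Z_p,\cdot)$ are non-isomorphic in $\cZ(\Vec_G^\omega)$. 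Their non-isomorphism in $\cZ(\Vec_{\Z_p})$ does not give this by itself.
\end{itemize}
Your fallback closes the gap. One image is $\cL(N,\gamma)$, which contains $A$ by Proposition~\ref{prop:partial-order}. If you identify the other, via the explicit equivalence of Theorem~\ref{thm:DS-reducedTO}, as some $\cL(H,\gamma')$, the two are non-isomorphic because $|N|\neq|H|$. Indeed, their trivially graded components already have different dimensions, $|G|/|N|\neq|G|/|H|$.
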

\begin{proof}
    Fix any non-Lagrangian algebra $A=A(H, N, \gamma, \eps)$ in $\cZ(\Vec_G^\omega)$. As a consequence of Proposition~\ref{prop:partial-order}, it is always a subalgebra of the Lagrangian algebra of the form $\cL(N, \gamma)$.  
    If $A$ is also a subalgebra of another Lagrangian algebra $\cL(H', \gamma')$ for some subgroup $H'\subset G$ such that $N \subsetneq H'$. We have $\gamma'|_N=\gamma$. Then by Proposition~\ref{prop:partial-order}, $A$ is also a subalgebra of $A(H', N, \gamma, \eps'|_{H'\times N})$, where $\eps'$ is fully determined by $\gamma'$ via \eqref{eps_from_gamma}. 
By assumption $A$ is a sub-Lagrangian algebra, and thus is not contained in any other non-Lagrangian algebra, either: 

\begin{itemize}[itemsep=3.5pt,topsep=2ex, parsep=2pt, partopsep=5pt,leftmargin=2.ex]
    \item $\cL(H', \gamma')$ does not exist, in which case $A$ is a subalgebra of a single Lagrangian algebra. In this case, $\omega$ is not trivializable on $H$ (otherwise $A(H',\gamma')$ exists for $H'=H$) and the anomaly $\alpha\in H^3(H/N, U(1))$ (expressed in terms of $\omega, \gamma, \eps$ in \eqref{eqn:twist_reduced_TO}) appearing in the reduced TO $\cZ(\Vec_{H/N}^{\alpha})$ is non-trivial. Or:
    \item If $\cL(H', \gamma')$ exists, then $H'=H$, $A$ is a subalgebra of two  Lagrangian algebras: $\cL(N, \gamma)$ and $\cL(H', \gamma')$.
\end{itemize}
\end{proof}
This means the bottom of the Hasse diagram -- comprised of the sub-Lagrangian and Lagrangian algebras  --  has only two types of connections:
\be
   \begin{tikzpicture}
   \begin{scope}
        \node[] (n1) at (0.00,0.00) {$\boxed{A(H, N, \gamma, \epsilon)}$};
        \node[] (n2) at (-1.5,-1.5) {$\boxed{\cL(N, \gamma)}$};
        \node[] (n3) at (1.5,-1.5) {$\boxed{\cL(H, \gamma')}$};
        \draw[->] (n1) -- (n2);
        \draw[->] (n1) -- (n3);
        \end{scope}
     \begin{scope}[shift= {(4,0)}]
        \node[] (n1) at (0.00,0.00) {$\boxed{A(H, N, \gamma, \epsilon)}$};
        \node[] (n2) at (0,-1.5) {$\boxed{\cL(N, \gamma)}$};
        \draw[->] (n1) -- (n2);
        \end{scope}      
    \end{tikzpicture}
\ee
If a sub-Lagrangian algebra is contained in two Lagrangians, it has an interpretation in terms of a direct phase transition between the gapped phases, that these two Lagrangian algebras represent.

If the sub-Lagrangian algebra is contained in only one Lagrangian algebra  then its reduced TO $\cZ(\Vec_{H/N}^{\alpha})$ has non-trivial anomaly $\alpha$ \eqref{eqn:twist_reduced_TO} and in fact it is possible to interpret this (with a particular choice of symmetry) as an igSPT \cite{Wen:2023otf,Bhardwaj:2024qrf}.  It only admits two condensable algebras: the trivial algebra and the Lagrangian algebra $\cL(1,1)$. Equivalently, $\Vec_{H/N}^{\alpha}$ only admits a single topological SSB phase. Hence if the sub-Lagrangian algebra of interest results in a gapless SPT with respect to some symmetry, then it is intrinsically gapless.

\section{Twin Algebras}
\label{sec:examples-of-twin-algs}

We discussed the comprehensive classification (and crucially, the identification up to isomorphism) of condensable algebras in $\cZ(\Vec_G^\omega)$. Also we defined the concept of twin algebras in \Cref{sec:TwinAlgebras}. These algebras are thus far only sporadically studied. 
Given the classification of algebras in terms of $(H, N, \gamma, \epsilon)$, we can divide them as follows -- most of these have to our knowledge not been discussed in the literature before:

\begin{enumerate}
\item  {\bf Twins with non-conjugate subgroups $H$.} Any pair of condensable algebras with the same anyon decomposition but whose subgroups are not conjugate, define a pair of twin algebras. One family consists of algebras with non-conjugate subgroups $H\subset G$, constructed from so-called Gassmann triples, which we study in \Cref{sec:Gassmann_Triple_twin_alg}. These new twin algebras can be either maximal or non-maximal (non-Lagrangian).  We call these {\bf twins of $H$-type}.

\item {\bf Twins with non-conjugate subgroups $N_i$.}
In turn we could have the same group $H$, but non-conjugate normal subgroups. Again, such examples have not been discussed to our knowledge. {For such twins, the subgroups $N_i$ necessarily also form Gassmann triple.} 
 We call these {\bf twins of $N$-type}.

\item {\bf Twins with inequivalent SPTs $\gamma$.}
Fixing the same $H, N$ but allowing $\gamma$ to differ, there can be two types of twin algebras: First type contains Lagrangian algebras, which have been previously considered, whose 2-cocycles $\gamma$ differ by a Bogomolov multiplier \cite{Pollmann:2012,Davydov:2013xov, Cong:2017ffh, Kobayashi:2025ykb,Kobayashi:2025pxs}. We review this in \Cref{sec:SPT-twins}. 
Secondly, we find new, non-maximal condensable algebras for anomalous symmetries whose 2-cocycles do not differ by a Bogomolov multiplier. We call these {\bf twins of $\gamma$-type}.

\item {\bf Twins with inequivalent $H$-action phase $\epsilon$.}
Finally, in \Cref{sec:twin_alg_epsilon}, we construct twin algebras that differ by $\epsilon$ but have the same $H,N,\gamma$. These new twin algebras are non-maximal. We call these {\bf twins of $\epsilon$-type}.
\end{enumerate}

We will present detailed examples for each family of twin algebras in \Cref{sec:DoubTroub}.

\subsection{Twins of $H$-Type}
\label{sec:Gassmann_Triple_twin_alg}

The first class of twin algebras results from a choice of non-conjugate subgroups $H$. Our construction relies on so-called Gassmann triples, which allow us to identify both Lagrangian and non-maximal condensable twin algebras. {In fact, we show that the subgroups for twin algebras necessarily form Gassmann triples.} Examples will be discussed in \Cref{sec:Ex-Htype}.

\subsubsection{General Considerations}

Consider a group $G$ and two non-conjugate subgroups $H_1$ and $H_2$ of $G$. 
Consequently by Lemma~\ref{lem:Morita_equiv} it follows that  $A(H_1,N_1,\gamma_1,\eps_1)$ and $A(H_2, N_2, \gamma_2, \eps_2)$ in $\cZ(\Vec_G^\omega)$ are not isomorphic condensable algebras. 
Here we are interested in the question when they have the same anyon decomposition, i.e. the same character~\eqref{eqn:character_cond_alg}.
{In general this condition can be satisfied in various ways. However, if $\omega=1$ then any Lagrangian algebra with trivial $\gamma$ arising from non-conjugate subgroups $H_1$ and $H_2$ has to come from a so-called Gassmann triple -- which we will discuss in the next section.}


{We find the following: 
if $N=1$  then with or without anomaly $\omega$, $H_i$ have to form a Gassmann triple, in order for the associated algebras be twins. 
For non-trivial $N$, then both $N$ and $H$ have to form Gassmann triples.}
We will provide more precise statements in the following, in particular Proposition \ref{prop:HNGassmann} and Proposition \ref{prop:Gassmann}.

\subsubsection{Gassmann Triple Twin Algebras} \label{sec:Gassman}

One approach to construct twin algebras from non-conjugate subgroups uses so-called Gassmann triples.
 The non-maximal ones provide examples of twins with possibly inequivalent reduced TOs, see \Cref{sec:example_Gassmann_Triple_different_reduced_TOs} for examples.

Let us review the basic notions of Gassmann triples first, which appeared in the study of arithmetically equivalent number fields in \cite{Gassmann:1926}. We will adopt the following definition \cite{Sunada1985RiemannianCA}.

\begin{definition}
    \label{def:GassmannTriple}
    Let $G$ be a finite group, $H_1,H_2\subset G$ two subgroups. The triple $(G, H_1, H_2)$ is called a \textbf{Gassmann triple} if 
    \be
        |H_1\cap [g]|=|H_2\cap [g]|
    \ee
    for all conjugacy classes $[g]\subset G$. A Gassmann triple $(G,H_1,H_2)$ is said to be \emph{non-trivial} if $H_1$ and $H_2$ are not conjugate to each other. 
\end{definition}
Equivalently, $(G, H_1, H_2)$ is a Gassmann triple if and only if $\Ind^G_{H_1}1\cong \Ind^G_{H_2}1$ as $G$-representations.
\begin{remark}
\label{rmk:Hallsmarriage}
    Note that for a Gassmann triple $(G,H_1,H_2)$, $H_1$ and $H_2$ are of the same order, hence the Hall's marriage theorem \cite{hall_combinatorial_1998} guarantees a common set $R$ of coset representatives for both $G/H_1$ and $G/H_2$ to exist.
\end{remark}

{Gassmann triples are closely related to twin algebras in $\cZ(\Vec_G^\omega)$. In fact, twin algebras necessarily carry information of Gassmann triples.
\begin{proposition}
\label{prop:HNGassmann}
    If two condensable algebras $A(H_i,N_i,\gamma_i,\epsilon_i)$, $i=1,2$, in $\cZ(\Vec_G^\omega)$ are twins, then $(G, H_1, H_2)$ and $(G, N_1, N_2)$ are necessarily (possibly trivial) Gassmann triples.
\end{proposition}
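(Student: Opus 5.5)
The plan is to compare characters. By Proposition~\ref{prop:anyon} and the orthogonality relation \eqref{eqn:orthogonality_character_Drinfeld_center_of_group}, two objects of $\cZ(\Vec_G^\omega)$ are isomorphic if and only if their characters agree on all commuting pairs $(f,g)$. So if $A_i:=A(H_i,N_i,\gamma_i,\epsilon_i)$, $i=1,2$, are twins, we have $\chi_{A_1}(f,g)=\chi_{A_2}(f,g)$ for all commuting $f,g$. I will evaluate the explicit formula \eqref{eqn:character_cond_alg} on two special families of pairs, $(1,g)$ and $(f,1)$. On these the phases drop out and only combinatorial data of $H_i$ and $N_i$ remain.

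\emph{Step 1: the pairs $(1,g)$.} Set $f=1$. Then ${}^{\bar y}f=1\in N_i$ always holds. I will use that $\omega$ is normalized, so that the transgression satisfies $\omega(a,b|1)=1$. I will also use that the compatibility conditions on $\epsilon$ (\Cref{append:math-details}) force $\epsilon(h,1)=1$; this can be arranged within the equivalence of Lemma~\ref{lem:DS-isom-gamma-epsilon} by normalizing $\gamma$ and $c$. With these, \eqref{eqn:character_cond_alg} becomes
\be
\chi_{A_i}(1,g)=\#\{y\in R_i \,:\, y^{-1}gy\in H_i\},
\ee
which is exactly the permutation character of $\Ind^G_{H_i}1$ evaluated at $g$. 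Equality of these for all $g$ gives $\Ind^G_{H_1}1\cong\Ind^G_{H_2}1$. Equivalently, $(G,H_1,H_2)$ is a Gassmann triple. In particular $|H_1|=|H_2|$.

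\emph{Step 2: the pairs $(f,1)$.} Set $g=1$. All phases are again trivial, and $\chi_{A_i}(f,1)=\dim (A_i)_f=\#\{yH_i\in G/H_i \,:\, y^{-1}fy\in N_i\}$. Summing over $f$ in a conjugacy class $[x]$ and exchanging the order of summation, each coset $yH_i$ contributes $|[x]\cap yN_iy^{-1}|=|[x]\cap N_i|$. This gives
\be
\sum_{f\in[x]}\chi_{A_i}(f,1)=[G:H_i]\,|[x]\cap N_i| .
\ee
Since $[G:H_1]=[G:H_2]$ by Step 1, equality of characters yields $|[x]\cap N_1|=|[x]\cap N_2|$ for every conjugacy class $[x]$. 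That is, $(G,N_1,N_2)$ is a Gassmann triple.

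The only step I expect to need care is the normalization claim in Step 1: that $\epsilon(h,1)=1$ and $\omega(a,b|1)=1$, so that no stray phases survive at $f=1$. I would check this directly from the defining compatibility condition of $\epsilon$ by specializing $n_1=n_2=1$, together with the normalization of $\omega$ and $\gamma$. Failing that, I would argue that $v_{g,1}$ spans the unit-graded part. It carries the permutation action because $\{v_{r,1}\}$ is the complete system of orthogonal idempotents, and the $G$-action must permute idempotents, so it cannot introduce phases.

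Both triples may of course be trivial. The argument does not use non-conjugacy anywhere.
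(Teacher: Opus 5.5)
Your proposal is correct and is essentially the paper's proof. You evaluate the character \eqref{eqn:character_cond_alg} on $(1,g)$ to obtain the permutation character of $\Ind^G_{H_i}1$, and on $(f,1)$ to obtain the $N_i$ data; your class-sum identity $\sum_{f\in[x]}\chi_{A_i}(f,1)=[G:H_i]\,|[x]\cap N_i|$ repackages the paper's $\chi_{A_i}(g,1)=\frac{|N_i|}{|H_i|}\chi_{\Ind^G_{N_i}1}(g)$ together with $|H_1|=|H_2|$. The normalization you flag is automatic: specializing \eqref{eqn:epsilon_relation_multiplication} to $n_1=n_2=1$ gives $\epsilon(h,1)=\Omega_h(1,1)^{-1}=1$ for normalized $\omega$, so no gauge choice is needed — and the gauge function $c$ could not change $\epsilon(h,1)$ anyway.
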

\begin{proof}
    Twin algebras have the same anyon decompositions, equivalently, they have the same characters. That is, $\chi_{A(H_1,N_1,\gamma_1,\epsilon_1)}(f,g)=\chi_{A(H_2,N_2,\gamma_2,\epsilon_2)}(f,g)$ for any commuting $f,g\in G$. 
    For $i=1,2$, fix $R_i$ a set of representative of $G/H_i$.
    Evaluate the character~\eqref{eqn:character_cond_alg} on $(1,g)$ for $g\in G$,
    \begin{equation}
        \begin{split}
            \chi_{A(H_i, N_i, \gamma_i, \epsilon_i)}(1, g) & = \sum_{\substack{y\in R_i\\ {}^{\bar{y}}g\in H_i}} 1 \\
            & = \frac{|[g]\cap H_i|\times |C_G(g)|}{|H_i|} \nn\\
            & = \chi_{\text{Ind}^G_{H_i}1}(g).
        \end{split}
    \end{equation}
    Requiring $\chi_{A(H_1, N_1, \gamma_1, \epsilon_1)}(1, g)=\chi_{A(H_2, N_2, \gamma_2, \epsilon_2)}(1, g)$ for all $g\in G$ is the same as $\Ind^G_{H_1}1\cong \Ind^G_{H_2}1$ as $G$-representations, hence $(G, H_1, H_2)$ forms a Gassmann triple.

    Similarly, on $(g,1)$, 
    \begin{equation}
        \begin{split}
            \chi_{A(H_i, N_i, \gamma_i, \epsilon_i)}(g, 1) & = \sum_{\substack{y\in R_i\\ {}^{\bar{y}}g\in N_i}} 1 \\
            & = \frac{|N_i|}{|H_i|} \chi_{\text{Ind}^G_{N_i}1}(g).
        \end{split}
    \end{equation}
    Being twin algebras, two algebras need to be isomorphic as $G$-graded vector spaces. This implies $|N_1|=|N_2|$ and hence $|H_1|=|H_2|$. Requiring $\chi_{A(H_1, N_1, \gamma_1, \epsilon_1)}(g, 1)=\chi_{A(H_2, N_2, \gamma_2, \epsilon_2)}(g, 1)$ for all $g\in G$ implies $\Ind^G_{N_1}1\cong \Ind^G_{N_2}1$ as $G$-representations. We conclude $(G, N_1, N_2)$ is a Gassmann triple.
\end{proof}
}

{The previous result says twin algebras are necessarily labeled by Gassmann triples. For the converse direction,} each non-trivial Gassmann triple $(G, H_1, H_2)$ defines a pair of non-maixmal twin algebras of the form $A(H_i,1,1,1)$ in $\cZ(\Vec_G^\omega)$, whose anyon decompositions are given by the induced representations\footnote{Given a representation $\rho$ of $H\subseteq G$, one can define the induced representation $\text{Ind}^G_H \rho = \bC[G]\otimes_{\bC[H]}\rho$ of $G$ \cite{serre_linear_1977}. In terms of simple $G$-representations $\{\pi_i\}_i$, $\text{Ind}^G_H \rho \cong \bigoplus_{i}n_i \pi_i$ with multiplicity $n_i\in \Z_{\geq 0}$ can be computed by the Frobenius reciprocity $n_i = \langle \pi_i, \text{Ind}^G_H \rho\rangle_G = \langle\text{Res}^G_H \pi_i, \rho\rangle_H$. Here $\text{Res}^G_H $ denotes the restriction of a $G$-representation to $H$.} $\text{Ind}^G_{H_i}1$, $i=1,2$. The precise meaning will be made explicit in the proof of the following proposition.

\begin{proposition}[Nonmaximal Gassmann-triple twin algebras]
\label{prop:Gassmann}
    Let $G$ be a finite group and $\omega\in H^3(G, U(1))$.
    Two non-maximal condensable algebras $A(H_1,1,1,1)$ and $A(H_2,1,1,1)$
    in $\cZ(\Vec_G^\omega)$ are twin algebras if and only if 
    $(G,H_1,H_2)$ is a non-trivial Gassmann triple. 
\end{proposition}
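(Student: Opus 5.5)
We prove the two directions separately. Both rest on the character formula \eqref{eqn:character_cond_alg} and the orthogonality relation \eqref{eqn:orthogonality_character_Drinfeld_center_of_group}: two objects of $\cZ(\Vec_G^\omega)$ are isomorphic if and only if their characters agree on all commuting pairs. Non-isomorphism of the algebras is handled by Lemma~\ref{lem:Morita_equiv}.

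\emph{Specializing the character.} For $N=1$, $\gamma=1$ and $\epsilon=1$, the condition ${}^{\bar y}f\in N$ in \eqref{eqn:character_cond_alg} forces $f=1$. The transgression $\omega(\cdot,\cdot|1)$ is trivial for a normalized cocycle, and $\epsilon=1$. Hence $\chi_{A(H,1,1,1)}(f,g)=0$ for $f\neq 1$, while
\be
\chi_{A(H,1,1,1)}(1,g)=\#\{y\in R: {}^{\bar y}g\in H\}=\chi_{\Ind^G_H 1}(g),
\ee
exactly as in the proof of Proposition~\ref{prop:HNGassmann}. The only point to check is that the transgression $\omega(x,y|1)$ equals $1$ for normalized $\omega$. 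This follows from the explicit formula \eqref{eqn:alpha_proj_action}, since every factor contains a $1$ entry. Up to that normalization convention, this is the main technical point of the argument. Consequently $A(H,1,1,1)$ is isomorphic as an object to the anyon $([1],\cdot)$ sum $\bigoplus_\rho \langle \rho,\Ind^G_H1\rangle\,([1],\rho)$, and the $\omega$-dependence disappears entirely.

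\emph{($\Leftarrow$)} Let $(G,H_1,H_2)$ be a non-trivial Gassmann triple. Then $\Ind^G_{H_1}1\cong\Ind^G_{H_2}1$, so the two characters agree on every commuting pair, and the algebras are isomorphic as objects. They are not isomorphic as algebras: by Lemma~\ref{lem:Morita_equiv}, an isomorphism would require $H_2={}^gH_1$ for some $g$, which contradicts non-triviality. Both algebras exist for any $\omega$, because $N=1$ imposes no trivialization condition on $\omega$. Both are non-maximal whenever $H_i\neq 1$, and $H_i\neq 1$ holds in any non-trivial triple, since the trivial subgroup is conjugate only to itself.

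\emph{($\Rightarrow$)} Suppose $A(H_1,1,1,1)$ and $A(H_2,1,1,1)$ are twins. Proposition~\ref{prop:HNGassmann}, or directly the computation above at $(1,g)$, shows that $(G,H_1,H_2)$ is a Gassmann triple. If $H_1$ and $H_2$ were conjugate, say $H_2={}^gH_1$, then Lemma~\ref{lem:Morita_equiv} would apply with $N=1$. The required conditions on $\gamma$ and $\epsilon$ are then vacuous or automatic: $\gamma'(1,1)=1=\gamma(1,1)/\Omega_g(1,1)$, and $\epsilon'({}^gh,1)=\Omega_g(1,h)/\Omega_g(h,1)=1$, using normalization of $\Omega_g$ from \eqref{eq:Omega_f}. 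So the algebras would be isomorphic, contradicting the twin assumption. Hence the triple is non-trivial.
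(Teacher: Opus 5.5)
Your proof is correct and follows the paper's argument. You specialize the character formula to $\chi_{A(H,1,1,1)}(f,g)=\delta_{f,1}\,\chi_{\Ind^G_H 1}(g)$, use orthogonality to identify isomorphism as objects with the Gassmann condition, and invoke Lemma~\ref{lem:Morita_equiv} to characterize isomorphism as algebras by conjugacy of $H_1,H_2$. You also make explicit two points the paper leaves implicit: the transgression phases at $f=1$ are trivial for normalized $\omega$, and for $N=1$ the $\gamma$- and $\epsilon$-conditions of the lemma hold automatically, so conjugate subgroups really do give isomorphic algebras.
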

\begin{proof}
    Let $H\subset G$ be any subgroup of $G$, fix a set $R$ of representative of coset $G/H$.
    For $A(H,1,1,1)$ in $\cZ(\Vec_G^\omega)$, $\omega$ assumed to be normalized, the $\omega$- and $\epsilon$-phases in its character $\chi_{A(H,1,1,1)}$ formula~\eqref{eqn:character_cond_alg} are trivial. $\chi_{A(H,1,1,1)}$ evaluates to the following expression which is closely related to $\chi_{\text{Ind}^G_{H}1}$, character of the induced $G$-representation $\text{Ind}^G_{H}1$:
    \begin{align}
        \chi_{A(H,1,1,1)}(f,g) & = \delta_{f,1} \sum_{\substack{y\in R,\\ {}^{\bar{y}}g\in H}}1 \nn\\
        & = \delta_{f,1} \frac{|[g]\cap H|\times |C_G(g)|}{|H|} \nn\\
        & = \delta_{f,1} \chi_{\text{Ind}^G_{H}1}(g),
    \end{align}
    where $f,g\in G$ are such that $fg=gf$. From definition of a Gassmann triple $(G, H_1, H_2)$, $\text{Ind}^G_{H_1}1\cong \text{Ind}^G_{H_2}1$. Hence, we conclude that $A(H_1,1,1,1)$ and $A(H_2,1,1,1)$ have the same anyon decomposition if and only if $(G, H_1, H_2)$ is a Gassmann triple. Explicitly, 
    \be
        A(H_i,1,1,1) \cong \bigoplus_{\rho \in \text{Irrep}(G)}\langle\rho, \text{Ind}^G_{H_i}1\rangle_G ([1],\rho),
    \ee
    for $i=1,2$, where $\langle\cdot, \cdot \rangle_G$ denotes the inner product of characters over $G$. 
    
    $A(H_1,1,1,1)$ and $A(H_2,1,1,1)$ with the same character are non-isomorphic if and only if $(G, H_1, H_2)$ is a non-trivial Gassmann triple.
\end{proof}

    Gassmann triples provide a rich source of examples of twin algebras. A summary of Gassmann triples $(G,H_1,H_2)$ with $|G/H_i|\leq 15$ can be found in \cite{Bosma:2002GassmannNF}.
    The smallest-rank example can be found for $\cZ(\Vec_{(\Z_2\times \Z_2) \ltimes \Z_8})$, which we study in detail in \Cref{sec:Gassmann_example}.

If a Gassmann triple $(G, H_1, H_2)$ has additional properties, they give rise to twin Lagrangian algebras $A(H_1,1)$ and $A(H_2,1)$ in $\cZ(\Vec_G)$:
\begin{proposition}[Gassmann-triple twin Lagrangians]
\label{prop:GassmannLagrangian}
    Two Lagrangian algebras $\cL(H_1,1)$ and $\cL(H_2,1)$ in $\cZ(\Vec_G)$ are twins if and only if the non-trivial Gassmann triple $(G, H_1, H_2)$ satisfies that for all abelian subgroup $A$ of $G$ generated by two elements, the number of $A$-fixed cosets
    \begin{equation}
    \label{eqn:fixed-cosets}
        |(G/H_1)^{A}|=|(G/H_2)^{A}|.
    \end{equation}
\end{proposition}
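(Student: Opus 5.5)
The plan is to compare characters, as in Propositions~\ref{prop:HNGassmann} and \ref{prop:Gassmann}. By the orthogonality relation \eqref{eqn:orthogonality_character_Drinfeld_center_of_group}, two objects of $\cZ(\Vec_G)$ are isomorphic if and only if their characters agree on all commuting pairs $(f,g)$. By Lemma~\ref{lem:Morita_equiv} with $\omega=1$ (so $\Omega_g=1$ and $\gamma=1$), the algebras $\cL(H_1,1)$ and $\cL(H_2,1)$ are isomorphic as algebras if and only if $H_1$ and $H_2$ are conjugate. The twin condition therefore amounts to two requirements: $H_1,H_2$ are non-conjugate, and $\chi_{\cL(H_1,1)}=\chi_{\cL(H_2,1)}$ on every commuting pair.

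First I compute the character explicitly. For $\omega=1$ and $\gamma=1$, the compatibility condition from the appendix (relation \eqref{eps_from_gamma}) gives $\epsilon\equiv 1$. Formula \eqref{eqn:character_cond_alg} then reduces to
\be
\chi_{\cL(H,1)}(f,g)=\#\{y\in R \,:\, {}^{\bar y}f\in H,\ {}^{\bar y}g\in H\}.
\ee
Reading ${}^{\bar y}x=y^{-1}xy$, the condition $y^{-1}fy\in H$ is equivalent to $f\,yH=yH$. So the count is the number of cosets $yH\in G/H$ fixed by both $f$ and $g$. This equals $|(G/H)^{A}|$ for the abelian subgroup $A=\langle f,g\rangle$. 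I will check that this count does not depend on the choice of representative $y$ within its coset, which holds because $H$ is closed under conjugation by its own elements.

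Second, I note that every abelian subgroup generated by two elements is $\langle f,g\rangle$ for some commuting pair, and every commuting pair generates such a subgroup. Equality of the two characters on all commuting pairs is therefore exactly condition \eqref{eqn:fixed-cosets}. The Gassmann condition is the special case $f=1$, i.e.\ cyclic $A$, which recovers $\chi_{\Ind^G_{H_i}1}(g)$ as computed in Proposition~\ref{prop:HNGassmann}. Both directions then follow. If the algebras are twins, the characters agree, giving \eqref{eqn:fixed-cosets}, and non-isomorphism forces non-conjugacy, so the Gassmann triple is non-trivial. Conversely, \eqref{eqn:fixed-cosets} gives equal characters and hence isomorphic objects, while non-triviality gives non-isomorphic algebras.

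The main obstacle is the first step: pinning down that $\epsilon$ is identically $1$ and that the transgression phases $\omega(\cdot,\cdot|f)$ vanish for trivial $\omega$, so that the character is a pure fixed-point count with no residual phases. I would verify this directly from \eqref{eqn:character_cond_alg} and the appendix formula for $\epsilon(\gamma)$, which for Lagrangian algebras with $\gamma=1$ should read $\epsilon(h,n)=\gamma(h,n)/\gamma({}^hn,h)=1$.
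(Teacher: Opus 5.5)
Your proposal is correct and follows essentially the same route as the paper: you reduce the character formula \eqref{eqn:character_cond_alg} to the fixed-coset count $|(G/H_i)^{\langle f,g\rangle}|$, recover the Gassmann condition from the cyclic case, and identify non-isomorphism with non-conjugacy via Lemma~\ref{lem:Morita_equiv}. Your explicit checks that $\epsilon\equiv 1$ and that the count does not depend on the choice of coset representative simply spell out steps the paper leaves implicit.
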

\begin{proof}
    Character formula~\eqref{eqn:character_cond_alg} for $\cL(H_i,1)$ in $\cZ(\Vec_G)$ simplifies to
    \begin{equation}
        \chi_{\cL(H_i,1)}(f,g)=\sum_{\substack{y\in G/H \\ f^{y},g^y\in H}}1
    \end{equation}
    for commuting pair $f,g\in G$,
    where the right hand side computes the number of cosets of $H_i$ fixed simultaneously by $f$ and $g$. That is,
    \begin{equation}
        \chi_{\cL(H_i,1)}(f,g)=|(G/H_i)^{\langle f, g\rangle}|\,.
    \end{equation}
    In particular, for $g=1$, $\chi_{\cL(H_i,1)}(f,1)=\Tr(\Ind^G_{H_i}1(f))$. Requiring $\chi_{\cL(H_1,1)}(f,1)=\chi_{\cL(H_2,1)}(f,1)$ for all $f\in G$ is equivalent to $(G, H_1, H_2)$ forming a Gassmann triple. Further requiring the characters to agree on any commuting pair $f,g\in G$ is equivalent to Gassmann triple $(G,H_1,H_2)$ satisfying
    \begin{equation}
        |(G/H_1)^{A}|=|(G/H_2)^{A}|\,,
    \end{equation}
    for all abelian subgroup $A$ of $G$ generated by two elements. $\cL(H_1,1)$ being non-isomorphic to $\cL(H_2,1)$ is equivalent to $(G, H_1, H_2)$ being a non-trivial Gassmann triple.
\end{proof}
\begin{remark}
    Condition~\eqref{eqn:fixed-cosets} can be formulated in terms of (Burnside) marks, see \cite{Pfeiffer01011997}. In this language, $\chi_{A(H,1)}(f,g)=|(G/H)^{\langle f, g\rangle}|$ is the number of cosets of $H$ fixed simultaneously by $f$ and $g$, which is the \emph{mark} of transitive $G$-set $G/H$ at $\langle f,g \rangle \subset G$. Condition~\eqref{eqn:fixed-cosets} requires $G/H_1$ and $G/H_2$ to have the same mark at any abelian subgroup of $G$ generated by two elements.
    Gassmann-triple twin Lagrangian algebras exist, e.g., for group $G=GL(2,3)$ or $G$ the Clifford group, which we study in detail in the companion paper \cite{WGS}.
\end{remark}

Recall from \eqref{eq:RedTO} that non-maximal Gassmann-triple twin algebras $A(H_1,1,1,1)$ and $A(H_2,1,1,1)$ in $\cZ(\Vec_G^\omega)$ 
are associated to potentially inequivalent reduced TOs $\cZ(\Vec_{H_1}^{\omega|_{H_1}})$ and $\cZ(\Vec_{H_2}^{\omega|_{H_2}})$, respectively. 
\be\label{GLAGR-Gassmann}
\begin{tikzpicture}
\begin{scope}[shift={(0,0)}]
\draw[SymTFTColores, fill= SymTFTColores, opacity = 0.2]   (0,0) -- (0,2) -- (2,2) -- (2,0) --(0,0); 
\draw[SymTFTColores, fill= SymTFTColores, opacity = 0.1]  (2,0) -- (2,2) -- (4,2) -- (4,0) --(2,0); 
\draw [very thick]  (2,0) -- (2,2); 
\node[above] at (2,2) {$\cI_{A(H_i,1,1,1)}$};
\node at (1,1) {$\cZ(G,\omega)$}; 
\node at (3,1) {$\cZ({H_i},{\omega|_{H_i}})$};
\end{scope}
\end{tikzpicture}
\ee
Such examples with inequivalent reduced TOs do exist, we study them in the next section.

\subsubsection{Example with Inequivalent Reduced TOs}
\label{sec:example_Gassmann_Triple_different_reduced_TOs}

Non-trivial Gassmann triples allow us to find examples of twin algebras associated to inequivalent reduced TOs:
\be
    \cZ(\Vec_{H_1}^{\omega|_{H_1}}) \not\cong \cZ(\Vec_{H_2}^{\omega|_{H_2}}) \,.
\ee
This is achieved using non-trivial Gassmann triples $(G,H_1,H_2)$ with $H_1\not\cong H_2$. Hence we conclude that the same collection of anyons can lead to potentially inequivalent\footnote{It is possible to have $\cZ(\Vec_{H_1}^{\omega|_{H_1}}) \cong \cZ(\Vec_{H_2}^{\omega|_{H_2}})$ for non-isomorphic groups $H_1 \not \cong H_2$, see \cite{Naidu2006CategoricalME}.} reduced TOs. 

\begin{example}
    The smallest example of a Gassmann triple $(G, H_1, H_2)$ with $H_1\not\cong H_2$ \cite[Remark 2.9]{Sutherland2021Stronger} has\footnote{The GAP ID of these groups are $G:=\text{SmallGroup}(384, 5755)$, $H_1:=\text{SmallGroup}(16,3)$ and $H_2:=\text{SmallGroup}(16,10)$.} 
\be\ba
G &=((\Z_4^2 \rtimes \Z_4) \rtimes \Z_2) \rtimes \Z_3\,, \cr 
H_1 &= (\Z_4\times \Z_2)\rtimes \Z_2\,,\cr 
H_2&= \Z_4\times\Z_2 \times \Z_2\,.
\ea\ee
The twin algebras $A(H_1,1,1,1)$ and $A(H_2,1,1,1)$ in $\cZ(\Vec_G)$, lead to inequivalent reduced TOs:
\be 
\ba
    \cZ(\Vec_{H_1}) & = \cZ(\Vec_{(\Z_4\times \Z_2)\rtimes \Z_2}) \,, \cr 
    \cZ(\Vec_{H_2}) & = \cZ(\Vec_{\Z_4\times\Z_2 \times \Z_2}) \,.
\ea
\ee
\end{example}

There is furthermore a family of Gassmann triples $(S_{p^3}, \Z_p^3, H_3(\Z_p))$ for any odd prime integer $p$, see \Cref{sec:Sp^3Gassmann}.

These twin algebras are necessarily not related to each other by any braided autoequivalences on $\cZ(\Vec_G)$. This is because two algebras related by braided autoequivalences results in equivalent reduced TOs\footnote{This statement can be checked by concatenation of commutative diagrams.}. This is in contrast to twin algebras studied in the literature constructed from Bogomolov-multiplier autoequivalence \cite{Pollmann:2012,Davydov:2013xov, Cong:2017ffh, Kobayashi:2025ykb,Kobayashi:2025pxs}, which we review in \Cref{sec:SPT-twins}.

\subsection{Twins of $N$-Type}

We can consider algebras, non-maximal ones where $H$ is the same, but $N_i$ are non-conjugate normal subgroups in $H$. Examples will be discussed in \Cref{sec:Ex-Ntype}.

{As a result of Prop.~\ref{prop:HNGassmann}, we know that if two condensable algebras $A(H, N_i,\gamma_i,\epsilon_i)$, $i=1,2$, in $\cZ(\Vec_G)$ are twins, then $(G,N_1,N_2)$ is necessarily a Gassmann triple.

However, not every Gassmann triple $(G, N_1, N_2)$ results in $A(H, N_i,\gamma_i, \epsilon_i)$ being twins for $i=1,2$. For example, consider  $H=G$, $\gamma_i=1$ and $\epsilon_i=1$, then $A(G, N_1,1,1)$ and $A(G, N_2, 1,1)$ have the same anyon decompositions if and only if $N_1=N_2$, i.e., they are isomorphic algebras. For subgroups $H\subsetneq G$, it is possible for $(G, N_1, N_2)$ to be non-trivial Gassmann triples and we will provides examples in \Cref{sec:Ex-Ntype}.}

\subsection{Twins of $\gamma$-Type}
\label{sec:SPT-twins}

We now turn to algebras that have the same subgroups $H$ and $N$, but differ in the cocycle $\gamma$, in the classification data of $A(H,N,\gamma,\eps)$. 
Examples will be discussed in \Cref{sec:Ex-ganmmatype}.

The anyon decomposition is derived from the $G$-action, equivalently by the character. It depends on $\gamma$ in the following way: a 2-cochain $\gamma$ on $N$ governs the algebra multiplication~\eqref{eqn:cond_alg_multiplication}. The $G$-action is required to be compatible with the multiplication, hence algebras with different $\gamma$ generally have different anyon decomposition. However, a $\gamma$ with special properties can result in non-isomorphic algebras with the same anyon decomposition.

Twin {\bf Lagrangian} algebras that we discuss here are closely related to ones previously studied for $\cL(G, \gamma_i)\in\cZ(\Vec_G)$, where the $\gamma$'s in the algebra data differ by a Bogomolov multiplier \cite{Pollmann:2012,Davydov:2013xov, Cong:2017ffh, Kobayashi:2025ykb,Kobayashi:2025pxs}. We generalize the previously studied twins slightly to $\cL(H, \gamma_i)\in \cZ(\Vec_G^\omega)$, allowing $H\neq G$ and non-trivial $\omega$. Moreover we found non-maximal $\gamma$-type twin algebras that are different from the Bogomolov-multiplier type ones in \Cref{sec:Ex-ganmmatype}.

Consider the following special 2-cocycle \cite{FABogomolov_1988, Davydov:2013xov}:
\begin{definition}
    Let $H$ be a finite group, a (representative of) $\beta\in H^2(H,U(1))$ satisfying
    \begin{equation}
    \label{eqn:Bogomolov_multiplier}
        \beta(h_1,h_2)=\beta(h_2,h_1)
    \end{equation}
    for any commuting pair $h_1,h_2\in H$ is called a \emph{Bogomolov multiplier of $H$}.
\end{definition}

\begin{proposition}[Bogomolov-multiplier twin Lagrangian algebras]
\label{prop:Bogomolovtwins}
    Two Lagrangian algebras $\cL(H,\gamma_1)$ and $\cL(H,\gamma_2)$ in $\cZ(\Vec_{G}^{\omega})$ such that 
    \begin{equation}
        \gamma_1=\gamma_2\beta
    \end{equation}
    for some non-trivial Bogomolov multiplier $\beta$ of $H$ are twin algebras.
\end{proposition}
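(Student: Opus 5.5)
Two things need to be shown: that $\cL(H,\gamma_1)\cong\cL(H,\gamma_2)$ as objects of $\cZ(\Vec_G^\omega)$, and that they are not isomorphic as algebras. For the object-level statement we compare characters via \eqref{eqn:character_cond_alg} and the orthogonality relation \eqref{eqn:orthogonality_character_Drinfeld_center_of_group}. For the algebra-level statement we use Lemma~\ref{lem:Morita_equiv}.

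For the characters, set $N=H$. Then \eqref{eqn:character_cond_alg} depends on $\gamma_i$ only through $\epsilon_i=\epsilon(\gamma_i)$, the $H$-action phase fixed by $\gamma_i$ (App.~\ref{append:math-details}). This phase is evaluated only on commuting pairs $({}^{\bar y}g,{}^{\bar y}f)$ with both entries in $H$. I expect \eqref{eps_from_gamma} to have the standard form
\be
\epsilon(h,n)=\frac{\gamma(h,n)}{\gamma({}^{h}n,h)}\times(\omega\text{-dependent factor}),
\ee
and the $\omega$-dependent factor is the same for $\gamma_1$ and $\gamma_2$ because $\omega$ is fixed. On a commuting pair we have ${}^hn=n$, so
\be
\frac{\epsilon_1(h,n)}{\epsilon_2(h,n)}=\frac{\beta(h,n)}{\beta(n,h)}=1
\ee
by \eqref{eqn:Bogomolov_multiplier}. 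Consequently $\chi_{\cL(H,\gamma_1)}=\chi_{\cL(H,\gamma_2)}$ on all commuting pairs, and both vanish on non-commuting pairs. Orthogonality then gives the same anyon multiplicities, so the two algebras are isomorphic as objects. The gauge freedom of Lemma~\ref{lem:DS-isom-gamma-epsilon} multiplies $\epsilon$ by $c({}^hn)/c(n)$, which is $1$ on commuting pairs, so the argument does not depend on the chosen representatives.

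For non-isomorphism, suppose the algebras were isomorphic. Lemma~\ref{lem:Morita_equiv} then provides $g\in G$ with ${}^gH=H$, i.e. $g\in N_G(H)$, and
\be
\gamma_2^g=\gamma_1/\Omega_g \quad\text{in } H^2(H,U(1)),
\ee
so $\beta$ would have to be cohomologous to $\gamma_1/(\gamma_1^{g^{-1}}\Omega)$ up to pulling back by $g$. For $g\in H$, inner conjugation acts trivially on $H^2(H,U(1))$, with the $\Omega_g$ correction absorbing the coboundary, so this case forces $\beta$ to be trivial, contradicting the hypothesis. For $g\in N_G(H)\setminus H$ this step can genuinely fail: an outer action may carry $\gamma_1$ to $\gamma_2$. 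That is the main obstacle.

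I would deal with it in one of two ways. The first is to read the statement with $\gamma_1,\gamma_2$ taken up to the $N_G(H)$-action, i.e. assume $\gamma_2\notin N_G(H)\cdot\gamma_1$, which is how "twin" should be read in light of Lemma~\ref{lem:Morita_equiv}. The second is to check whether the $\Omega_g$-twisted conjugation action on the $H^2(H,U(1))$-torsor can identify $\gamma_1$ with $\gamma_1\beta$. If $H=G$, or if $N_G(H)=H$, the obstacle disappears and the argument above is complete. In general I would state the non-isomorphism under the hypothesis that $\gamma_1$ and $\gamma_2$ lie in distinct $N_G(H)$-orbits, and note that this is automatic when $N_G(H)=H$.
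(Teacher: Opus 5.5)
Your character computation is the same as the paper's. With $N=H$, \eqref{eqn:character_cond_alg} depends on $\gamma_i$ only through $\epsilon_i$ on commuting pairs of $H$. There \eqref{eps_from_gamma} gives $\gamma_i(h,n)/\gamma_i(n,h)$, with no extra $\omega$-factor, so your hedge is unnecessary but harmless. The $\beta$-ratio is then $1$.

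For non-isomorphism you depart from the paper, and rightly so. The paper argues that, because $\omega|_H$ is trivializable, $\beta\neq\Omega_g|_H$ for all $g\in N_G(H)$, and then concludes by Lemma~\ref{lem:Morita_equiv}. But with $\gamma_1=\gamma_2\beta$, that lemma's criterion for isomorphism is $[\beta]=[\gamma_2^{g}\,\Omega_g/\gamma_2]$ in $H^2(H,U(1))$ for some $g\in N_G(H)$. The paper's comparison drops the factor $\gamma_2^{g}/\gamma_2$, which is exactly your obstacle.

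The obstacle is real, and the proposition as stated fails without an extra hypothesis. Here is a counterexample:
\begin{itemize}
\item Take a finite group $K$ with a non-trivial Bogomolov class $\beta_0$; such groups exist, e.g.\ of order $64$.
\item Set $H=K\times K$, $G=K\wr\Z_2$ with swap involution $\sigma$, and $\omega=1$.
\item Set $\gamma_2=p_1^*\beta_0$ and $\gamma_1=p_2^*\beta_0$.
\end{itemize}
Elements of $K\times K$ commute iff their components do, so $\beta=\gamma_1/\gamma_2$ is symmetric on commuting pairs. It restricts to $\beta_0^{-1}$ on $K\times 1$, so it is a non-trivial Bogomolov multiplier of $H$. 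Yet $\gamma_1=\gamma_2^{\sigma}$ and $\Omega_\sigma=1$. So Lemma~\ref{lem:Morita_equiv} with $g=\sigma$ gives $\cL(H,\gamma_1)\cong\cL(H,\gamma_2)$. Equivalently, $\cM(H,\gamma_2)\cong\cM(H,\gamma_2^{\sigma})$ as $\Vec_G$-module categories. These algebras are therefore not twins.

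Your version is the correct statement. It asserts non-isomorphism whenever $\gamma_1,\gamma_2$ lie in different orbits of the $\Omega$-twisted $N_G(H)$-action on the torsor. This is automatic if $N_G(H)=H$ (in particular $H=G$), or if $\gamma_2$ is fixed by that action (e.g.\ $\omega=1$, $\gamma_2=1$).

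Two small repairs are needed:
\begin{itemize}
\item State the criterion as $[\beta]=[\gamma_2^{g}\Omega_g/\gamma_2]$, not as ``$\gamma_2^g=\gamma_1/\Omega_g$ in $H^2(H,U(1))$''. The $\gamma_i$ are torsor elements, and only such ratios are cohomology classes. Your expression involving $\gamma_1^{g^{-1}}$ is garbled.
\item Justify that $g\in H$ acts trivially on the torsor. For instance, conjugation by the invertible degree-$g$ element of $A(H,\gamma)$ gives an algebra isomorphism $\ad_g A(H,\gamma)\cong A(H,\gamma)$ in $\Vec_G^\omega$. Hence $[\gamma^g\Omega_g/\gamma]=1$, and the action factors through $N_G(H)/H$.
\end{itemize}
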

\begin{proof}
    The existence of Lagrangian algebras $A(H,\gamma_i)$, $i=1,2$, in $\cZ(\Vec_G^\omega)$ requires $\omega|_H$ to be trivial, hence we further assume $\omega$ is trivialized on $H$. This guarantees that $\beta\neq \Omega_g|_H$ (recall $\Omega_g$ from equation~\eqref{eq:Omega_f} and Lemma~\ref{lem:Morita_equiv}) for any $g$ in the normalizer of $H$, hence that the two Lagrangian algebras are not isomorphic, by Lemma~\ref{lem:Morita_equiv}.

    The following computation (using formula~\eqref{eqn:character_cond_alg}) shows that these algebras have the same character hence are defined on the same anyons: 
    \begin{align}
        \chi_{A(H,\gamma_1\beta)}(f,g) & = \!\! \sum_{\substack{y\in R,\\ {}^{\bar{y}}f,{}^{\bar{y}}g\in H}} \!\! \frac{\omega({}^{\bar{y}}g, y^{-1} | f)}{\omega(y^{-1}, g|f)} \frac{\gamma_1({}^{\bar{y}}g, {}^{\bar{y}}f)}{\gamma_1({}^{\bar{y}}f,{}^{\bar{y}}g)}\frac{\beta({}^{\bar{y}}g, {}^{\bar{y}}f)}{\beta({}^{\bar{y}}f,{}^{\bar{y}}g)}\nn\\
        & = \!\! \sum_{\substack{y\in R,\\ {}^{\bar{y}}f,{}^{\bar{y}}g\in H}} \!\! \frac{\omega({}^{\bar{y}}g, y^{-1} | f)}{\omega(y^{-1}, g|f)} \frac{\gamma_1({}^{\bar{y}}g, {}^{\bar{y}}f)}{\gamma_1({}^{\bar{y}}f,{}^{\bar{y}}g)}\nn\\
        & = \chi_{A(H,\gamma_1)}(f,g)\,, 
    \end{align}
    for $f,g\in G$ such that $fg=gf$. 
\end{proof}

We comment that Bogomolov multipliers are used in \cite{Davydov:2013xov} to construct non-trivial autoequivalences on $\cZ(\Vec_G)$ that map each object back to itself. This is further used in \cite{Kobayashi:2025pxs, Kobayashi:2025ykb} to find twin algebras of the form $\cL(G,1)$ and $\cL(G,\gamma)$ in $\cZ(\Vec_G)$. (Similar structure has been noticed in \cite{Pollmann:2012}.) Such twin algebras are special cases of Proposition~\ref{prop:Bogomolovtwins} for $H=G$ and $\omega=1$.

The physical implication of the different multiplication structures of such twin algebras with $H=G$ in $\cZ(\Vec_G)$ was studied in \cite{Pollmann:2012,Kobayashi:2025pxs, Kobayashi:2025ykb}.
The corresponding twin phases $\cP^{\Vec_G}_{\cL(H,\gamma)}$ and $\cP^{\Vec_G}_{\cL(H,\gamma\beta)}$ can be distinguished from generalized string order parameters \cite{Pollmann:2012}.

\subsection{Twins of $\epsilon$-Type}
\label{sec:twin_alg_epsilon}

Finally, there are twins that differ only by the $\epsilon$ datum. 
These are necessarily non-maximal, since for all Lagrangian algebras, $\epsilon$ is uniquely fixed by $\gamma$. Examples will be discussed in \Cref{sec:Ex-epstype}.

We find the following twin algebras that differ in $\epsilon$:
\begin{proposition}
\label{prop:twins-epsilon}
    For non-isomorphic condensable algebras $A(G, N, \gamma, \epsilon_1)$ and $A(G, N, \gamma, \epsilon_2)$ in $\cZ(\Vec_G^\omega)$, i.e. when $\epsilon_{1}$ and $\epsilon_{2}$ do not satisfy \eqref{eqn:MoritaEq}, if 
    \be
    \label{eqn:twin_alg_epsilon_cond}
        \epsilon_1(g,n)=\epsilon_2(g,n)
    \ee
    for all $n\in N$ and $g\in C_G(n)$, then they are twins.
\end{proposition}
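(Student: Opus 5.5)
Since the two algebras are assumed non-isomorphic, it suffices by Definition~\ref{defn:twins} to show that they are isomorphic as objects of $\cZ(\Vec_G^\omega)$. By the orthogonality relation \eqref{eqn:orthogonality_character_Drinfeld_center_of_group} (as used in Proposition~\ref{prop:anyon}), two objects of $\cZ(\Vec_G^\omega)$ are isomorphic iff their characters agree on all commuting pairs $(f,g)$. The plan is therefore to compare $\chi_{A(G,N,\gamma,\epsilon_1)}$ and $\chi_{A(G,N,\gamma,\epsilon_2)}$ directly via formula \eqref{eqn:character_cond_alg}.

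With $H=G$, the coset space $G/H$ is a single point, so we may take $R=\{1\}$. Using normalization of $\omega$, the transgression factors $\omega(g,1|f)$ and $\omega(1,g|f)$ are trivial. The character formula \eqref{eqn:character_cond_alg} then collapses to
\be
\chi_{A(G,N,\gamma,\epsilon_i)}(f,g) = \delta_{f\in N}\,\epsilon_i(g,f)
\ee
for commuting $f,g\in G$, and it vanishes on non-commuting pairs. To be safe I will verify this against the definition of the character as a trace on the $f$-graded component, using the basis \eqref{eqn:basiselm}. With $R=\{1\}$ the basis is $\{v_{1,n}\}_{n\in N}$, and $v_{1,n}$ has degree $n$. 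The $G$-action \eqref{eqn:G-action_algebra_basis} combined with the relation \eqref{eq:v_equiv_rel} gives
\be
g\cdot v_{1,n}=\omega(g,1|n)\,v_{g,n}=\omega(g,1|n)\,\frac{\epsilon_i(g,n)}{\omega(1,g|n)}\,v_{1,{}^{g}n}.
\ee
Hence the $f$-graded component is one-dimensional when $f\in N$, and for $g\in C_G(f)$ the trace of $g$ on it is $\epsilon_i(g,f)$ up to normalized $\omega$-factors. Those factors are independent of $i$ in any case, so they cancel in the comparison.

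The comparison is now immediate. For any commuting pair $(f,g)$ with $f\in N$, we have $g\in C_G(f)$, so the hypothesis \eqref{eqn:twin_alg_epsilon_cond} gives $\epsilon_1(g,f)=\epsilon_2(g,f)$. The characters therefore coincide. When $f\notin N$, or when $f$ and $g$ do not commute, both characters vanish. Equal characters give equal multiplicities $n_{([x],\rho)}$ in Proposition~\ref{prop:anyon}, so the algebras are isomorphic as objects. Combined with the assumed non-isomorphism as algebras (Lemma~\ref{lem:Morita_equiv}), they are twins.

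The only delicate step is the bookkeeping of the $\omega$-transgression factors with $R=\{1\}$. I must make sure the factor multiplying $\epsilon_i(g,f)$ depends only on $(\omega,f,g)$ and not on $i$. This holds because $\epsilon$ enters \eqref{eqn:character_cond_alg} purely multiplicatively, and $\gamma$ is shared by the two algebras, so no $\gamma$-dependence can distinguish them.
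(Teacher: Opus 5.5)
Your proposal is correct and follows the paper's argument: the paper takes $R=\{1\}$, observes that the multiplicities in \eqref{eqn:scalar_prod_anyon_decom} (equivalently, by its remark, the characters in \eqref{eqn:character_cond_alg}) depend on $\epsilon_i$ only through $\epsilon_i(g,f)$ with $g\in C_G(f)$, and concludes that the two algebras have the same anyon decomposition. Your explicit basis check that the $\omega$-transgression factors are trivial for normalized $\omega$ is a harmless extra confirmation of the same computation.
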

\begin{proof}
    Equation~\eqref{eqn:scalar_prod_anyon_decom} implies the multiplicity $n_{([x], \rho)}$ of the simple anyon $([x],\rho)$ in the algebra $A(G, N, \gamma, \epsilon_i)$, $i=1,2$, is $0$ for $[x]\not\subset N$, and for $[x]\subset N$, the multiplicity $n_{([x], \rho)}$ evaluates to\footnote{For $H=G$, we fix the set of representative of $G/H$ to be $R=\{1\}$.}
    \be
        \sum_{\substack{f\in [x],\\ g\in C_G(f)}} \biggl\{ \frac{\ol{\Tr(\rho_{f}(g))}}{|G|} \epsilon_i(g,f)\biggr\}.
    \ee
    Note that because $\epsilon_1(g,n)=\epsilon_2(g,n)$ for any $n\in N$ and $g\in C_G(n)$, the above number is the same for $i=1,2$. This shows that the two algebras have the same anyon decomposition.

    Alternatively, one can observe directly from the character formula~\eqref{eqn:character_cond_alg} that \eqref{eqn:twin_alg_epsilon_cond} guarantees that $A(G, N, \gamma, \epsilon_1)$ and $A(G, N, \gamma, \epsilon_2)$ have the same character, hence the same anyon decomposition.
\end{proof}

The smallest rank example for such twins to exist is in TO $\cZ(\Vec_{(\Z_2\times \Z_2) \ltimes \Z_8})$, the same group $(\Z_2\times \Z_2) \ltimes \Z_8$ that we will discuss in detail in \Cref{sec:G32-43_examples} which is also the smallest group admitting a non-trivial Gassmann triple.

These twins can be detected using generalized string order parameters similar to the ones studied in \cite{Haegeman:2012pex, Pollmann:2012}. 
To this end, we recall the physical interpretation of the $H$-action phase $\epsilon$ studied in \cite{Wen:2023otf}, which measures the charge of the localized edge action. More explicitly, an $N$-symmetry action $U_n$ splits into edge actions $U_n=U_n^L \otimes U_n^R$. Introduce notation $\wt{U}_n:=U_n^L$. The edge $\wt{U}_n$ of $U_n$ transforms under $U_h$, for $h\in H$, as
\be
    U_h^{\phantom{-1}}\!\!\!\! \wt{U}_n^{\phantom{-1}}\!\!\!\! U_h^{-1} = \epsilon(h, n) \wt{U}_{{}^hn}.
\ee
Note that if $h$ and $n$ commute and \eqref{eqn:twin_alg_epsilon_cond} holds then, 
\be \label{eq:Ucomm}
    U_{h}^{\phantom{-1}}\!\!\!\! \wt{U}_{n}^{\phantom{-1}}\!\!\!\! U_{h}^{-1} \wt{U}_{n}^{-1} = \epsilon_1(h, n) = \epsilon_2(h, n),
\ee
hence if $\eps_i$'s satisfy equation~\eqref{eqn:twin_alg_epsilon_cond}, the commutator on the left hand side of \eqref{eq:Ucomm} does not detect the difference between the above twin algebras. 

As commented in \cite{Wen:2023otf}, the $\gamma$ and $\epsilon$ data can be used as invariants for a gSPT phase, however the charges defined by $\epsilon$ on commuting group elements do not determine the phase completely. Our new examples presented in the current section provide gSPTs of this type for a group of order 32.

To distinguish two gSPT twin phases $\cP^{\Vec_G}_{A_i}$, $i=1,2$, where $A_1=A(H, N, \gamma, \epsilon_1)$ and $A_2=A(H, N, \gamma, \epsilon_2)$ are twin algebras satisfying conditions in Proposition~\ref{prop:twins-epsilon}, one needs generalized string order parameters of the form 
\be
\label{eqn:generalized_string_order_par}
    U_{h_1}^{\phantom{-1}}\!\!\wt{U}_{n_1}^{\phantom{-1}}\!\!U_{h_1}^{-1}\wt{U}_{n_1}^{-1}\cdots U_{h_k}^{\phantom{-1}}\!\!\wt{U}_{n_k}^{\phantom{-1}}\!\!U_{h_k}^{-1}\wt{U}_{n_k}^{-1} = \prod_{i}^{k}\epsilon(h_i, n_i)\id\,.
\ee
for $h_i\in H$, $n_i\in N$, $i\in \{1, \cdots, k\}$, $k\in \mathbb{Z}_{\geq1}$, such that $h_1^{\phantom{-1}}\!\!\!\! n_1^{\phantom{-1}}\!\!\!\! h_1^{-1} n_1^{-1}\cdots h_k^{\phantom{-1}}\!\!\!\! n_k^{\phantom{-1}}\!\!\!\! h_k^{-1} n_k^{-1}=1$, guaranteeing that the product~\eqref{eqn:generalized_string_order_par} is proportional to the identity operator, with gauge-invariant coefficient, hence can be used to detect the different twin algebras. We illustrate this with examples in \Cref{sec:twinexample_epsilon}.

\section{Twin Phases and Physical Implications}
\label{sec:TwinPhys}

There are several general implications of twin algebras when plugged into the SymTFT paradigm. We will first discuss these in generality, before considering concrete examples. 

\subsection{Twin Phases}
\label{sec:twin-phase}

\begin{definition}[Twin gapped phases.]
Let $\cL_i$, $i=1,2$ be twin Lagrangian algebras for any SymTFT $\cZ (\cS)$ for a fusion category $\cS$. The interval compactification of the SymTFT with $\cL^\phys = \cL_i$ and a fixed symmetry boundary $\cL_\cS^\sym$ give rise to gapped phases $\cP_{\cL_i}^{\cS}$ with symmetry $\cS$, which are called \emph{twin gapped phases}. 
\end{definition}
The schematic picture is as follows:
\be\begin{split}
\begin{tikzpicture}
\begin{scope}[shift={(0,0)}]
\draw [SymTFTColores, fill= SymTFTColores, opacity = 0.2] 
(0,0) -- (0,2) -- (2,2) -- (2,0) -- (0,0) ; 
\draw [white] (0,0) -- (0,2) -- (2,2) -- (2,0) -- (0,0)  ; 
\draw [very thick] (0,0) -- (0,2) ;
\draw [very thick] (2,0) -- (2,2) ;
\draw [ thick] (0,0.8) -- (2,0.8) ;
\fill[] (0,0.8) circle (0.05cm);
\fill[] (2,0.8) circle (0.05cm);
\node[above] at (1,0.8) {$a$};
\node[above] at  (1,1.3)  {$\cZ(\cS)$} ;
\node[above] at (0,2) {$\cL^\sym_{\cS}$}; 
\node[above] at (2.2,2) {$\cL^\phys_{i=1,2}$}; 
\end{scope}
\begin{scope}[shift={(4,0)}]
\node at (-1,1) {$=$} ;
\draw [very thick] (0,0) -- (0,2) ;
\fill[] (0,0.8) circle (0.05cm);
\node[right]at  (0,0.8) {$\cO_{a}^{(i)}$};
\node[above] at (0.1,2) {$\cP_{\cL_{i=1,2}}^{\cS}$}; 
\end{scope}
\end{tikzpicture}\end{split}
\ee
In particular, the inequivalence of the Lagrangians implies that the resulting phases are inequivalent: One way to argue this is that the algebra of order parameters is different. We will discuss this in the next subsection. 
An alternative way to phrase this is, that the associated module categories of the symmetry $\cS$ are inequivalent -- this uses the identification of gapped phases with module categories. 

We can argue this for group-theoretical symmetries, i.e. $\cS$ such that $\cZ (\cS) =\cZ(\Vec_G^\omega)$. We focus on two non-conjugate subgroups $H_1$ and $H_2$ that are preserved in the gapped phase $\cP^{\Vec_G^{\omega}}_{\cL(H_i,1)}$, $i=1,2$ (this will be the situation for several twin phases and thus instructive to recall). 
For each phase, we can take the ground state basis of local operators to be\footnote{Here, for each object $a$ in $\cZ(\Vec_G^\omega)$ and for $g\in G$, we identify the vector space $\Hom_{\Vec_G^\omega}(F_{A(1,1)}(a),\bC_g)$ with the $g$-graded vector space component of $a$.} $\{v_{r,1}\}_{r\in R}$, following notation from \eqref{eqn:basiselm}, where $R$ is a common set of representatives of left cosets $G/H_i$ for $i=1,2$ (whose existence is guaranteed, see Remark~\ref{rmk:Hallsmarriage}).
From the multiplication structure \eqref{eqn:cond_alg_multiplication} the $v_{r,1}$ are idempotents 
\be
    v_{r,1}v_{r',1}=\delta_{rH_i,r'H_i}\;v_{r,1}\,,
\ee
and thus physically correspond to projectors onto orthogonal vacua (also called universes in the gapless setting).
On these idempotents, it is explicit to check that subgroups conjugate to $H_i$ act trivially on these bases: this follows from the $G$-action \eqref{eqn:G-action_algebra_basis} and relation \eqref{eq:v_equiv_rel} 
\be\ba
    (rh_ir^{-1})\cdot v^i_{r,1}=v^i_{rh_i,1}=v^i_{r,1}\,,
\ea\ee
for all $h_i\in H_i$. This shows that in the vacuum defined by $v^i_{r,1}$ the symmetry $rH_ir^{-1}$ is preserved. Conversely, for $k_i\notin H_i$, one has
\be\ba
    (rk_ir^{-1})\cdot v^i_{r,1}=v^i_{rk_i,1}\neq v^i_{r,1}\,,
\ea\ee
since $rk_i\notin rH_i$. This shows that group elements that are not in $rH_ir^{-1}$ act non-trivially on $v_{r,1}^i$: the symmetries they generate are therefore spontaneously broken therein. In particular the phases associated to Lagrangian algebras $\cL (H_1, 1)$ and $\cL (H_2, 1)$ with $H_1$ and $H_2$ non-conjugate in $G$ are not related by a continuous path of symmetric Hamiltonians without closing the gap, i.e. are inequivalent gapped phases. We can argue similarly for projective representations, i.e. in the presence of $\gamma$. For more general symmetries that may not be group-theoretical, one can argue that the algebra of order parameters is different, which we will discuss in the next subsection. 

\vspace{2mm}
\noindent{\bf Gapless Twin Phases.}
We can furthermore study interfaces and associated gapless phases, that are twins. We will discuss these in the examples more concretely. 
In particular we find that $\epsilon$-type twin algebras, which necessarily are non-Lagrangian algebras, give rise to twin gapless SPT (gSPT) phases. In order to distinguish such gapless phases, we need to determine a generalized string order parameter, which does not rely on the anyons, but is detected from the action of the symmetry on the order parameters. 

\subsection{The Algebra of Order Parameters}

One way to distinguish twin phases is to consider the operator product algebra of order parameters (OPs), see e.g., \cite{Chavda:2026mpz}. In fact the definition of twins as algebras that are isomorphic as objects, i.e. arising from the same generalized charges, but being not isomorphic as algebras in $\cZ(\cS)$ (multiplicative structure), indicates this already. 
Here we will first discuss the algebra structure on condensable algebras in some more detail before applying these to twins. It turns out that in order to study the algebra of OPs, it is in fact entirely unnecessary to consider anyons.

Algebra structure of condensable algebras in $\cZ(\Vec_G^\omega)$ is determined by viewing them as $G$-graded vector spaces \eqref{eqn:cond_alg_multiplication}. This translates to multiplication on the vector space of local operators in a $\cS$-symmetric phase $\cP^{\cS}_{A}$, for any group-theoretical fusion category $\cS$.

The space of local operators in $\cP^{\cS}_{\cL_\phys}$ can be recovered from the algebras $\cL_\sym$ and $\cL_\phys$ in the following way. The symmetry Lagrangian algebra for $\cS$, denoted $\cL_\cS=\cL_\sym$, in $\cZ(\cS)$ singles out a monoidal forgetful functor\footnote{This can be obtained from the free module functor $-\otimes \cL_{\cS}$ \cite{davydov2013witt}.} 
\begin{equation}
    F_{\cL_\cS}: \cZ(\cS) \rightarrow \cS,
\end{equation}
which projects the anyon lines in the TO to the symmetry boundary $\Bsym$. For each symmetry defect $X$ in $\cS$, the space of $X$-twisted local operators in $\cP_{\cL_\phys}^{\cS}$ is
\be
    (V^{\cS}_{\cL_\phys})_{X}:=\Hom_{\cS}\bigl(F_{\cL_\cS}(\cL_\phys), X\bigr).
\ee

For a simple anyon $a$ in $\cL_\phys$, the $\cS$ symmetry action on an $X$-twisted operator $\cO\in \Hom_{\cS}(F_{A_\cS}(a),X)\subset (V^{\cS}_{\cL_\phys})_{X}$ depends only on\footnote{See, e.g., \cite{Bhardwaj:2023idu, Bhardwaj:2023ayw}. This statement is further made explicit in terms of representations of tube categories in \cite{Bartsch:2026wqq}.} $a$. 
Hence the space of $X$-twisted operators can be further decomposed with respect to the $\cS$ symmetry action\footnote{Equivalently, decompose with respect to generalized charge/tube representation \cite{Bartsch:2022mpm, Bartsch:2023wvv}.} into
\begin{equation}
    (V^{\cS}_{\cL_\phys})_{X} \cong \bigoplus_{a\in \Irr(\cZ(\cS))} (V^{\cS}_{\cL_\phys})_{a, X}
\end{equation}
where $(V^{\cS}_{\cL_\phys})_{a, X}:=n_{\cL_\phys,a}\Hom_{\cS}(F_{\cL_{\cS}}(a),X)$.
In particular, the space of local operators (i.e., local operators in untwisted sector) in $\cP_{\cL_\phys}^{\cS}$ is
\begin{equation}
    \begin{split}
        (V^{\cS}_{\cL_\phys})_{1} & =\Hom_{\cS}\bigl(F_{\cL_\cS}(\cL_\phys), 1\bigr)\\
        & \cong \bigoplus_{a\in \Irr(\cZ(\cS))}(V^\cS_{\cL_\phys})_{a,1}\,.
    \end{split}
\end{equation}

Multiplication on the space of local operators can be recovered from the coalgebra structure on $\cL_\phys=A$. Let 
$F^2_{\cL_\sym}$ denote the monoidal structure of the forgetful functor $F_{\cL_\sym}$. For $X,Y$ in $\cC$, two local operators
\begin{align}
    \cO_{1} & \in \Hom_{\cC}(F_{\cL_{\sym}}(A), X),\\
    \cO_{2} & \in \Hom_{\cC}(F_{\cL_{\sym}}(A), Y),
\end{align}
in the $X$- and $Y$-twisted sectors, respectively, 
can be multiplied to a local operator
\begin{equation}
    (\cO_{1}\otimes\cO_{2}) \circ (F^2_{\cL_\sym})_{A, A}^{-1}\circ F(\Delta),
\end{equation}
in the $X\otimes Y$-twisted sector $\Hom_{\cC}(F_{\cL_{\sym}}(A), X\otimes Y)$, where $\Delta:A\rightarrow A\otimes A$ is the comultiplication.

For $\cS=\Vec_G^\omega$, the symmetry algebra is $\cL(1,1)$ and the forgetful functor $F_{\cL(1,1)}:\cZ(\Vec_G^\omega)\rightarrow \Vec_G^\omega$ forgets the group action. In this case we can identify the vector space of $g$-twisted operator $\Hom_{\cS}(F_{\cL(1,1)}(A),g)$ with the $g$-graded component of $A$, and the multiplication on the vector space of $g$-twisted operators is given by \eqref{eqn:cond_alg_multiplication}.

\subsection{No Relative SSB and non-Landau Transitions}
\label{sec:no-rel-SSB}

One crucial implication of twin gapped phases is that irrespective of the symmetry boundary the symmetry breaking patterns for the phases $\cP_i$ are the same. In particular, twin phases cannot be distinguished by relative spontaneous symmetry breaking (SSBing). 
Here we define {\bf no relative SSBing} as the property of two gapped phases, which for any choice of gauge related (Morita equivalent) symmetry category, have the same symmetry breaking pattern. In particular they both have the same number of vacua.   

Naturally, phases with no relative SSBing are perfect candidates for {\bf phase transitions with no hidden symmetry breaking}\footnote{Hidden symmetry breaking occurs when two topological $\cC$-symmetric phases are indistinguishable by local operators, but their gauged phases are distinguishable by symmetry breaking patterns. E.g., this is the case between the $\Z_2\times \Z_2$-symmetric trivial and non-trivial SPT phases \cite{Kennedy:1992ifl}.}, although in order for there to be a second order transition, we require some additional structure (e.g. an anomaly). This is discussed in the companion paper \cite{WGS}. 

\begin{proposition}[No relative SSBing]
\label{prop:no-hidden-SB}
    Let $\cL_1$ and $\cL_2$ be Lagrangian algebras in $\cZ(\cC)$. $\cP^{\cC}_{\cL_1}$ and $\cP^{\cC}_{\cL_2}$ have no relative SSBing
    if and only if $\cL_1 \cong \cL_2$ as objects in $\cZ(\cC)$, i.e. they are twin Lagrangian algebras.
    
    In particular, for any symmetry $\cS$ that is Morita equivalent to $\cC$ the phases $\cP^{\cS}_{\cL_1}$ and $\cP^{\cS}_{\cL_2}$ have the same symmetry breaking pattern. 
\end{proposition}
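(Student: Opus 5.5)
The plan is to make ``symmetry breaking pattern'' precise in SymTFT language and then reduce both directions to statements about objects of $\cZ(\cC)$. For a fixed symmetry boundary $\cL_\cS$, the space of untwisted local operators of $\cP^{\cS}_{\cL}$ is $(V^{\cS}_{\cL})_1=\Hom_{\cS}(F_{\cL_\cS}(\cL),1)\cong \Hom_{\cZ(\cC)}(\cL,\cL_\cS)$ by adjunction between the forgetful functor and the free module functor $-\otimes\cL_\cS$. The number of vacua is $\dim\Hom_{\cZ(\cC)}(\cL,\cL_\cS)=\sum_a n_{\cL,a}\,n_{\cL_\cS,a}$. More generally, the twisted sectors $(V^\cS_{\cL})_{a,X}=n_{\cL,a}\Hom_\cS(F_{\cL_\cS}(a),X)$ record the full pattern, i.e.\ which generalized charges acquire vevs and how the symmetry acts on the vacua. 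I will take ``same symmetry breaking pattern'' to mean these $\cS$-representation data (equivalently, the tube-algebra modules of the order parameters) agree for every $\cS$ Morita equivalent to $\cC$.

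For the direction ``$\Leftarrow$'': if $\cL_1\cong\cL_2$ as objects, then $n_{\cL_1,a}=n_{\cL_2,a}$ for all simple $a$. Hence, for every choice of $\cL_\cS$, the decomposition $(V^\cS_{\cL_i})_X\cong\bigoplus_a (V^\cS_{\cL_i})_{a,X}$ is the same for $i=1,2$. The symmetry action on each summand depends only on $a$, as recalled in the previous subsection. So the spaces of (twisted) order parameters, together with their $\cS$-action, coincide for $i=1,2$; in particular the numbers of vacua agree. Since Morita equivalent $\cS$ correspond exactly to varying $\cL_\cS$ over all Lagrangians of $\cZ(\cC)$, this gives the ``in particular'' clause at once. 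The multiplication may differ, but it does not enter the breaking pattern.

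For the direction ``$\Rightarrow$'', I will use only vacuum counting. Suppose that for every Lagrangian $\cL_\cS$ we have $\langle\cL_1,\cL_\cS\rangle=\langle\cL_2,\cL_\cS\rangle$, where $\langle\cdot,\cdot\rangle$ is the Hom-dimension. I then need $\cL_1\cong\cL_2$. The natural first step is to take $\cL_\cS=\cL_1$ and $\cL_\cS=\cL_2$. This gives $\langle\cL_1,\cL_1\rangle=\langle\cL_2,\cL_1\rangle=\langle\cL_2,\cL_2\rangle$, so
\[
\|\cL_1-\cL_2\|^2=\langle\cL_1,\cL_1\rangle+\langle\cL_2,\cL_2\rangle-2\langle\cL_1,\cL_2\rangle=0
\]
in the Grothendieck group with the Hom pairing. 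Because simple anyons are orthonormal for this pairing, all multiplicities agree, and therefore $\cL_1\cong\cL_2$ as objects. This is the key trick; it avoids having to show that the Lagrangians span a large enough subspace. Finally, the ``twin'' phrasing requires $\cL_1\not\cong\cL_2$ as algebras, which is implicitly assumed since the two phases are distinct. Otherwise the statement holds trivially.

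The main obstacle is definitional rather than computational: ``no relative SSB'' must be pinned down so that the forward direction uses only a weak invariant (vacuum counts) and the backward direction delivers the full pattern. For the full-pattern claim I will lean on the cited fact that the action on $\Hom_\cS(F(a),X)$ depends only on $a$, rather than reproving it from tube categories.
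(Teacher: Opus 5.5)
Your proof is correct. The reverse direction matches the paper's: equal multiplicities $n_{\cL_1,a}=n_{\cL_2,a}$ give isomorphic charge-resolved operator spaces $(V^{\cS}_{\cL_i})_{a,X}$ in every frame, and the $\cS$-action on each depends only on $a$.

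The forward direction takes a different route. The paper stays in a single frame $\cC_1$, whose symmetry boundary is $\cL_1$ itself. It picks an anyon $a$ with $n_{\cL_1,a}\neq n_{\cL_2,a}$ and $n_{\cL_1,a}\geq 1$. It then notes that the $a$-isotypic untwisted sectors $(V^{\cC_1}_{\cL_i})_{a,1}$ have dimensions $n_{\cL_1,a}\,n_{\cL_i,a}$, which differ. So it detects relative SSB through the generalized-charge content of the order parameters.

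You use only total vacuum counts $\langle\cL_i,\cL_\cS\rangle$, but in two frames, $\cL_\cS=\cL_1$ and $\cL_\cS=\cL_2$, and you conclude via $\sum_a(n_{\cL_1,a}-n_{\cL_2,a})^2=0$. This needs only the weakest consequence of no relative SSB, namely equal numbers of vacua, which the paper's definition lists explicitly. Your argument is therefore independent of how the finer notion of ``symmetry breaking pattern'' is formalized.

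The price is the second frame, and it is genuinely needed when you count only vacua. Take $\cZ(\Vec_{S_3})$ with the $\Rep(S_3)$ boundary $\cL(S_3,1)=([1],1)\oplus([(12)],1)\oplus([(123)],1)$. The non-twin phases for $\cL(S_3,1)$ and $\cL(\Z_3,1)=([1],1)\oplus([1],\mathrm{sgn})\oplus 2([(123)],1)$ both have $3$ vacua in that frame. They are separated only in the frame $\cL(\Z_3,1)$, where they have $3$ and $6$ vacua respectively. The paper can work in one frame precisely because it uses the charge-resolved invariant rather than the total count.
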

\begin{proof}
    Assuming $\cL_1\not\cong \cL_2$ as objects in $\cZ(\cC)$, then there exists $a\in \cL_1$ such that 
    \be
    n_{\cL_1,a}\neq n_{\cL_2,a}.
    \ee
    Let $\cC_1$ denote the fusion category corresponding to $\cL_1$, then the two subspaces 
    \be
        (V_{\cL_1}^{\cC_1})_{a,1} \not\cong (V_{\cL_2}^{\cC_1})_{a,1}
    \ee
    are inequivalent, showing that $\cP^{\cC_1}_{\cL_1}$ and $\cP^{\cC_1}_{\cL_2}$ admit distinct symmetry breaking patterns. We conclude either $\cP^{\cC}_{\cL_1}$ and $\cP^{\cC}_{\cL_2}$ have distinct symmetry breaking pattern, or if they have the same symmetry breaking pattern, they can be gauged to $\cP^{\cC_1}_{\cL_1}$ and $\cP^{\cC_1}_{\cL_2}$ with different symmetry breaking pattern, hence admitting relative SSBing. 
    For any $\cS$ symmetry dual to $\cC$ and $a\in \Irr(\cZ(\cC))$, $\cP^{\cS}_{\cL_1}$ and $\cP^{\cS}_{\cL_2}$ have isomorphic spaces $(V_{\cL_1}^{{\cS}})_{a,1}\cong (V_{\cL_2}^{{\cS}})_{a,1}$ of local operators, with respect to the $\cS$ symmetry. Therefore there is no relative SSBing between $\cP^{\cC_1}_{\cL_1}$ and $\cP^{\cC_1}_{\cL_2}$ as they admit the same symmetry breaking pattern for all symmetry categories related to $\cC$ by gauging. 
\end{proof}

There is a bijection between Lagrangian algebras in $\cZ(\cC)$ and indecomposable $\cC$-module categories \cite{davydov2013witt}. Denote by  $\cM_{\cL}^{\cC}$ the $\cC$-module category corresponding to Lagrangian algebra $\cL$ in $\cZ(\cC)$. 
Let $\cC$ be a fusion category. The topological $\cC$-symmetric phases $\cP^{\cC}_{\cL}$ are in bijection to $\cC$-module categories $\cM_{\cL}^{\cC}$ \cite{Thorngren:2019iar}. Phase $\cP^{\cC}_{\cL}$ has vector space of vacua of dimension $\rank \cM_{\cL}^{\cC}$.
Hence the twin topological phases defined above have the following properties:

    If $\cL_1$ and $\cL_2$ are twin Lagrangian algebras in $\cZ(\cC)$, then
    \begin{equation}
        \rank \cM_{\cL_1}^{\cS} =\rank \cM_{\cL_2}^{\cS}
    \end{equation}
    for any fusion category $\cS$ Morita dual\footnote{See Appendix~\ref{sec:induced_functor_on_Morita_dual} for the definition of Morita duals.} to $\cC$.

\section{Examples of Twin Algebras} 
\label{sec:DoubTroub}

We now give examples of twin algebras of all the different types we studied in general in section \ref{sec:examples-of-twin-algs}. We focus on twins for groups of order $< 64$ as those will not fall into the category of known twins arising from Bogomolnov multipliers. We will consider group with and without anomalies. The search was done using GAP.
A complete list of twins in $\cZ (\Vec_{G})$ for $|G|\leq 48$ is provided in \Cref{app:TwinLists}.

\subsection{Examples of $H$-type Twins}
\label{sec:Ex-Htype}

As we explained these $H$-type twins arise from non-conjugate subgroups. For groups without anomalies, by Proposition \ref{prop:Gassmann}, they come from Gassmann triples. 

\subsubsection{$G=(\Z_2 \times \Z_2) \ltimes \Z_8$ with and without $\omega$}
A systematic scan with GAP reveals, that the smallest order of a group $G$ for which twin algebras occur in $\cZ(\Vec_G^\omega)$ is $32$.
The SmallGroup(32,43) is given by 
\be\label{G32Def}
    G:=(\Z_2 \times \Z_2) \ltimes \Z_8 = \Z_8^*\ltimes \Z_8  \,.
\ee
We will discuss some of the properties of the twins phases for this group in \Cref{sec:G32-43_examples}. We list all the Gassman triple twins for this symmetry (and any Morita equivalent symmetry) in \Cref{tab:twinsG32-43} without anomaly and \Cref{tab:TwinG3243omega} with $\omega$.

\widetext{\onecolumngrid 
\begin{center}
\begin{table}
\begin{minipage}{\textwidth}
$$
\begin{array}{|c|c|c|}
\hline 
\makecell{\text{Twin Algebras in } \cZ(\Vec_{G_{48,29}}) \\ G_{48,29} = GL(2,3) = \langle h, c, a, (xz), (iz) \rangle } &  \makecell{\text{Reduced} \\ \text{TO}}  & \makecell{\text{Algebra} \\ \text{Inequivalence}}   \\
\hline \hline 
A(S_3^{(1)}=\langle a, h \rangle, 1, 1, 1)  &
\cZ(\Vec_{S_3})  & \multirow{3}{*}{$ H'\neq{}^g\!H$} \\
A(S_3^{(2)}=\langle a,hc\rangle, 1, 1, 1)  &
\cZ(\Vec_{S_3})  & \\
([1],1)\oplus([1],\rho_3)\oplus([1],\rho_4)  &  &  \\
\hline

A(D_{12}=\langle a,h,hc \rangle, S_3^{(1)}=\langle a, h \rangle, 1, 1)  &
\cZ(\Vec_{\Z_2})  & \multirow{3}{*}{$ N'\neq{}^g\!N$} \\
A(D_{12}=\langle a,h,hc \rangle, S_3^{(2)}=\langle a,hc\rangle, 1, 1)  &
\cZ(\Vec_{\Z_2})  & \\
([1],1)\oplus([1],\rho_3)\oplus([h],1_{++})\oplus([a],1_{++})  &  &  \\
\hline

A(D_{12}=\langle a,h,hc \rangle, S_3^{(1)}=\langle a, h \rangle, 1, \eps_1)  &
\cZ(\Vec_{\Z_2})  & \multirow{3}{*}{$ N'\neq{}^g\!N$} \\
A(D_{12}=\langle a,h,hc \rangle, S_3^{(2)}=\langle a,hc\rangle, 1, \eps_2)  &
\cZ(\Vec_{\Z_2})  & \\
([1],1)\oplus([1],\rho_3)\oplus([h],1_{+-})\oplus([a],1_{++})  &  &  \\
\hline

A(S_3^{(1)}=\langle a, h \rangle, \Z_3=\langle a\rangle, 1, 1)  &
\cZ(\Vec_{\Z_2})  & \multirow{3}{*}{$ H'\neq{}^g\!H$} \\
A(S_3^{(2)}=\langle a,hc\rangle, \Z_3=\langle a\rangle, 1, 1)  &
\cZ(\Vec_{\Z_2})  & \\
([1],1)\oplus([1],\rho_3)\oplus([1],\rho_4)\oplus([a],1_{++})\oplus([a],1_{+-})  &  &  \\
\hline

A(S_3^{(1)}=\langle a, h \rangle, S_3^{(1)}=\langle a, h \rangle, 1, 1)  & \text{Trivial}    & \multirow{3}{*}{$ H'\neq{}^g\!H$} \\
A(S_3^{(2)}=\langle a,hc\rangle, S_3^{(2)}=\langle a,hc\rangle, 1, 1)  & \text{Trivial}    & \\
([1],1)\oplus([1],\rho_3)\oplus([1],\rho_4)\oplus([h],1_{++})\oplus([h],1_{+-})\oplus([a],1_{++})\oplus([a],1_{+-})  &  &  \\
\hline
\end{array}
$$
\caption{The complete list of twin algebras in $\cZ(\Vec_{GL(2,3)})$. $(G, S_3^{(1)}, S_3^{(2)})$ is a non-trivial Gassmann triple. For the epsilon data, $\eps_1(\cdot,hc)$ is the $\Z_2^{\langle hc\rangle}\times\Z_2^{\langle c\rangle}$ representation that maps $hc\mapsto1,c\mapsto-1$, while $\eps_2(\cdot,h)$ maps $h\mapsto1,c\mapsto-1$. For the character table of $GL(2,3)$ and anyon notation, see \cite{WGS}.
\label{tab:twinsGL32}}
\end{minipage}
\end{table}
\end{center}}

{\onecolumngrid
\begin{center}
\begin{table}
\begin{minipage}{\textwidth}
$$
\begin{array}{|c|c|c|c|}
\hline 
\makecell{\text{Twin Algebras in } \cZ(\Vec_{G_{32,43}}) \\ G_{32,43}=(\Z_2 \times \Z_2) \ltimes \Z_8 } &  \makecell{\text{Reduced} \\ \text{TO}}  & \makecell{\text{Algebra} \\ \text{Inequivalence}}  & 
\text{Section}\\
\hline \hline 
\makecell{A(G, \langle g_1\rangle, 1, 1) =: A_1^{\langle g_1\rangle}\\ 
A(G, \langle g_1\rangle, 1, \epsilon^{\langle g_1\rangle}_2) =: A_2^{\langle g_1\rangle} \\
([1],1)\oplus ([g_1^4],1) \oplus ([g_1^2], 1) \oplus ([g_1],1)
} & \makecell{\cZ(\Vec_{\Z_2\times \Z_2}) \\ \cZ(\Vec_{\Z_2\times \Z_2}) \\ { }} & \makecell{\eps_1^{g}\neq\eps_2 } & \makecell{\text{\Cref{sec:twinexample_epsilon}}}\\
\hline
\makecell{
A(G, \langle g_1 g_3 \rangle, 1, 1) =: A_1^{\langle g_1 g_3 \rangle} \\
A(G, \langle g_1 g_3 \rangle, 1, \epsilon^{\langle g_1 g_3 \rangle}_2) =: A_2^{\langle g_1 g_3 \rangle} \\
([1],1)\oplus ([g_1^4],1) \oplus ([g_1^2], 1) \oplus ([g_1g_3],1)
} & \makecell{\cZ(\Vec_{\Z_2\times \Z_2}) \\ \cZ(\Vec_{\Z_2\times \Z_2}) \\ { }} & \makecell{\eps_1^{g}\neq\eps_2} & \makecell{\text{\Cref{sec:twinexample_epsilon}}} \\
\hline
\makecell{ 
A(H_1, 1, 1, 1)  =: $\nameref{alg:4}$ \\
A(H_2, 1, 1, 1) =: $\nameref{alg:5}$\\
([1],1)\oplus([1],r_1)\oplus([1],r_8)\oplus([1],r_{10})}
 & \makecell{\cZ(\Vec_{\Z_2\times \Z_2})\\\cZ(\Vec_{\Z_2\times \Z_2})\\{ }}& \makecell{\text{Gassmann Triple:} \\ H_1\neq{}^g\!H_2 \\ { }} & \\
\hline
\end{array}
$$
\caption{Table of the complete list of twin algebras in $\cZ(\Vec_{(\Z_2 \times \Z_2) \ltimes \Z_8 })$. For the first two pairs of twin algebras, $g_1, g_3\in G$ satisfy equation~\eqref{eqn:generatorsG32-43}, and $\epsilon^{\langle g_1\rangle}_2, \, \epsilon^{\langle g_1 g_3\rangle}_2$ are defined in equations \eqref{eqn:G32,43esp_2^g1} and \eqref{eqn:G32,43esp_2^g1g3}, respectively.
For the last pair of twin algebras, $(G, H_1, H_2)$ is a non-trivial Gassmann triple, where the two subgroups $H_1\cong H_2\cong \Z_2\times \Z_2$ are defined in (\ref{eqn:HinG32-43}). Here, $r_i$, $i\in \{1,8,10\}$ denotes irreducible $(\Z_2 \times \Z_2) \ltimes \Z_8$-representations whose characters can be found in Table~\ref{tab:G3243chars}. 
\label{tab:twinsG32-43}}
\end{minipage}
\end{table}
\end{center}

{\onecolumngrid
\begin{center}
\begin{table}
\begin{minipage}{\textwidth}
$$
\begin{array}{|c|c|c|}
\hline 
\makecell{\text{Twin Algebras in } \cZ(\Vec_{G_{32,43}}^{\omega}) \\ G_{32,43}=(\Z_2 \times \Z_2) \ltimes \Z_8 } &  \makecell{\text{Reduced} \\ \text{TO}}  & \makecell{\text{Algebra} \\ \text{Inequivalence}}   \\
\hline \hline 
A(H_1, 1, 1, 1)=\text{\nameref{Aom:4}}  &
\cZ(\Vec_{\Z_2 \times \Z_2})  & \text{Gassmann Triple:} \\
A(H_2, 1, 1, 1)=\text{\nameref{Aom:5}}  &
\cZ(\Vec_{\Z_2 \times \Z_2})  &  H_1\neq{}^g\!H_2 \\
([1],1) \oplus ([1],r_1) \oplus ([1],r_8) \oplus ([1],r_{10})  &  &  \\
\hline

A(H_1, \langle g_2g_3\rangle, 1, 1)=\text{\nameref{Aom:6}}   &
\cZ(\Vec_{\Z_2})  &  \text{Gassmann Triple:}  \\
A(H_2, \langle g_2g_3\rangle, 1, 1)=\text{\nameref{Aom:11}}  &
\cZ(\Vec_{\Z_2})  & H_1\neq{}^g\!H_2 \\
([1],1) \oplus ([1],r_1) \oplus ([1],r_8) \oplus ([1],r_{10}) \oplus ([g_2g_3],2)  &  &  \\
\hline

A(H_1, \langle g_2\rangle, 1, 1)=\text{\nameref{Aom:8}}  &
\cZ(\Vec_{\Z_2})  & \text{Gassmann Triple:}  \\
A(H_2, \langle g_2g_1^4\rangle, 1, 1)=\text{\nameref{Aom:9}}  &
\cZ(\Vec_{\Z_2})  & H_1\neq{}^g\!H_2  \\
([1],1) \oplus ([1],r_1) \oplus ([1],r_8) \oplus ([1],r_{10}) \oplus ([g_2],2)  &  &  \\
\hline

A(H_1, \langle g_3\rangle, 1, 1)=\text{\nameref{Aom:7}}  &
\cZ(\Vec_{\Z_2})  & \text{Gassmann Triple:}  \\
A(H_2, \langle g_3g_1^4\rangle, 1, 1)=\text{\nameref{Aom:10}}  &
\cZ(\Vec_{\Z_2})  & H_1\neq{}^g\!H_2  \\
([1],1) \oplus ([1],r_1) \oplus ([1],r_8) \oplus ([1],r_{10})&  &  \\
 \oplus ([g_3],2\;g_1^6=-\sqrt{2}\zeta_{16}) \oplus ([g_3],2\;g_1^6=\sqrt{2}\zeta_{16})  &  &  \\
\hline

A(H_1, H_1, 1, 1)=\text{\nameref{Aom:15}}  
& \text{Trivial}  & \text{Gassmann Triple:} \\
A(H_2, H_2, 1, 1)=\text{\nameref{Aom:16}} 
& \text{Trivial}  & H_1\neq{}^g\!H_2 \\
([1],1) \oplus ([1],r_1) \oplus ([1],r_8) \oplus ([1],r_{10}) \oplus ([g_2g_3],2) \oplus ([g_2],2)& & \\ \oplus ([g_3],2\;g_1^6=-\sqrt{2}\zeta_{16}) \oplus ([g_3],2\;g_1^6=\sqrt{2}\zeta_{16})  &  &  \\
\hline
\end{array}
$$
\caption{Table of twin algebras in $\cZ(\Vec^{\omega}_{(\Z_2 \times \Z_2) \ltimes \Z_8 })$. The two subgroups $H_1\cong H_2\cong \Z_2\times \Z_2$ defined in (\ref{eqn:HinG32-43}) are such that $(G, H_1, H_2)$ form a non-trivial Gassmann triple. The Hasse diagram for these algebras is in Figure~\ref{fig:Hasse3243_om}. \label{tab:TwinG3243omega}}
\end{minipage}
\end{table}
\end{center}
}
\twocolumngrid}

\twocolumngrid

\subsubsection{$G=GL(2,3)$ with and without $\omega$}
This group is the main focus of the companion paper 
\cite{WGS}. It is the general linear group on two by two matrices 
\be 
G=GL(2, 3)= Q_8 \rtimes S_3 
\ee 
over the field  $\mathbb{F}_3$. The key property is that it has two non-conjugate subgroups, that are isomophic as groups (permutation groups), but non-conjugate in $G$ 
\be
S_3^{(1)} \not \sim S_3^{(2)} \,.
\ee
We find that there are Gassmann triple algebras for both $G$ and $G$ with anomaly. The details are provided in Tab.~\ref{tab:twinsGL32} for $G$ without anomaly. With anomaly, the condensable algebras are fewer, and the twins are:
\be
\begin{array}{|c|c|c|} \hline
\text{Twin 1} & \text{Twin 2} & \text{Reduced TO} \cr \hline\hline 
A(S_3^{(1)},1,1,1) & A(S_3^{(2)},1,1,1) & D({S_3}) \cr \hline
A(S_3^{(1)},\Z_3,1,1) & A(S_3^{(2)},\Z_3,1,1) & D(\Z_2) \cr \hline 
A(S_3^{(1)},S_3^{(1)},1,1) &A(S_3^{(2)},S_3^{(2)},1,1) & \text{Trivial} \cr \hline
\end{array}
\ee
We study the physical implications of the twin gapped phases in \cite{WGS}, where in the presence of a mixed-anomaly, we note that the transition between the twin gapped phases is beyond-Landau without hidden symmetry breaking.

\subsubsection{$G=S_{p^3}$} 
\label{sec:Sp^3Gassmann}

    There is a family of non-trivial Gassmann triples $(G,H_1,H_2)$, indexed by an odd prime, with $H_1$ abelian and $H_2$ non-abelian \cite{Komatsu:1976adelering}. For each odd prime number $p$, $G=S_{p^3}$ denotes the symmetric group. Consider 
    \be
    H_1\cong(\Z_{p})^3\,,\qquad 
    H_2\cong \bm{H}_3(\Z_{p}) \,,
    \ee
    where $\bm{H}_3$ is the Heisenberg group, i.e., the group of $3\times 3$ upper triangular matrices over $\Z_{p}$, with diagonal entries being 1. Regarding both groups as subgroups of $S_{p^3}$, where both groups act on themselves as a set of cardinality $p^3$, via left multiplication, $(S_{p^3}, (\Z_{p})^3, \bm{H}_3(\Z_{p}))$ is a non-trivial Gassmann triple. Hence the condensable algebras $A((\Z_{p})^3,1,1,1)$ and $A(\bm{H}_3(\Z_{p}),1,1,1)$ in $\cZ(\Vec_{S_{p^3}})$ are twins, but their condensations result in inequivalent reduced TOs, as discussed in \Cref{sec:example_Gassmann_Triple_different_reduced_TOs}:
    \be 
    \ba
        \cZ(\Vec_{H_1}) & = \cZ(\Vec_{(\Z_{p})^3}) \cr 
        \cZ(\Vec_{H_2}) & = \cZ(\Vec_{\bm{H}_3(\Z_{p})}) \,.
    \ea 
    \ee

\subsection{Examples of $N$-type Twins}
\label{sec:Ex-Ntype}

The group $GL(2,3)$ with trivial anomaly $\omega=1$ also provides examples of $N$-type twins: 
\be
A(D_{12}, S_3^{(1)}, 1, 1) \,,\qquad 
A(D_{12}, S_3^{(2)}, 1, 1)
\ee
are twins that have the same $H$, but non-conjugate $N$. 

They have the anyon decomposition
\begin{equation}
    ([1],1)\oplus([1],\rho_3)\oplus([h],1_{++})\oplus([a],1_{++})
\end{equation}

Another group admitting twins of $N$-type is 
\be
    G_{48,49} = \Z_2^2 \times A_4 = \langle f_1, f_2, f_3, f_4, f_5 \rangle
\ee
for the algebras
\be\ba
    A(\Z_2^4=\langle f_1,f_2,f_4,f_5\rangle, \Z_2^2=\langle f_1 f_4,f_2 f_5\rangle, 1, 1)\,,\\
    A(\Z_2^4=\langle f_1,f_2,f_4,f_5\rangle, \Z_2^2=\langle f_4 f_1 f_5,f_2 f_5\rangle, 1, 1)
\ea\ee
with anyon decomposition
\be\ba
    A& \cong([1],1)\oplus ([1],1 f_3=e^{4\pi i/3})\oplus([1],1 f_3=e^{2\pi i/3})\\
    & \quad \oplus([f_1 f_4],1)\oplus([f_2 f_4],1)\oplus([f_1 f_2 f_4],1) \,.
\ea\ee

\subsection{Examples of $\gamma$-type Twins}
\label{sec:Ex-ganmmatype}

The algebras that differ by the SPT $\gamma$ and $H=G$ inside $\cZ(\Vec_G)$ were studied previously in \cite{Pollmann:2012,Kobayashi:2025pxs, Kobayashi:2025ykb}.

Performing a computer search using GAP \cite{GAP4,HAP,mignard2017moritaequivalencepointedfusion,gruen2021computing} we find the following example of twin algebras that are not maximal, and differ in $\gamma$ not by a Bogomolov multiplier. 
The example exists for 
\be
G_{32,6} = (\Z_2 \times  \Z_2 \times \Z_2) \rtimes \Z_4 \,.
\ee
with a non-trivial anomaly $\omega$ provided in the data file \texttt{G32-6-omega.g}. 
The subgroups 
\be
\ba
H&= \Z_2 \times  \Z_2 \times \Z_2=\langle f_2,f_3,f_5\rangle\cr 
N&= \Z_2 \times \Z_2=\langle f_2f_3,f_3f_5\rangle \,,
\ea
\ee
where $f_i$ are the various $\Z_2$ generators,
define two inequivalent non-maximal algebras $A(H, N, 1, \eps)$ and $A(H, N, \gamma', \eps')$.
Here, $\gamma'$ is the standard non-trivial 2-cocycle on $\Z_2\times\Z_2$ and where 
\be
 \gamma^{\prime g} \Omega_g \neq 1 \,,
\ee
which implies that the algebras are non-isomorphic, where $\Omega_g$ is defined in equation~\eqref{eq:Omega_f}. Both reduced TO of either algebra is equivalent to $\cZ (\Vec_{\Z_2})$. 
These algebras are twins, decompose into simple anyons as: 
\be
\ba
&([1],1) \oplus ([1],f_1=i) \oplus ([1], f_1=-1) \cr
& \oplus ([1],f_1=-i) \oplus ([f_2],f_5=-1) \cr 
&\oplus ([f_2],f_3=-1 ,f_5=-1) \oplus ([f_3],2)  \,,
\ea
\ee
where we use the standard $([g],R)$ notation for anyons in group theoretical TOs, denoting the representation with value of its character on elements.

\subsection{Examples of $\epsilon$-type Twins}
\label{sec:Ex-epstype}

There are twins that only differ in $\epsilon$. Examples arise again in the group (\ref{G32Def}), and are shown in \Cref{tab:twinsG32-43}. Details of these algebras were studied in \Cref{sec:twin_alg_epsilon}.

\subsection{All Twins in $\cZ(\Vec_G)$ up to $|G|\leq48$}
\label{app:TwinLists}

Performing a systematic search with GAP, we find that twin algebras occur in $\cZ(\Vec_G)$ for the groups listed in Tab.~\ref{tab:twins_sweep_no_omega}.

\begin{table}[H]
    \centering
    \begin{tabular}{|l|l|}
    \hline
    \text{Group $G$ with twin algebras in $\cZ(\Vec_G)$} & \text{\;\;\;Type of twins} \\
    \hline
    $G_{32,43}=\Z_8 \rtimes  (\Z_2 \times \Z_2)$ & \makecell[l]{Non-Lag\quad$H'\neq{}^g\!H$ \\ Non-Lag\quad$\eps^{\prime g}\neq\eps$} \\
    \hline 
    $G_{32,44}=(\Z_2 \times Q_8) \rtimes  \Z_2$ & Non-Lag\quad$\eps^{\prime g}\neq\eps$ \\
    \hline
    $G_{48,29}=GL(2,3)$ & \makecell[l]{Non-Lag\quad$H'\neq{}^g\!H$ \\ Non-Lag\quad$N'\neq{}^g\!N$ \\\phantom{Non-}Lag\quad$H'\neq{}^g\!H$} \\
    \hline
    $G_{48,31}=\Z_4 \times A_4$ &  Non-Lag\quad$\eps^{\prime g}\neq\eps$ \\
    \hline
    $G_{48,32}=\Z_2 \times SL(2,3)$ & Non-Lag\quad$\eps^{\prime g}\neq\eps$ \\
    \hline
    $G_{48,33}=((\Z_4 \times \Z_2) \rtimes  \Z_2) \rtimes  \Z_3$ & Non-Lag\quad$\eps^{\prime g}\neq\eps$ \\
    \hline
    $G_{48,49}=\Z_2 \times \Z_2 \times A_4$ & \makecell[l]{Non-Lag\quad$H'\neq{}^g\!H$ \\ Non-Lag\quad$N'\neq{}^g\!N$ \\ Non-Lag\quad$\eps^{\prime g}\neq\eps$} \\
    \hline
    \end{tabular}
    \caption{Groups of order $|G|\leq 48$ for which twin algebras occur in $\cZ(\Vec_G)$ and their type.}
    \label{tab:twins_sweep_no_omega}
\end{table}

\section{Twin Algebras and Phases for $G_{32}=(\Z_2 \times \Z_2) \ltimes \Z_8$}
\label{sec:G32-43_examples}

\subsection{Properties of $\cZ(G_{32})$}

Let us return now to the smallest group $G$ for which $\cZ(\Vec_G^\omega)$ has twins. It has order $32$ and is given by\footnote{This group has GAP ID $(32,43)$ and can be defined in GAP as $G:=\text{SmallGroup}(32, 43)$.} 
\be
    G_{32}:=(\Z_2 \times \Z_2) \ltimes \Z_8 = \Z_8^*\ltimes \Z_8  \,.
\ee
Here, $\Z_8^*=\{1,3,5,7\}\cong \Z_2\times \Z_2$ is the group of units of the ring $\Z_8$, in terms of which the group multiplication is defined by $(a,b)(c,d)=(ac,bc+d)$, for $a,c\in \Z_8^*$ and $b,d\in \Z_8$. We fix generators of this group to be 
\be
\label{eqn:generatorsG32-43}
    g_1=(1,1), \quad g_2 = (3,0), \quad g_3 = (5,0).
\ee
In terms of these generators,
\be
G_{32}=(\Z_2^{(g_2)} \times \Z_2^{(g_3)}) \ltimes \Z_8^{(g_1)}\,,\\
\ee
with conjugation action (denoting ${}^{g}\!f = gfg^{-1}$)
\be
{}^{g_2}\!g_1=(g_1)^3\,,\quad {}^{g_3}\!g_1=(g_1)^5\,.
\ee
The group $G_{32}$ has character table shown in Table~\ref{tab:G3243chars}.

The TO $\cZ(\Vec_{G_{32}})$ admits three pairs of twin algebras (all non-maximal) as shown in Table~\ref{tab:twinsG32-43}. 
Each algebra defines a map of anyons, which can be found in Appendix~\ref{app:all-maps-of-anyons}. These include examples with non-conjugate $(H, N)$ data, as well as ones that differ in their $H$-action phase $\epsilon$ data and can be detected using generalized string order parameters.

We also study twins in $\cZ(\Vec_{G_{32}}^\omega)$ with 
\be \omega \in H^3(G_{32}, U(1)) = \Z_2^5 \times \Z_8 \,.
\ee
The twins are summarized in
 in table \ref{tab:TwinG3243omega}. 
Their phases are particularly interesting as they provide non-Landau DQCP type transitions -- thanks to the anomaly.  
  Here, $\omega$ is a representative of a non-trivial 3-cocycle computed with GAP \cite{GAP4,HAP,mignard2017moritaequivalencepointedfusion}, which we provide in \texttt{G32-43-omega.g}. We choose it to have the following property
  \be
\omega_{\Z_2^3} = \omega_{III} \,,\qquad 
  \ee
  where $\omega_{III}$ is the type III cocycle on $\Z_2^3$, 
and $\omega$ is order $8$ on the $\Z_8$ factor.

\subsection{Gapless SPT Twins and Generalized String Order Parameters}
\label{sec:twinexample_epsilon}

We start with $\cZ(\Vec_{G_{32}})$ with no anomaly and discuss $\epsilon$-type twin algebras as discussed in Proposition~\ref{prop:twins-epsilon}. The algebras are shown in \Cref{tab:twinsG32-43}. They are non-Lagrangian algebras, and thus define interfaces. For concreteness we will consider the group symmetry $G_{32}$, but any other Morita equivalent symmetry is equally admissible and will give rise to twin phases. 
The symmetry Lagrangian $G_{32}$ is 
\be
\cL_{G_{32}} = \cL(1,1) = \bigoplus_{\rho = G_{32}\text{-irreps}} \dim(\rho) ([1],\rho) \,.
\ee
And we consider the following interfaces: 
\be
\label{fig:G3243noomega}
		\begin{split}
			\begin{tikzpicture}
				\begin{scope}[shift={(0,0)}, scale = 0.8] 
					\pgfmathsetmacro{\del}{-3}
					\draw [SymTFTColores, fill= SymTFTColores, opacity = 0.2]
					(4,0) -- (0,0)--(0,4)--(4,4)--(4,0);
					\draw [SymTFTColores, fill= SymTFTColores, opacity = 0.1]
					(8,0) -- (4,0)--(4,4)--(8,4)--(8,0);
                    	\draw[very thick,black] (0,0) -- (0,4);
                        \draw[very thick,black] (4,4) -- (4,0);
                        \draw[very thick,black] (8,4) -- (8,0);
					\node[above] at (4, 4) {$\cI_{A_i^{\langle g_1\rangle}}$};
					\node[above]  at (0, 4) {$\cL_{G_{32}} $};
                    	\node[above]  at (8, 4) {$\Bphys=\Ising^2$};
					\node  at (2, 2) {${\cZ({(\Z_2 \times \Z_2) \ltimes \Z_8})}$};
					\node  at (6, 2) {${\cZ({\Z_2\times \Z_2})}$};
				\end{scope}
			\end{tikzpicture}    
		\end{split}
	\ee
We do not need to specify the physical boundary, except that we can a gapless $\Z_2\times \Z_2$ symmetric CFT such as $\Ising^2$. The resulting 
 $G_{32}$-symmetric phases corresponding to the twin algebras $A_i^{\langle g_1 \rangle}$ and $A_i^{\langle g_1g_3 \rangle}$ for $i=1,2$ have a unique (up to scalar) untwisted sector local operator: they are thus {\bf gapless SPT} (gSPT) phases.

As  gSPT phases they cannot be distinguished by local operators. Furthermore, since twin algebras have the same anyon decomposition, twin phases cannot be distinguished by ordinary string order parameters either. 
However, the gapless twin phases can be detected by generalized string order parameters~\eqref{eqn:generalized_string_order_par}, which we discuss now.

\vspace{2mm}
\noindent\textbf{Example 1: Gapless SPT Twins.} We first focus on the twin algebras
\begin{align}
    A_1^{\langle g_1\rangle} & :=A(G_{32},\langle g_1\rangle,1,1), \\
    A_2^{\langle g_1\rangle} & :=A(G_{32},\langle g_1\rangle,1,\epsilon^{\langle g_1\rangle}_2),
\end{align}
where $\langle g_1 \rangle \cong \Z_8$, and $\epsilon^{\langle g_1\rangle}_2$ is such that
\be
\label{eqn:G32,43esp_2^g1}
    \epsilon^{\langle g_1\rangle}_2(g_2g_1^{n}, g_1^{2m+1})=\epsilon^{\langle g_1\rangle}_2(g_3g_1^{n}, g_1^{2m+1}) = -1,
\ee
for $n\in \{0,1,\cdots, 7\}$, $m\in \{0,1,2,3\}$\footnote{The conjugacy class $[g_1]=\{g_1^{2m+1}\}_{m=0}^{3}$ in $G_{32}$.}, and takes value of 1 on other elements in $G_{32}\times \langle g_1\rangle$. For both $i=1,2$, $\epsilon_i^{\langle g_1\rangle}$ evaluates to 1 on commuting group elements (i.e., they satisfy condition~\eqref{eqn:twin_alg_epsilon_cond}), hence the above two algebras $A_i^{\langle g_1\rangle}$ are twin algebras. As in Table~\ref{tab:twinsG32-43}, they have anyon decomposition
\be
    A_i^{\langle g_1\rangle} \cong 1 \oplus ([g_1^4],1) \oplus ([g_1^2], 1) \oplus ([g_1],1).
\ee
Hence, the $G_{32}$-symmetric twin gapless phases $\cP^{\Vec_{G_{32}}}_{A_i^{\langle g_1 \rangle}}$, $i=1,2$, cannot be distinguished by local or ordinary string order parameters. However, they can be distinguished by generalized string order parameters as discussed around equation~\eqref{eqn:generalized_string_order_par}. 
Concretely, the generalized string order parameter as defined in (\ref{eqn:generalized_string_order_par}) is 
\be
      (U_{g_3}^{\phantom{-1}}\!\!\wt{U}_{g_1}^{\phantom{-1}}\!\!U_{g_3}^{-1}\wt{U}_{g_1}^{-1})\,
     (U_{g_2^{\phantom{2}}}^{\phantom{-1}}\!\!\wt{U}_{g_1^2}^{\phantom{-1}}\!U_{g_2^{\phantom{2}}}^{-1}\wt{U}_{g_1^2}^{-1}) 
\ee
is a phase 
\be
    \epsilon_i^{\langle g_1\rangle}(g_3, g_1)\epsilon_i^{\langle g_1\rangle}(g_2, g_1^2) = (-1)^{i-1}\,.
\ee
It can distinguish the gapless SPT twin phases $\cP^{\Vec_{G_{32}}}_{A_i^{\langle g_1\rangle}}$, $i=1,2$.

\vspace{2mm}
\noindent\textbf{Example 2: Gapless SPT Twins.} Since these are very peculiar phases, let us consider another example for gapless SPT twins: consider the non-maximal twins
\begin{align}
    A_1^{\langle g_1 g_3 \rangle} & := A(G_{32}, \langle g_1 g_3 \rangle, 1, 1), \\ 
    A_2^{\langle g_1 g_3 \rangle} & := A(G_{32}, \langle g_1 g_3 \rangle, 1, \epsilon^{\langle g_1 g_3 \rangle}_2),
\end{align}
where  
$\langle g_1 g_3 \rangle \cong \Z_8$, and $\epsilon^{\langle g_1 g_3 \rangle}_2$ is defined as follows:
\be
\label{eqn:G32,43esp_2^g1g3}
    \epsilon_2^{\langle g_1 g_3 \rangle}(g,n)=\begin{cases}
        -1, & g\in S,\, n\in [g_1 g_3], \\
        1, & \text{otherwise},
    \end{cases}
\ee
where $S=\{ g_1^{2n+1}, g_2 g_1^{2n+1}, g_3 g_1^{2n}, g_3 g_2 g_1^{2n} \}_{n=0}^{3}$. Similarly to the previous example, both $\epsilon_1^{\langle g_1 g_3 \rangle}$ and $\epsilon_2^{\langle g_1 g_3 \rangle}$ evaluate to 1 on commuting pairs of elements, satisfying condition~\eqref{eqn:twin_alg_epsilon_cond}. For $i=1,2$, $A_i^{\langle g_1 g_3 \rangle}$ in TO $\cZ(\Vec_{(\Z_2\times \Z_2) \ltimes \Z_8})$ are twins, defined on the anyons
\be
    A_i^{\langle g_1 g_3 \rangle} \cong 1 \oplus ([g_1^4],1) \oplus ([g_1^2], 1) \oplus ([g_1g_3],1)\, .
\ee
In this case, the generalized string order parameter
\be
      (U_{g_3}^{\phantom{-1}}\!\wt{U}_{g_1g_3}^{\phantom{-1}}U_{g_3}^{-1}\wt{U}_{g_1g_3}^{-1})\,
     (U_{g_2^{\phantom{2}}}^{\phantom{-1}}\!\!\wt{U}_{g_1^2}^{\phantom{-1}}\!U_{g_2^{\phantom{2}}}^{-1}\wt{U}_{g_1^2}^{-1}) 
\ee 
is a phase 
\be
    \epsilon_i^{\langle g_1g_3\rangle}(g_3, g_1g_3)\epsilon_i^{\langle g_1\rangle}(g_2, g_1^2) = (-1)^{i-1}\,.
\ee
This distinguishes the gapless SPT twin phases $\cP^{\Vec_{G_{32}}}_{A_i^{\langle g_1 g_3\rangle}}$.

\subsection{$H$-type Twin Algebras with Non-Trivial $\omega$} 
\label{sec:Gassmann_example}
We now consider non-maximal Gassmann-triple twin algebras (see Proposition~\ref{prop:Gassmann}) in $\cZ(\Vec_{G_{32}}^\omega)$, and their corresponding $\Vec_{G_{32}}^\omega$-symmetric gapless twin phases, for any $\omega\in H^3(G_{32},U(1))$. We discuss the symmetry action and OPE of local operators in two different bases:
$G_{32}$ admits a non-trivial Gassmann triple $(G_{32}, H_1, H_2)$ with 
\be
\label{eqn:HinG32-43}
    \begin{split}
        H_1 & =\{(1,0),\,g_2=(3,0),\,g_3=(5,0),\,
        g_2 g_3=(7,0)\}\cr 
        &\cong \Z_2 \times \Z_2,\\
    H_2 & =\{(1,0),\,
    g_2 g_1^4= (3,4),\, g_3 g_1^4=(5,4), \,g_2g_3=(7,0)\}\cr 
    &\cong \Z_2 \times \Z_2.
    \end{split}
\ee
Following Proposition~\ref{prop:Gassmann}, $A(H_i,1,1,1)$ for $i=1,2$ in $\cZ(\Vec_{G_{32}}^\omega)$ are twins for any 3-cocycle $\omega\in H^3(G_{32},U(1))$. They are defined on anyons 
\be \label{eq:AH1H2_anyons}
    A(H_i,1,1,1) \cong 1 \oplus ([1],r_1) \oplus ([1],r_8) \oplus ([1],r_{10})\,.
\ee
Here $r_n$ denotes irrep of $G_{32}$ whose character can be found in Table~\ref{tab:G3243chars}.
For the following discussion, we fix the representatives for both $G_{32}/H_i$, $i=1,2$, to be
\be \label{eq:R_Gassman}
    R=\{g_1^n \;|\; n\in\Z_8\}\,.
\ee
This example illustrates that twin phases, can have two equivalent descriptions: 
\begin{itemize}
\item Idempotent basis: here the symmetry action is different, but the OPE of OPs is the same. 
\item Local operator basis: Here the symmetry action is the same, but we have different OPEs of OPs. 
\end{itemize}
In the following we discuss both points of view in turn. 

\subsubsection{Basis of Idempotents: Different $G_{32}$-action but Same OPEs}
\label{sec:basis-diff-action-same-OPE}
\begin{table}
$$
{\scriptsize\begin{array}{|c|ccccccccccc|}
    \hline
     & \![1]\! & \!\!\![g_1 g_2 g_3]\!\! & \![g_2 g_3]\! & \![g_3] & \![g_1^2] & \![g_1^4] & [g_1] & \![g_1 g_2]\! & \![g_2] & \![g_3 g_1^2]\! & \![g_1 g_3]\! \cr
    \hline
   1 &  1 & 1 & 1 & 1 & 1 & 1 & 1 & 1 & 1 & 1 & 1 \cr
 r_1 & 1 & -1 & 1 & 1 & 1 & 1 & -1 & -1 & 1 & 1 & -1  \cr
r_2& 1 & 1 & -1 & 1 & 1 & 1 & -1 & 1 & -1 & 1 & -1  \cr
r_3& 1 & -1 & -1 & 1 & 1 & 1 & 1 & -1 & -1 & 1 & 1  \cr
r_4& 1 & 1 & 1 & -1 & 1 & 1 & 1 & -1 & -1 & -1 & -1  \cr
r_5&  1 & -1 & 1 & -1 & 1 & 1 & -1 & 1 & -1 & -1 & 1  \cr
r_6& 1 & 1 & -1 & -1 & 1 & 1 & -1 & -1 & 1 & -1 & 1  \cr
r_7& 1 & -1 & -1 & -1 & 1 & 1 & 1 & 1 & 1 & -1 & -1  \cr
r_8& 2 & 0 & 0 & 2 & -2 & 2 & 0 & 0 & 0 & -2 & 0  \cr
r_9& 2 & 0 & 0 & -2 & -2 & 2 & 0 & 0 & 0 & 2 & 0  \cr
r_{10} & 4 & 0 & 0 & 0 & 0 & -4 & 0 & 0 & 0 & 0 & 0  \cr
\hline
\end{array}
}
$$
\caption{Character table for the group $G_{32}=(\Z_2\times \Z_2)\ltimes \Z_8$. $r_i$ label the irreps, and $g_i$ are the generators in \eqref{eqn:generatorsG32-43}. \label{tab:G3243chars}
}
\end{table}

Following \Cref{sec:CondAlg_class}, a basis for $A(H_i,1,1,1)$, $i=1,2$, in $\cZ(\Vec_{G_{32}}^\omega)$, is
\be
\label{eqn:G3243GassmannBasis}
	A(H_i,1,1,1) =\text{span}\{v^{i}_{g_1^k,1}\, | \, k\in \Z_8\}.
\ee
We introduce short-hand notation 
\be
\label{notation:vik}
    v^i_k:=v^i_{g_1^k,1} 
\ee
with $i=1,2$ labeling the algebra and $k=0,\cdots,7$ labeling the basis elements.
Physically, in the $\Vec_{G_{32}}^\omega$-symmetric gapless phase obtained by condensing $A(H_i,1,1,1)$ on $\Bphys$, each $v^i_k$ is a universe\footnote{We use the term ``universe'' instead of ``vacuum'' since the phase is gapless.} labeled by the left coset $g_1^kH_i$.

The OPEs~\eqref{eqn:cond_alg_multiplication} read:
\be
v^i_k\, v^i_m=\delta_{k,m}\,v^i_{k}\,,
\ee
for $i=1,2$ and $k,m\in\Z_8$. The $v^i_k$'s are therefore idempotents, in agreement with their physical interpretation as projectors onto orthogonal vacua/universes.

The $G_{32}$-action can be computed from \eqref{eqn:G-action_algebra_basis} and \eqref{eq:v_equiv_rel}: 
\begin{itemize}
    \item $g_1$ permutes the idempotents:
    \be \label{eq:g1_act_H1H2}
    g_1\cdot v^i_{k} = v^i_{k+1}
    \ee
    (where $k+1$ is taken mod 8)
    \item $g_2g_3 \in H_1\cap H_2$ so in both phases each universe preserves a symmetry conjugate to $g_2g_3$:
    \be
     g_1^k\,(g_2g_3)\,g_1^{-k}\cdot v^i_{k} = v^i_{k}
    \ee
    \item $g_2,g_3\in H_1$ so each universe in phase 1 preserves a symmetry conjugate to $g_2$ and $g_3$
    \be\ba
    g_1^k\,(g_2)\,g_1^{-k}\cdot v^1_{k} &= v^1_{k} \\
    g_1^k\,(g_3)\,g_1^{-k}\cdot v^1_{k} &= v^1_{k}
    \ea\ee
     \item $g_2g_1^4,g_3g_1^4\notin H_1$, so elements conjugate to them act non-trivially in phase 1
         \be\ba
     g_1^k\,(g_2g_1^4)\,g_1^{-k}\cdot v^1_{k} &= v^1_{k+4}\\
     g_1^k\,(g_3g_1^4)\,g_1^{-k}\cdot v^1_{k} &= v^1_{k+4}
     \ea\ee
    \item $g_2g_1^4,g_3g_1^4\in H_2$ so each universe in phase 2 preserves a symmetry conjugate to $g_2g_1^4$ and $g_3g_1^4$
    \be\ba
    g_1^k\,(g_2g_1^4)\,g_1^{-k}\cdot v^2_{k} &= v^2_{k} \\
    g_1^k\,(g_3g_1^4)\,g_1^{-k}\cdot v^2_{k} &= v^2_{k}
    \ea\ee
    \item $g_2,g_3\notin H_2$, so elements conjugate to them act non-trivially in phase 2
        \be\ba
    g_1^k\,(g_2)\,g_1^{-k}\cdot v^2_{k} &= v^2_{k+4} \\
    g_1^k\,(g_3)\,g_1^{-k}\cdot v^2_{k} &= v^2_{k+4}\,.
    \ea\ee
\end{itemize}
Since $g_1$ permutes the idempotents, both phases are $\Z_8^{(g_1)}$-SSB phases. However, with the basis choice \eqref{eqn:G3243GassmannBasis}, the phases corresponding phases $\cP^{\Vec_{G_{32}}^\omega}_{A_i}$ exhibit distinct and inequivalent symmetry-breaking patterns. The preserved symmetry in $\cP^{\Vec_{G_{32}}^\omega}_{A_1}$ and $\cP^{\Vec_{G_{32}}^\omega}_{A_2}$ are subgroups conjugate to $H_1$ and $H_2$ respectively, which, we recall, are not conjugate to each other.

However, it is always possible to find alternative basis for the space of local operators in twin phases on which the $\Vec_{G_{32}}^\omega$ actions take identical form. In the 
next section we compute this alternative basis and show how the different algebra structures manifest in different OPEs.

\subsubsection{Basis of Local Operators: Same $G_{32}$-action but Different OPEs} \label{sec:OPEs_Gassman}
Here we provide an alternative basis of local operators on which the symmetry action takes identical form. This basis also makes the anyon decomposition~\eqref{eq:AH1H2_anyons} of $A(H_i,1,1,1)$ more transparent. From the $G_{32}$-action presented above, 
consider the following change of basis
\be
\label{eqn:anyon_basis_Gassmann_example}
	w^{i}_{m}:=\sum_{n=0}^{7} \zeta_{8}^{nm}v^{i}_n,
\ee
where $i=1,2$, $m\in \Z_8$ and $\zeta_{8}=\exp(2\pi i/8)$. With this basis, $A(H_i,1,1,1)$ decomposes into simple anyons:
\be
    A \cong 1 \oplus ([1],r_1) \oplus ([1],r_8) \oplus ([1],r_{10}),
\ee
where as $\bC$-vector spaces,
\begin{align} \label{eq:rw}
    1 & = \text{span}\{w^{i}_{0}\}, \nn\\
    ([1],r_1) & = \text{span}\{w^{i}_{4}\},\\
    ([1],r_8) & = \text{span}\{w^{i}_{2},w^{i}_{6}\},\nn\\
    ([1],r_{10}) & =\text{span}\{w^{i}_1,(-1)^{i-1}w^{i}_3,(-1)^{i-1}w^{i}_5,w^{i}_7\}.\nn
\end{align}

Note that here we made an alternative choice of basis elements such that
the symmetry $G_{32}$ action on them takes the same form for both algebras. 
We spell the action out of the generators $g_1,g_2,g_3$ \eqref{eqn:generatorsG32-43} of $G_{32}$ as follows.
\begin{itemize}
    \item $G_{32}$ acts on $([1],r_1) = \text{span}\{w^{i}_{4}\}$ as the irrep $r_1: G_{32} \rightarrow U(1)\subset \text{GL}(\bC)$
\be
\label{eqn:1_dim_irrep}
    r_1(g_1)= -1, \quad r_1(g_2)=r_1(g_3)=1.
\ee
\item $G_{32}$ acts on $([1],r_8) = \text{span}\{w^{i}_{2},w^{i}_{6}\}$ as 2-dim irrep $r_8: G_{32} \rightarrow U(2)\subset \text{GL}(\bC^2)$
\be
\label{eqn:2_dim_irrep}
        \begin{split}
    	r_8(g_1) & = \begin{pmatrix}
    		\zeta_{8}^{6} & 0 \\
    		0 & \zeta_{8}^{2}
    	\end{pmatrix}, \\
        r_8(g_2) & = \begin{pmatrix}
    	0 & 1 \\
    	1 & 0
    	\end{pmatrix}, \\
        r_8(g_3) & = \begin{pmatrix}
    	    1 & 0\\
            0 & 1
    	\end{pmatrix}.
    \end{split}
\ee
\item Finally, $G_{32}$ acts on $([1],r_{10}) =\text{span}\{w^{i}_1,(-1)^{i-1}w^{i}_3,(-1)^{i-1}w^{i}_5,w^{i}_7\}$ as the 4-dim irrep $r_{10}: G_{32} \rightarrow \text{SU}(4)\subset \text{GL}(\bC^4)$ reads
\be
    \begin{split}
        \label{eqn:4_dim_irrep}
	r_{10}(g_1) & = \begin{pmatrix}
		\zeta_{8}^{7} & 0 & 0 & 0 \\
		0 & \zeta_{8}^{5} & 0 & 0 \\
		0 & 0 & \zeta_{8}^{3} & 0 \\
		0 & 0 & 0 & \zeta_{8}
	\end{pmatrix}, \\ 
    r_{10}(g_2) & = \begin{pmatrix}
		0 & 1 & 0 & 0 \\
		1 & 0 & 0 & 0 \\
		0 & 0 & 0 & 1 \\
		0 & 0 & 1 & 0
	\end{pmatrix}, \\
    r_{10}(g_3) & = \begin{pmatrix}
	0 & 0 & 1 & 0 \\
	0 & 0 & 0 & 1 \\
	1 & 0 & 0 & 0 \\
	0 & 1 & 0 & 0 
	\end{pmatrix}.
    \end{split}
\ee
\end{itemize}
The fusion rules of the anyons in the algebras $A(H_i,1,1,1)$ are:
\be\ba
    ([1], r_1) \ot ([1], r_1) & \cong 1\\
    ([1], r_1) \ot ([1], r_8) & \cong ([1], r_8) \\
    ([1], r_1) \ot ([1], r_{10}) & \cong ([1], r_{10}) \\
    ([1],r_8) \ot ([1],r_8) & \cong 1 \oplus ([1],r_1),\\
    ([1], r_8) \ot ([1], r_{10}) & \cong 2 ([1], r_{10}), \\
    ([1],r_{10}) \ot ([1],r_{10}) & \cong 1 \oplus ([1],r_1) \oplus 2 ([1],r_8) \oplus \cdots \,,
\ea\ee
where the dots $\cdots$ in the final line denote anyons beyond the ones appearing in the algebras $A(H_i,1,1,1)$.

The multiplications $\mu_i$ on $A(H_i,1,1,1)$ on the level of anyons can be computed from equation~\eqref{eqn:cond_alg_multiplication}, they differ for $i=1,2$ when restricted to the following subspaces: 
\be
\ba
    \mu_{i}|_{([1],r_1) \ot ([1],r_{10})} & = (-1)^{i-1}\id_{([1],r_{10})},\\
    \mu_i|_{([1],r_8) \ot ([1],r_{10})} & = \begin{pmatrix}
        \id_{([1],r_{10})} & (-1)^{i-1} \id_{([1],r_{10})}
    \end{pmatrix},\\
    \mu_{i}|_{([1],r_1) \subset ([1],r_{10}) \ot ([1],r_{10})} & = (-1)^{i-1} \id_{([1],r_1)},\\
    \mu_{i}|_{2([1],r_8) \subset ([1],r_{10}) \ot ([1],r_{10})} & = \begin{pmatrix}
        \id_{([1],r_8)} & (-1)^{i-1} \id_{([1],r_8)}
    \end{pmatrix}.
\ea
\ee

The different multiplication structures manifest in distinct OPEs of local operators. To make it explicit, consider the SymTFT with the Dirichlet symmetry algebra $\cL(1,1)$ for $\Vec_{G_{32}}^\omega$ symmetry.
Algebra $A (H_i, 1, 1, 1)$ defines an interface between TO $\cZ(G_{32})$ and the reduced TO $\cZ(\Vec_{H_i})$ on which the anyons appearing in $A (H_i, 1, 1, 1)$ can end. The local operators arise from anyons ending on both the symmetry boundary and the interface $\cI_{A(H_i,1,1,1)}$: 
\be
\label{fig:G3243condquiche}
		\begin{split}
			\begin{tikzpicture}[scale=0.8]
				\begin{scope}[shift={(0,0)},scale=0.8] 
					\pgfmathsetmacro{\del}{-3}
					\draw [SymTFTColores, fill= SymTFTColores, opacity = 0.2]
					(4,0-1.2) -- (0,0-1.2)--(0,4)--(4,4)--(4,0-1.2);
					\draw [SymTFTColores, fill= SymTFTColores, opacity = 0.1]
					(8,0-1.2) -- (4,0-1.2)--(4,4)--(8,4)--(8,0-1.2);
					\draw[very thick] (0,-1.2) -- (0,4) ;
					\draw[thick] (0,3) -- (4,3);
					\draw[thick] (0,2) -- (4,2);
					\draw[thick] (0,1.2) -- (4,1);
					\draw[thick] (0,0.8) -- (4,1);
					\draw[thick] (0,-0.3) -- (4,0);
					\draw[thick] (0,-0.1) -- (4,0);
					\draw[thick] (0,0.1) -- (4,0);
					\draw[thick] (0,0.3) -- (4,0);
					\draw[thick,black,line width=1pt] (4,4) -- (4,0-1.2);
					\node  at (4, 4.5) {$\cI_{A(H_i,1,1,1)}$};
					\node  at (0, 4.5) {$\Bsym_\Dir $};
					\node  at (2, -1.7) {${\cZ({(\Z_2 \times \Z_2) \ltimes \Z_8})}$};
					\node  at (2, 3.45) {$1$};
					\node  at (2, 2.35) {$([1],r_1)$};
					\node  at (2, 1.45) {$([1],r_8)$};
					\node  at (2, -0.7) {$([1],r_{10})$};
					\node  at (6, -1.7) {${\cZ({\Z_2\times \Z_2})}$};
				\end{scope}
			\end{tikzpicture}    
		\end{split}.
	\ee
In this case, since $A(H_i,1,1,1) \subset A(1,1)$ as subalgebra for $i=1,2$, we can compute the OPE from the algebra multiplication~\eqref{eqn:cond_alg_multiplication}. 
Fix notations for the local operators obtained from condensing $A(H_i,1,1,1)$, $i=1,2$. 
The unit anyon $1=([1],1)$ gives rise to the unit local operator
\be
\label{eqn:localop1}
    \cO^i_{1} = w_0^i.
\ee
The local operator $\cO^i_{r_1}$ associated to the anyon $([1],r_1)$ is chosen to be (unique choice up to rescaling)
\be
\label{eqn:localopr1}
    \cO^i_{r_1} = w_4^i.
\ee
The anyon $([1],r_{8})$ gives rise to 2 linearly-independent local operators $\cO^i_{ r_{8},n}$, for $n\in \{2, 6\}$, defined in terms of the basis elements to be
\be
\label{eqn:localopr8}
    \cO^i_{r_{8},n} = w_n^i.
\ee
The anyon $([1],r_{10})$ gives rise to 4 linearly-independent untwisted local operators $\cO^i_{r_{10},n}$, defined to be 
\be
\label{eqn:localopr10}
    \cO^i_{r_{10},n} = w_n^i, \quad \cO^i_{r_{10},m} = (-1)^{i-1}w_m^i,
\ee
for $n=1,7$ and $m=3,5$. 

As shown before, the $\Vec_{G_{32}}^\omega$ symmetry action on local operators $\cO^i_{r_{n}}$ does not depend on the label $i$: as matrices, the $G_{32}$ action is determined by \eqref{eqn:2_dim_irrep} and \eqref{eqn:4_dim_irrep}. Taking the trace of these matrices recovers the braiding of the anyons $([1],r_8)$ and $([1],r_{10})$ with each group element $g\in G_{32}$, respectively. The action of the generators $g_1=(1,1), \, g_2=(3,0),\, g_3=(5,0)\in \Z_8^* \ltimes \Z_8$ is shown in Table~\ref{tab:G3243action}.  We remark that this choice of $\Vec_{G_{32}}^\omega$ symmetry action on local operators does not depend on the 3-cocycle $\omega$.

The algebra multiplication \eqref{eqn:cond_alg_multiplication} and equivalence relation on basis vectors \eqref{eq:v_equiv_rel} can be used to compute the OPEs of these local operators, recalling the basis change  \eqref{eqn:anyon_basis_Gassmann_example}. Condensable algebras are commutative, hence the local operators also have commutative multiplication. The OPEs involving the $\cO^i_{r_{10},n}$ operators differ for the twin algebras $A(H_i,1,1,1)$:
\be
    \cO^i_{r_1} \cO^i_{r_{10},n} = (-1)^{i-1} \cO^i_{r_{10},m},
\ee
for $n=1,3,5,7$, and $m=5,7,1,3$, respectively.
\be
    \cO^i_{r_8, 2} \cO^i_{r_{10},n} = (-1)^{(i-1)\frac{(n+1)}{2}}\cO^i_{r_{10},m}
\ee
for $n=1,3,5,7$, and $m=3,5,7,1$, respectively.
\be
    \cO^i_{r_8, 6} \cO^i_{r_{10},n} = (-1)^{(i-1)\frac{(n+3)}{2}}\cO^i_{r_{10},m}
\ee
for $n=1,3,5,7$, and $m=7,1,3,5$, respectively.

Finally, the OPEs of $\cO^i_{r_{10},n}$ with $\cO^i_{r_{10},m}$ are shown in Table~\ref{tab:G3243r10OPE}.

\begin{table}
$$\def\arraystretch{1.3}
\begin{array}{c|c|c}
     \cO & g & g\cdot \cO \cr
     \hline
     \hline
     \cO^i_{1} & g\in G_{32} & \cO^i_{1} \cr
     \hline
     \cO^i_{r_1} & \makecell[c]{g_1 \\ g_2,g_3} & \makecell[c]{-\cO^i_{r_1}\\ \hspace{1.8ex}\cO^i_{r_1}} \cr
     \hline
     & g_1 & \zeta_8^{-n} \cO^i_{r_{8},n} \cr
     \cO^i_{r_{8},n} & g_2 & \hspace{6.8ex}\cO^i_{r_{8},{n+4}} \cr
     & g_3 & \hspace{4.5ex}\cO^i_{r_{8},{n}} \cr
     \hline
     & g_1 & \zeta_8^{-m}\cO^i_{r_{10},m} \cr
     \cO^i_{r_{10},m} & g_2 & \hspace{8.3ex}\cO^i_{r_{10},P_2(m)} \cr
     & g_3 & \hspace{8.3ex}\cO^i_{r_{10},P_3(m)}
\end{array}
$$
\caption{Symmetry $\Vec_{G_{32}}^\omega$ action on the local operators. Here, $n=2,6$ for $\cO^i_{r_{8},n}$, and the $n+4$ in the subscript of $\cO^i_{r_{8},{n+4}}$ is understood as $n+4 \equiv \text{ mod }{8}$. For the operators $\cO^i_{r_{10},m}$, $m=1,3,5,7$, $P_2=(13)(57)$ and $P_3=(15)(37)$ denote the permutations on the index set $\{1,3,5,7\}$ for $m$. Note that, in this basis, the group action takes the same form, independent on the label $i=1,2$ for the condensable algebra. \label{tab:G3243action}}
\end{table}

\begin{table}
$$
\begin{array}{|c||c|c|c|c|}\hline
     & \cO^i_{r_{10},1} & \cO^i_{r_{10},3} & \cO^i_{r_{10},5} & \cO^i_{r_{10},7} \cr
    \hline\hline
    \cO^i_{r_{10},1} &  \cO^i_{r_8, 2} & (-1)^{i-1}\cO^i_{r_1} & (-1)^{i-1}\cO^i_{r_8, 6} & \cO_{i,1} \cr\hline
    \cO^i_{r_{10},3} & (-1)^{i-1}\cO^i_{r_1} & \cO^i_{r_8, 6} & \cO_{i,1} & (-1)^{i-1}\cO^i_{r_8, 2} \cr\hline
    \cO^i_{r_{10},5} & (-1)^{i-1}\cO^i_{r_8, 6} & \cO_{i,1} & \cO^i_{r_8, 2} & (-1)^{i-1}\cO^i_{r_1} \cr\hline
    \cO^i_{r_{10},7} & \cO_{i,1} & (-1)^{i-1}\cO^i_{r_8, 2} & (-1)^{i-1}\cO^i_{r_1} & \cO^i_{r_8, 6} \cr\hline
\end{array}
$$
\caption{Multiplication table of the local operators $\cO^i_{r_{10},n}$ obtained from the $([1],r_{10})$ anyon in $A(H_i,1,1,1)$, $i=1,2$.\label{tab:G3243r10OPE}}
\end{table}

The $(-1)^{i-1}$ factor could be eliminated by re-defining $\cO^i_{r_{10},m}$ in \eqref{eqn:localopr10} with no $(-1)^{i-1}$ on the righ-hand-side. Such a re-definition, would, however, introduce $(-1)^{i-1}$ factors in the symmetry action in Table~\ref{tab:G3243action} where $\cO^i_{r_{10},m}$ appears. In general, as a result of Schur's lemma, the local operators $\cO^i_{r_{10},m}$ cannot be redefined to produce the same OPEs while respecting identical symmetry $\Vec_{G_{32}}^\omega$ actions for $i=1$ and $i=2$. Hence, we again conclude that although the algebras $A(H_i,1,1,1)$ are defined on the same anyons, they correspond to physically distinct phases.

To summarize, we have two choices of basis for the local operators obtained from condensing $A(H_i,1,1,1)$. One choice \eqref{eqn:G3243GassmannBasis} results in local operators with identical multiplication structure but different $\Vec_{G_{32}}^\omega$ symmetry action. The other choice, related to the first via \eqref{eqn:anyon_basis_Gassmann_example} and \eqref{eqn:localop1}-\eqref{eqn:localopr10} gives local operators on which $\Vec_{G_{32}}^\omega$ acts identically, but whose OPEs are distinct.

\subsection{Phase Transitions Between Gapped Phases without Hidden Symmetry Breaking}

Perhaps the most interesting applications for physical systems of twin algebras is the possibility of constructing intrinsically non-Landau transition, in the sense that they are not SSB breaking transitions, in any gauge related frame. This is the topic of the companion paper \cite{WGS}. There we discuss such DQCP transitions for the group $G^\omega= GL(2,3)^\omega$, which are intrinsically non-Landau in this sense.

Here we will focus on the smallest group with such twin Lagrangian algebras and transitions:  $\cZ(\Vec_{G_{32}}^\omega)$, shown in table \ref{tab:TwinG3243omega}, and their corresponding phases and transitions. 
Here, $\omega$ is a representative of a non-trivial 3-cocycle computed with GAP \cite{GAP4,HAP,mignard2017moritaequivalencepointedfusion}, which we provide in \texttt{G32-43-omega.g}. 

As summarized in table \ref{tab:TwinG3243omega}, $\cZ(\Vec_{G_{32}}^\omega)$ has a pair of non-maximal Gassmann-triple twin algebras $A(H_i,1,1,1)$. The subgroups $H_i \subset G_{32}$ were defined in \eqref{eqn:HinG32-43}, which we recall for convenience 
\be
    \begin{split}
        H_1 & =\{1,\,g_2,\,g_3,\,
        g_2 g_3\}\cong \Z_2 \times \Z_2,\\
    H_2 & =\{1,\, g_2 g_1^4,\, g_3 g_1^4, \,g_2g_3\} \cong \Z_2 \times \Z_2.
    \end{split}
\ee 
We summarize the sub-diagram of the Hasse diagram of condensable algebras in $\cZ(\Vec_{G_{32}}^\omega)$ connected to the twin algebras $A(H_i,1,1,1)$ in Figure \ref{fig:Hasse3243_om}. 
{\twocolumngrid
\begin{figure}
\begin{center}
\begin{tikzpicture}[vertex/.style={draw}, scale=0.9]
 \begin{scope}[shift={(0,0)}]
    \foreach \coord/\i/\j in {
(0.,-1.35)/1/{\nameref{Aom:1}},
(0.,-2.7)/2/{\nameref{Aom:2}},
(0.,-4.05)/3/{\nameref{Aom:3}},
(-2.2,-5.4)/4/{\nameref{Aom:4}},
(2.2,-5.4)/5/{\nameref{Aom:5}},
(-4.4,-8)/6/{\nameref{Aom:6}},
(-3.3,-8)/7/{\nameref{Aom:7}},
(-2.2,-8)/8/{\nameref{Aom:8}},
(2.2,-8)/9/{\nameref{Aom:9}},
(3.3,-8)/10/{\nameref{Aom:10}},
(4.4,-8)/11/{\nameref{Aom:11}},
(-1.1,-8)/12/{\nameref{Aom:12}},
(0.,-8)/13/{\nameref{Aom:13}},
(1.1,-8)/14/{\nameref{Aom:14}},
(-4.4,-11)/15/{\nameref{Aom:15}},
(4.4,-11)/16/{\nameref{Aom:16}},
(-0.880,-11)/17/{\nameref{Aom:17}},
(0.880,-11)/18/{\nameref{Aom:18}},
(2.64,-11)/19/{\nameref{Aom:19}},
(-2.64,-11)/20/{\nameref{Aom:20}}}
     {
      \node[vertex,align=center] (p\i) at \coord {\j};
     }
  \foreach [count=\r] \row in 
{{0,1,0,0,0,0,0,0,0,0,0,0,0,0,0,0,0,0,0,0},
{0,0,1,0,0,0,0,0,0,0,0,0,0,0,0,0,0,0,0,0},
{0,0,0,1,1,0,0,0,0,0,0,0,0,0,0,0,0,0,0,0},
{0,0,0,0,0,1,1,1,0,0,0,1,1,1,0,0,0,0,0,0},
{0,0,0,0,0,0,0,0,1,1,1,1,1,1,0,0,0,0,0,0},
{0,0,0,0,0,0,0,0,0,0,0,0,0,0,1,0,1,0,0,0},
{0,0,0,0,0,0,0,0,0,0,0,0,0,0,1,0,0,1,0,0},
{0,0,0,0,0,0,0,0,0,0,0,0,0,0,1,0,0,0,1,0},
{0,0,0,0,0,0,0,0,0,0,0,0,0,0,0,1,0,0,1,0},
{0,0,0,0,0,0,0,0,0,0,0,0,0,0,0,1,0,1,0,0},
{0,0,0,0,0,0,0,0,0,0,0,0,0,0,0,1,1,0,0,0},
{0,0,0,0,0,0,0,0,0,0,0,0,0,0,0,0,1,0,0,1},
{0,0,0,0,0,0,0,0,0,0,0,0,0,0,0,0,0,1,0,1},
{0,0,0,0,0,0,0,0,0,0,0,0,0,0,0,0,0,0,1,1},
{0,0,0,0,0,0,0,0,0,0,0,0,0,0,0,0,0,0,0,0},
{0,0,0,0,0,0,0,0,0,0,0,0,0,0,0,0,0,0,0,0},
{0,0,0,0,0,0,0,0,0,0,0,0,0,0,0,0,0,0,0,0},
{0,0,0,0,0,0,0,0,0,0,0,0,0,0,0,0,0,0,0,0},
{0,0,0,0,0,0,0,0,0,0,0,0,0,0,0,0,0,0,0,0},
{0,0,0,0,0,0,0,0,0,0,0,0,0,0,0,0,0,0,0,0}}
    {
     \foreach [count=\c] \cell in \row{
            \ifnum\cell=1%
                \draw[-stealth] (p\r) edge [thick] (p\c);
            \fi
        }
    }
\end{scope}
\end{tikzpicture}
\caption{Partial Hasse diagram for $\cZ(\Vec_{G_{32}}^{\omega})$. The algebras are detailed in Table \ref{tab:G_om}. The top most algebra is the identity algebra. The bottom row are the Lagrangian algebras, that correspond to topological boundary conditions. 
Twin algebras are (\nameref{Aom:4},\nameref{Aom:5}), (\nameref{Aom:6},\nameref{Aom:11}), (\nameref{Aom:8},\nameref{Aom:9}), (\nameref{Aom:7},\nameref{Aom:10}), (\nameref{Aom:15},\nameref{Aom:16}). \label{fig:Hasse3243_om}}
\end{center}
\end{figure}}

\subsubsection{Twin Lagrangian Algebras from Lagrangians in Reduced TO}
For the purpose of studying gapped phases, we focus on twin Lagrangian algebras
\be\ba
\text{\nameref{Aom:15}}&=A(H_1, H_1, 1, 1)\\
\text{\nameref{Aom:16}}&=A(H_2, H_2, 1, 1)
\ea\ee
whose anyon decomposition is ($i=15,16$)
\begin{align}
\label{eqn:anyon-decomp-Aom}
A^\omega_{i} &\cong ([1],1) \oplus ([1],r_1) \oplus ([1],r_8) \oplus ([1],r_{10}) \nn\\ 
& \oplus ([g_2g_3],2) \oplus ([g_3],2\;g_1^6=-\sqrt{2}\zeta_{16})  \nn \\
& \oplus ([g_2],2) \oplus ([g_3],2\;g_1^6=\sqrt{2}\zeta_{16})\,. 
\end{align}

To study phase transitions, we consider their common subalgebra\footnote{As illustrated in Figure~\ref{fig:Hasse3243_om}, all pairs of twin algebras (\nameref{Aom:4},\nameref{Aom:5}), (\nameref{Aom:6},\nameref{Aom:11}), (\nameref{Aom:8},\nameref{Aom:9}), (\nameref{Aom:7},\nameref{Aom:10}), (\nameref{Aom:15},\nameref{Aom:16}) admit \nameref{Aom:3} as a common subalgebra.} 
\begin{align}
\text{\nameref{Aom:3}}&=A(H, 1, 1, 1) \nn\\
&\cong ([1],1) \oplus ([1],r_1) \oplus ([1],r_8)\,,
\end{align}
where
\be
    H=\langle g_2g_3,g_3,g_1^4 \rangle\cong\Z_2\times\Z_2\times\Z_2\,.
\ee
The reduced TO associated to $\text{\nameref{Aom:3}}$ is $\cZ(\Vec_{\Z_2^3}^{\omega_{\text{III}}})$ with type-III anomaly $\omega_\III:=\omega|_H$ defined as
\be
    \omega_\III(a^{i_1}b^{i_2}c^{i_3},a^{j_1}b^{j_2}c^{j_3},a^{k_1}b^{k_2}c^{k_3})=(-1)^{i_1j_2k_3}
\ee
for generators of each $\Z_2$ factor
\be
    a=g_2g_1^4\,,\quad b=g_3\,, \quad c=g_2g_3\,.
\ee
Note that $\omega|_{H_i}\equiv1$ in $H^3(H_i, U(1))$ for each $i=1,2$.

Denoting the electric and magnetic anyon generators of $\cZ(\Vec_{\Z_2^3}^{\omega_{\text{III}}})$ by $e_a,e_b,e_c$ and $([a],2),([b],2),([c],2)$ respectively, \noindent\nameref{Aom:3} defines the following map of anyons:\footnote{Note that this is only part of the map of anyons data. The full map can be found in Appendix~\ref{app:map-of-anyon-G32omega}.}
\begin{align}
\label{eqn:map-of-anyon-A3omega}
    \cZ(\Vec_{\Z_2^3}^{\omega_{\text{III}}}) & \rightarrow \cZ(\Vec_{G_{32}}^\omega) \cr
    e_{a} &\mapsto ([1],r_{10}) \cr 
    e_{b} &\mapsto ([1],r_{10}) \cr 
    e_{c} &\mapsto ([1],r_{10}) \cr 
    ([a],2) &\mapsto ([g_2],2) \cr 
    ([bc],2) &\mapsto ([g_2],2) \cr 
    ([c],2) &\mapsto ([g_2g_3],2) \cr 
    ([b],2) &\mapsto ([g_3],2\, g_1^6=-\sqrt{2}\zeta_{16}) \oplus ([g_3],2\, g_1^6=\sqrt{2}\zeta_{16}) \cr 
    ([ac],2) &\mapsto ([g_3],2\, g_1^6=-\sqrt{2}\zeta_{16}) \oplus ([g_3],2\, g_1^6=\sqrt{2}\zeta_{16}) \cr 
\end{align}
In particular, it sends the two (non-twin) Lagrangian algebras in $\cZ(\Vec_{\Z_2^3}^{\omega_{\text{III}}})$
\begin{align} \label{eq:tilde_A15}
    A_{15}^{\Z_2^3,\omega_{\text{III}}} & \cong 1\oplus e_a\oplus([c],2)\oplus([b],2)\oplus([bc],2)\\
    A_{16}^{\Z_2^3,\omega_{\text{III}}} & \cong 1 \oplus e_b \oplus ([c],2) \oplus ([ac],2) \oplus ([a],2)\label{eq:tilde_A16}
\end{align}
to the twin Lagrangian algebras \nameref{Aom:15} and \nameref{Aom:16} in the original $\cZ(\Vec_{G_{32}}^\omega)$, whose anyon decomposition is \eqref{eqn:anyon-decomp-Aom}.

\subsubsection{Phase Transitions Respecting Group Symmetries}
Fixing the symmetry Lagrangian algebra of the SymTFT to be \nameref{Aom:20}$=\cL(1,1)$ defines the Dirichlet $\Vec_{G_{32}}^\omega$ symmetry boundary.
Denote the gapped $\Vec_{G_{32}}^\omega$-symmetric twin phases obtained from $A^\omega_{15}$ and $A^\omega_{16}$ by $\cP^{\Vec_{G_{32}}^\omega}_{A^\omega_{15}}$ and $\cP^{\Vec_{G_{32}}^\omega}_{A^\omega_{16}}$, respectively. 
As in \eqref{notation:vik}, denote vacua obtained from untwisted local operators in the two phases by $v_k^i$ ($i=1$ corresponds to phase $\cP^{\Vec_{G_{32}}^\omega}_{A^\omega_{15}}$ and $i=2$ for $\cP^{\Vec_{G_{32}}^\omega}_{A^\omega_{16}}$), $k\in \{0, 1, \cdots, 7\}$. 
The anomalous ${G_{32}}$-action on vacuum $v_k^i$ takes the same form as that in \Cref{sec:basis-diff-action-same-OPE}. E.g., $g_1\in {G_{32}}$ permutes the vacua cyclically $g_1\cdot v^i_{k} = v^i_{k+1}$ for both $i=1,2$, see equation~\eqref{eq:g1_act_H1H2}.

The gapped phase $\cP^{\Vec_{G_{32}}^\omega}_{A^\omega_{15}}$ is:
\be
\label{eqn:VecG32A15phase_omega}
\begin{tikzpicture}[baseline]
\begin{scope}[shift={(-1,2.8)}]
\node at (1.2,-3) {$\cP^{\Vec_{G_{32}}^\omega}_{A^\omega_{15}}=$};
\node (1) at (2.5,-3) {$v^1_{0}$};
\node at (2.5,-4) {$g_2g_3$};
\node[RoyalBlue] at (2.5,-2) {$g_3$};
\node at (3.5,-3) {$\oplus$};
\node[red] at (3.5,-3.7) {$g_1$};
\node (2) at (4.5,-3) {$v^1_{1}$};
\node at (4.5,-4) {${}^{g_1}(g_2g_3)$};
\node[RoyalBlue] at (4.5,-2) {${}^{g_1}g_3$};
\node (2b) at (4.55,-3.2) {};
\node at (5.5,-3) {$\oplus$};
\node[red] at (5.5,-3.7) {$g_1$};
\node (3) at (6.5,-3) {$\cdots$};
\node (3b) at (6,-3.3) {};
\node (3c) at (7,-3.3) {};
\node at (7.5,-3) {$\oplus$};
\node[red] at (7.5,-3.7) {$g_1$};
\node (4) at (8.5,-3) {$v^1_{7}$};
\node at (8.5,-3.9) {${}^{g_1^7}(g_2g_3)$};
\node[RoyalBlue] at (8.5,-1.95) {${}^{g_1^7}g_3$};
\node (4b) at (6.55,-3.2) {};
\draw[red,-stealth] (1) [bend right=30] to (2);
\draw[red,-stealth] (2) [bend right=30] to (3b);
\draw[red,-stealth] (3c) [bend right=30] to (4);
\draw[black,-stealth] (1) edge [in=-70, out=-110,looseness=5] (1);
\draw[black,-stealth] (2) edge [in=-70, out=-110,looseness=5] (2);
\draw[black,-stealth] (4) edge [in=-70, out=-110,looseness=5] (4);
\draw[RoyalBlue,-stealth] (1) edge [in=70, out=110,looseness=5] (1);
\draw[RoyalBlue,-stealth] (2) edge [in=70, out=110,looseness=5] (2);
\draw[RoyalBlue,-stealth] (4) edge [in=70, out=110,looseness=5] (4);
\end{scope}
\end{tikzpicture} 
\ee
while the phase $\cP^{\Vec_{G_{32}}^\omega}_{A^\omega_{16}}$ corresponding to $A_{16}$ is
\be
\label{eqn:VecG32A16phase_omega}
\begin{tikzpicture}[baseline]
\begin{scope}[shift={(-1,2.8)}]
\node at (1.2,-3) {$\cP^{\Vec_{G_{32}}^\omega}_{A^\omega_{16}}=$};
\node (1) at (2.5,-3) {$v^2_{0}$};
\node at (2.5,-4) {$g_2g_3$};
\node[Blue] at (2.5,-2) {$g_3g_1^4$};
\node at (3.5,-3) {$\oplus$};
\node[red] at (3.5,-3.7) {$g_1$};
\node (2) at (4.5,-3) {$v^2_{1}$};
\node at (4.5,-4) {${}^{g_1}(g_2g_3)$};
\node[Blue] at (4.5,-2) {${}^{g_1}(g_3g_1^4)$};
\node (2b) at (4.55,-3.2) {};
\node at (5.5,-3) {$\oplus$};
\node[red] at (5.5,-3.7) {$g_1$};
\node (3) at (6.5,-3) {$\cdots$};
\node (3b) at (6,-3.3) {};
\node (3c) at (7,-3.3) {};
\node at (7.5,-3) {$\oplus$};
\node[red] at (7.5,-3.7) {$g_1$};
\node (4) at (8.5,-3) {$v^2_{7}$};
\node at (8.5,-3.9) {${}^{g_1^7}(g_2g_3)$};
\node[Blue] at (8.5,-1.95) {${}^{g_1^7}(g_3g_1^4)$};
\node (4b) at (6.55,-3.2) {};
\draw[red,-stealth] (1) [bend right=30] to (2);
\draw[red,-stealth] (2) [bend right=30] to (3b);
\draw[red,-stealth] (3c) [bend right=30] to (4);
\draw[black,-stealth] (1) edge [in=-70, out=-110,looseness=5] (1);
\draw[black,-stealth] (2) edge [in=-70, out=-110,looseness=5] (2);
\draw[black,-stealth] (4) edge [in=-70, out=-110,looseness=5] (4);
\draw[Blue,-stealth] (1) edge [in=70, out=110,looseness=5] (1);
\draw[Blue,-stealth] (2) edge [in=70, out=110,looseness=5] (2);
\draw[Blue,-stealth] (4) edge [in=70, out=110,looseness=5] (4);
\end{scope}
\end{tikzpicture}
\ee
{Note one can consider a new basis $v^{2\prime}_{k}$ such that the symmetry action on the resulting phase $\cP^{\Vec_{G_{32}}^\omega}_{A_{16}}$ is identical to its twin, but the OP OPEs will be different: } 
\be
\label{eqn:VecG32A16phase_omega-samesymm}
\begin{tikzpicture}[baseline]
\begin{scope}[shift={(-1,2.8)}]
\node at (1.2,-3) {$\cP^{\Vec_{G_{32}}^\omega}_{A^\omega_{16}}=$};
\node (1) at (2.5,-3) {$v^{2\prime}_{0}$};
\node at (2.5,-4) {$g_2g_3$};
\node[RoyalBlue] at (2.5,-2) {$g_3$};
\node at (3.5,-3) {$\oplus$};
\node[red] at (3.5,-3.7) {$g_1$};
\node (2) at (4.5,-3) {$v^{2\prime}_{1}$};
\node at (4.5,-4) {${}^{g_1}(g_2g_3)$};
\node[RoyalBlue] at (4.5,-2) {${}^{g_1}g_3$};
\node (2b) at (4.55,-3.2) {};
\node at (5.5,-3) {$\oplus$};
\node[red] at (5.5,-3.7) {$g_1$};
\node (3) at (6.5,-3) {$\cdots$};
\node (3b) at (6,-3.3) {};
\node (3c) at (7,-3.3) {};
\node at (7.5,-3) {$\oplus$};
\node[red] at (7.5,-3.7) {$g_1$};
\node (4) at (8.5,-3) {$v^{2\prime}_{7}$};
\node at (8.5,-3.9) {${}^{g_1^7}(g_2g_3)$};
\node[RoyalBlue] at (8.5,-1.95) {${}^{g_1^7}g_3$};
\node (4b) at (6.55,-3.2) {};
\draw[red,-stealth] (1) [bend right=30] to (2);
\draw[red,-stealth] (2) [bend right=30] to (3b);
\draw[red,-stealth] (3c) [bend right=30] to (4);
\draw[black,-stealth] (1) edge [in=-70, out=-110,looseness=5] (1);
\draw[black,-stealth] (2) edge [in=-70, out=-110,looseness=5] (2);
\draw[black,-stealth] (4) edge [in=-70, out=-110,looseness=5] (4);
\draw[RoyalBlue,-stealth] (1) edge [in=70, out=110,looseness=5] (1);
\draw[RoyalBlue,-stealth] (2) edge [in=70, out=110,looseness=5] (2);
\draw[RoyalBlue,-stealth] (4) edge [in=70, out=110,looseness=5] (4);
\end{scope}
\end{tikzpicture} 
\ee

For $g\in {G_{32}}$, the $g$-twisted sector operators in $\cP^{\Vec_G^\omega}_{A^\omega_{15}}$ and $\cP^{\Vec_G^\omega}_{A^\omega_{16}}$ come from elements $v_{r,n}$ such that $rnr^{-1}=g$ (recall notation \eqref{eqn:basiselm}) in algebras $A^\omega_{15}$ and $A^\omega_{16}$, respectively. The vector spaces of all twisted operators in both phases form non-associative algebras, see~\eqref{eqn:cond_alg_multiplication}. 

\vspace{2mm}
\noindent\textbf{Phase Transition between Twin Gapped Phases.} 
A phase transition between $\cP^{\Vec_G^\omega}_{A^\omega_{15}}$ and $\cP^{\Vec_G^\omega}_{A^\omega_{16}}$ can be achieved by inputting a $\Vec_{\Z_2^3}^{\omega_{\text{III}}}$-symmetric CFT, denoted by $\cP^{{\omega_{\text{III}}}}_{\text{CFT}}$ (with one possible choice being the $\Spin(4)_1$ WZW CFT studied in \cite{Ando:2026ffy}) on the physical boundary of the reduced TO $\cZ(\Vec_{\Z_2^3}^{\omega_{\text{III}}})$, where $\omega_{\text{III}}$ is the type III cocycle:
\begin{equation}
\begin{split}    
    \begin{tikzpicture}
\begin{scope}[shift={(0,0)}, scale=1.5]
    \draw[SymTFTColores, fill= SymTFTColores, opacity = 0.2]   (0,0) -- (0,2) -- (2,2) -- (2,0) --(0,0); 
    \draw[SymTFTColores, fill= SymTFTColores, opacity = 0.1]  (2,0) -- (2,2) -- (4,2) -- (4,0) --(2,0); 
    \draw [very thick]  (0,0) -- (0,2); 
    \draw [very thick]  (2,0) -- (2,2); 
    \draw [very thick]  (4,0) -- (4,2); 
    \node[above] at (0,2) {$G^\omega$};
    \node[above] at (2,2) {$\cI_{A_3^\omega}$};
    \node[above] at (4,2) {$\cP^{{\omega_{\text{III}}}}_{\text{CFT}}$};
    \node at (1,1) {$\cZ({G},\omega)$}; 
    \node at (3,1) {$\cZ({\Z_2^3},{\omega_{\text{III}}})$};
    \end{scope}
    \end{tikzpicture}
\end{split}\end{equation}
Compactifying the above SymTFT results in a phase transition between $\cP^{\Vec_G^\omega}_{A^\omega_{15}}$ and $\cP^{\Vec_G^\omega}_{A^\omega_{16}}$.

The theory for the phase transition, corresponding to the algebra $A_3^\omega$, is:
\be \label{eq:G-Aom3-theory}
\begin{tikzpicture}[baseline]
\begin{scope}[shift={(-0.5,2.8)}]
\node at (1.2,-1.5) {$\cP^{\Vec^\omega_G}_{A_3^\omega}=$};
\node (1) at (2.5,-3) {$\lb\cP^{{\omega_{\text{III}}}}_{\text{CFT}}\rb_{0}$};
\node at (2.5,-4) {$g_2g_3$};
\node[RoyalBlue] at (1.85,-2.5) {$g_3$};
\node at (2.5,-2) {\textcolor{Blue}{$g_3g_1^4$}};
\node at (3.5,-3) {$\oplus$};
\node[red] at (3.5,-3.7) {$g_1$};
\node (2) at (4.5,-3) {$\lb\cP^{{\omega_{\text{III}}}}_{\text{CFT}}\rb_{1}$};
\node at (4.5,-4) {${}^{g_1}(g_2g_3)$};
\node[RoyalBlue] at (3.85,-2.45) {${}^{g_1}g_3$};
\node at (4.5,-2) {\textcolor{Blue}{${}^{g_1}(g_3g_1^4)$}};
\node (2b) at (4.55,-3.2) {};
\node at (5.5,-3) {$\oplus$};
\node[red] at (5.5,-3.7) {$g_1$};
\node at (6.5,-4) {${}^{g_1^2}(g_2g_3)$};
\node[RoyalBlue] at (5.85,-2.4) {${}^{g_1^2}g_3$};
\node at (6.5,-2) {\textcolor{Blue}{${}^{g_1^2}(g_3g_1^4)$}};
\node (3) at (6.5,-3) {$\lb\cP^{{\omega_{\text{III}}}}_{\text{CFT}}\rb_{2}$};
\node (3b) at (6,-3.3) {};
\node (3c) at (7,-3.3) {};
\node at (7.5,-3) {$\oplus$};
\node[red] at (7.5,-3.7) {$g_1$};
\node (4) at (8.5,-3) {$\lb\cP^{{\omega_{\text{III}}}}_{\text{CFT}}\rb_{3}$};
\node at (8.5,-3.9) {${}^{g_1^3}(g_2g_3)$};
\node[RoyalBlue] at (7.85,-2.4) {${}^{g_1^3}g_3$};
\node at (8.5,-1.95) {\textcolor{Blue}{${}^{g_1^3}(g_3g_1^4)$}};
\node (4b) at (6.55,-3.2) {};
\draw[red,-stealth] (1) [bend right=30] to (2);
\draw[red,-stealth] (2) [bend right=30] to (3b);
\draw[red,-stealth] (3c) [bend right=30] to (4);
\draw[black,-stealth] (1) edge [in=-70, out=-110,looseness=5] (1);
\draw[black,-stealth] (2) edge [in=-70, out=-110,looseness=5] (2);
\draw[black,-stealth] (3) edge [in=-70, out=-110,looseness=5] (3);
\draw[black,-stealth] (4) edge [in=-70, out=-110,looseness=5] (4);
\draw[RoyalBlue,-stealth] (1) edge [in=70, out=110,looseness=5] (1);
\draw[Blue,-stealth] (1) edge [in=60, out=120,looseness=4.5] (1);
\draw[RoyalBlue,-stealth] (2) edge [in=70, out=110,looseness=5] (2);
\draw[Blue,-stealth] (2) edge [in=60, out=120,looseness=4.5] (2);
\draw[RoyalBlue,-stealth] (3) edge [in=70, out=110,looseness=5] (3);
\draw[Blue,-stealth] (3) edge [in=60, out=120,looseness=4.5] (3);
\draw[RoyalBlue,-stealth] (4) edge [in=70, out=110,looseness=5] (4);
\draw[Blue,-stealth] (4) edge [in=60, out=120,looseness=4.5] (4);
\end{scope}
\end{tikzpicture} 
\ee
Using the club sandwich SymTFT approach, inputting $\cP^{{\omega_{\text{III}}}}_{\text{CFT}}$ as physical boundary of $\cZ(\Vec_{\Z_2^3}^{\omega_{\III}})$, 
the gapped twin phases $\cP^{\Vec_G^\omega}_{A^\omega_{15}}$ and $\cP^{\Vec_G^\omega}_{A^\omega_{16}}$ can be obtained from $\cP^{\Vec_G^\omega}_{A^\omega_3}$ via the following two sequential condensations:
\begin{itemize}
    \item For $\cP^{\Vec_G^\omega}_{A^\omega_{15}}$, adding to $\cP^{{\omega_{\text{III}}}}_{\text{CFT}}$ the local relevant deformation $e_a$ spontaneously breaks $\Z_2^{\langle a \rangle}$. This leads to the gapless phase corresponding to the algebra \nameref{Aom:4}, which is a gSSB with four vacua. Next, condensing the anyons $([c],2),([b],2),([bc],2)$ and using the map of anyons to $\cZ(\Vec_G^\omega)$ \eqref{eqn:map-of-anyon-A3omega}, each $\cP^{{\omega_{\text{III}}}}_{\text{CFT}}$ in $\cP^{\Vec_G^\omega}_{A_3^\omega}$ is driven to a gapped phase, preserving subgroups conjugate to $H_1$, while $g_2g_1^4$ is spontaneously broken: we thus obtain $\cP^{\Vec_G^\omega}_{A_{15}^\omega}$ \eqref{eqn:VecG32A15phase_omega}.
    
    \item For $\cP^{\Vec_G^\omega}_{A^\omega_{16}}$, adding to $\cP^{{\omega_{\text{III}}}}_{\text{CFT}}$ the local relevant deformation $e_b$ spontaneously breaks $\Z_2^{\langle b \rangle}$. This leads to the gapless twin phase corresponding to the algebra \nameref{Aom:5}, which has four universes. 
    Next, condensing the anyons $([c],2),([a],2),([ac],2)$ and using the map of anyons to $\cZ(\Vec_G^\omega)$ \eqref{eqn:map-of-anyon-A3omega}, each $\cP^{{\omega_{\text{III}}}}_{\text{CFT}}$ in $\cP^{\Vec_G^\omega}_{A_3^\omega}$ is driven to a gapped phase, preserving subgroups conjugate to $H_2$, while $g_3$ is spontaneously broken: we thus obtain $\cP^{\Vec_G^\omega}_{A_{16}^\omega}$ \eqref{eqn:VecG32A16phase_omega}.
    \end{itemize}

\begin{equation}\begin{split}
\begin{tikzpicture}[baseline]
\begin{scope}[shift={(-1,2.8)}]
\node (H) at (0,0) {$\cP^{\Vec_G^\omega}_{A^\omega_3}$};
\node (H1) at (-1,-1.5) {$\cP^{\Vec^\omega_G}_{A^\omega_{15}}$};
\node[Blue,left] at (-1.35,-0.8) {$\langle A_{15}^{\Z_2^3,\omega_{\text{III}}}\mapsto A_{15}^\omega\rangle\neq 0$};
\node[RoyalBlue,right] at (1.35,-0.8) {$\langle A_{16}^{\Z_2^3,\omega_{\text{III}}}\mapsto A_{16}^\omega\rangle\neq 0$};
\node (H2) at (1,-1.5) {$\cP^{\Vec^\omega_G}_{A^\omega_{16}}$};
\draw[Blue,-stealth] (H) to (H1);
\draw[RoyalBlue,-stealth] (H) to (H2);
\end{scope}
\end{tikzpicture}
\end{split}\end{equation}
The different order/disorder parameters \eqref{eq:tilde_A15}-\eqref{eq:tilde_A16} for the input transition on the $\cZ(\Vec_{{\Z_2^3}}^{\omega_{\text{III}}})$ physical boundary are mapped to the same anyons in the twin Lagrangian algebras \eqref{eqn:map-of-anyon-A3omega}, giving rise to transition \eqref{eq:G-Aom3-theory} for twin gapped phases with anomalous $G$-symmetry.

Importantly, the anomaly prevents the pair of Landau transition between gapped phases preserving the groups $H_1<H$, as well as  $H>H_2$ with $H$ a subgroup containing both $H_1$ and $H_2$. Instead, changing the sign of the relevant deformation of the order parameter for $\cP^{\Vec^\omega_G}_{A^\omega_{15}}$, drives the system to a gapless phase with larger symmetry $H$ and then to gap by breaking minimal symmetry, the theory one must break $H_2$, i.e. to the phase $\cP^{\Vec^\omega_G}_{A^\omega_{16}}$. A more in depth characterization of transitions including deconfined quantum critical points (DQCPs) through the structure of condensable algebras will be discussed in \cite{SWW}.

\subsubsection{Phase Transitions Respecting Non-Invertible Symmetries}

We now consider symmetric phases associated to the twin algebras with respect to a non-invertible symmetry $\cS^\omega:=(\Vec_{G_{32}}^\omega)_{\cM(H_1,1)}^*$, obtained by taking the symmetry Lagrangian algebra of the SymTFT to be \nameref{Aom:15}
\begin{align}
    \text{\nameref{Aom:15}}&=A(H_1, H_1, 1, 1) \\
    &\cong ([1],1) \oplus ([1],r_1) \oplus ([1],r_8) \oplus ([1],r_{10}) \nn\\ 
    & \quad \oplus ([g_2g_3],2) \oplus ([g_3],2\;g_1^6=-\sqrt{2}\zeta_{16})  \nn \\
    & \quad \oplus ([g_2],2) \oplus ([g_3],2\;g_1^6=\sqrt{2}\zeta_{16})\,. 
\end{align}
In other words, the symmetry category $\cS^\omega$ is obtained by gauging $H_1\subset G$ in $\Vec_G^\omega$. It has 8 simple objects, forming a set denoted by $\text{Irr}(\cS^\omega)$, consisting of:
\begin{itemize}
    \item Four 1-dimensional objects denoted by
    \begin{equation}
        S^\omega_{1,1_h},
    \end{equation}
    where $1_h$ denotes irreducible representation of $H_1$:
    \be
    \label{eqn:1drep-H1}
        \begin{array}{|c|c c c c|}\hline
         & 1 & g_2 g_3 & g_3 & g_2 \\
         \hline
    1_1 & 1 & 1 & 1 & 1 \\
    1_{g_2} & 1 & -1 & -1 & 1 \\
    1_{g_3} & 1 & -1 & 1 & -1 \\
    1_{g_2g_3} & 1 & 1 & -1 & -1 \\
    \hline
        \end{array}
    \ee
    
    \item Three 2-dimensional objects denoted by
    \begin{equation}
        S^{\omega}_{[g_1^2], +1}\,, \quad S^{\omega}_{[g_1^2], -1}\,, \quad S^\omega_{g_1^4}\,,
    \end{equation}

    \item One 4-dimensional object
    \begin{equation}
        S^\omega_{[g_1]}
    \end{equation}
\end{itemize}

The fusion rules are symmetric, being
\begin{equation}
    \begin{split}
        S^\omega_{1,1_h} \otimes S^\omega_{1,1_f} & \cong S_{1,1_{hf}}\,, \\
        S^\omega_{1,1_h} \otimes S^{\omega}_{[g_1^2], \pm 1} & \cong \begin{cases}
            S^{\omega}_{[g_1^2], \pm 1}, & h \in \{1, g_3\} \\
            S^{\omega}_{[g_1^2], \mp 1}, & h \in \{ g_2, g_2g_3\}
        \end{cases}\,,\\
        S^\omega_{1,1_h} \otimes S^\omega_{g_1^4} & \cong S^\omega_{g_1^4} \,, \\
        S^\omega_{1,1_h} \otimes S^\omega_{[g_1]} & \cong S^\omega_{[g_1]}\,, \\
        S^{\omega}_{[g_1^2], \pm 1} \otimes S^{\omega}_{[g_1^2], \pm 1} & \cong S^\omega_{1,1} \oplus S^\omega_{1,1_{g_3}} \oplus S^\omega_{g_1^4}\,,\\
        S^{\omega}_{[g_1^2], \pm 1} \otimes S^{\omega}_{[g_1^2], \mp 1} & \cong S^\omega_{1,1_{g_2}} \oplus S^\omega_{1,1_{g_2 g_3}} \oplus S^\omega_{g_1^4}\,,\\
        S^{\omega}_{[g_1^2], \pm 1} \otimes S^\omega_{g_1^4} & \cong S^{\omega}_{[g_1^2], + 1} \oplus S^{\omega}_{[g_1^2], - 1}\,, \\
        S^{\omega}_{[g_1^2], \pm 1} \otimes S^\omega_{[g_1]} & \cong 2 S^\omega_{[g_1]}\,,\\
        S^\omega_{g_1^4} \otimes S^\omega_{g_1^4} & \cong \bigoplus_{h\in H_1} S^\omega_{1,1_h}\,,\\
        S^\omega_{g_1^4} \otimes S^\omega_{[g_1]} & \cong 2 S^\omega_{[g_1]} \,,\\
        S^\omega_{[g_1]} \otimes S^\omega_{[g_1]} & \cong \bigoplus_{S \in \text{Irr}(\cS^\omega)\setminus S^\omega_{[g_1]}} \dim(S) S\,,
    \end{split}
\end{equation}
where $h,f\in H_1$. The 4-dimensional object $S^\omega_{[g_1],1}$ behaves in a similar way to the non-invertible object in a TY-category: its self fusion decomposes into a direct sum over all other simple objects with multiplicities being their quantum dimensions.

We now study the twin phases $\cP^{\cS^\omega}_{A_{15}^\omega}$ and $\cP^{\cS^\omega}_{A_{16}^\omega}$. Starting from $\cP^{\Vec_G^\omega}_{A_{15}^\omega}$ \eqref{eqn:VecG32A15phase_omega}, we gauge $H_1$ to obtain the corresponding theory $\cP^{\cS^\omega}_{A_{15}^\omega}$ with symmetry $\cS^\omega$ (note that the following is a diagrammatic way to represent the fusion rule of $\cS^\omega$, this is expected as $\cP^{\cS^\omega}_{A_{15}^\omega}$ is the $\cS^\omega$-full SSB)
\be
\label{eqn:Som-A15phase}
\begin{tikzpicture}[baseline,yscale=1.1, xscale=1.1]
\node at (1.5,0) {$\cP^{\cS^\omega}_{A^\omega_{15}}=$};
\begin{scope}[shift={(0,0)}]
\node (S0) at (3,2) {$(v^1_0)_{1_1}$};
\node (S0g3) at (3,1) {$(v^1_0)_{1_{g_3}}$};
\node (S0g2) at (3,-2) {$(v^1_0)_{1_{g_2}}$};
\node (S0g2g3) at (3,-1) {$(v^1_0)_{1_{g_2g_3}}$};
\node (S2p) at (5,2) {$(v^1_{2,6})_{+}$};
\node (S2m) at (5,-2) {$(v^1_{2,6})_{-}$};
\node (S4) at (5,0) {$v^1_{4}$};
\node (S1) at (7,0) {$v^1_{1,3,5,7}$};
%
%
\node[RoyalBlue] at (3.7,2.5) {$S^\omega_{[g_1^2],+1}$};
\draw[RoyalBlue, Stealth-Stealth] (S0g3) -- (S2p);
\draw[RoyalBlue, Stealth-Stealth] (S0) -- (S2p);
\draw[RoyalBlue, Stealth-Stealth] (S0g2g3) -- (S2m);
\draw[RoyalBlue, Stealth-Stealth] (S0g2) -- (S2m);
\draw[RoyalBlue, Stealth-Stealth, bend right=10] (S4) to (S2p);
\draw[RoyalBlue, Stealth-Stealth, bend left=10] (S4) to (S2m);
\draw[RoyalBlue,-stealth] (S1) edge [in=70, out=110,looseness=5] (S1);
\node[ForestGreen] at (7.3,1.2) {$S^\omega_{[g_1^2],-1}$};
\draw[ForestGreen, Stealth-Stealth] (S0g3) -- (S2m);
\draw[ForestGreen, Stealth-Stealth] (S0) -- (S2m);
\draw[ForestGreen, Stealth-Stealth] (S0g2g3) -- (S2p);
\draw[ForestGreen, Stealth-Stealth] (S0g2) -- (S2p);
\draw[ForestGreen, Stealth-Stealth, bend left=10] (S4) to (S2p);
\draw[ForestGreen, Stealth-Stealth, bend right=10] (S4) to (S2m);
\draw[ForestGreen,-stealth] (S1) edge [in=60, out=120,looseness=6] (S1);
%
\node[black] at (5.5,2.5) {$S^\omega_{g_1^4}$};
\draw[black, Stealth-Stealth] (S0) -- (S4);
\draw[black, Stealth-Stealth] (S0g3) -- (S4);
\draw[black, Stealth-Stealth] (S0g2) -- (S4);
\draw[black, Stealth-Stealth] (S0g2g3) -- (S4);
\draw[black,-stealth] (S2p) edge [in=70, out=110,looseness=5] (S2p);
\draw[black,-stealth] (S2m) edge [in=-70, out=-110,looseness=5] (S2m);
\draw[black,-stealth] (S1) edge [in=-70, out=-110,looseness=5] (S1);
%
\node[red] at (6.15,1.45) {$S^\omega_{[g_1]}$};
\draw[red, Stealth-Stealth] (S0) -- (S1);
\draw[red, Stealth-Stealth] (S0g3) -- (S1);
\draw[red, Stealth-Stealth] (S0g2) -- (S1);
\draw[red, Stealth-Stealth] (S0g2g3) -- (S1);
\draw[red, Stealth-Stealth] (S2p) -- (S1);
\draw[red, Stealth-Stealth] (S2m) -- (S1);
\draw[red, Stealth-Stealth] (S4) -- (S1);
%
\end{scope}
\end{tikzpicture} 
\ee
where, upon gauging $H_1=\Z_2\times \Z_2$ symmetry:  
\begin{itemize}
    \item $v^1_0$ in \eqref{eqn:VecG32A15phase_omega} is invariant under $H_1$ symmetry, hence it leads to four vacua $(v^1_0)_{1_h}$, $h\in H_1$, each dressed with an irreducible representation $1_{h}$ of $H_1$.
    \item Similarly, $v^1_4$ in \eqref{eqn:VecG32A15phase_omega} is invariant under $H_1$, it is dressed with the non-trivial (2-dimensional) projective representation of $H_1$, hence result in $v^1_{1,3,5,6}$. 
    \item $v^1_2$ and $v^1_6$ are swapped by $g_2$, hence combine together in the $H_1$-gauged theory, and get dressed by the two irreducible representations of $\Z_2^{\langle g_3\rangle}$, hence result in $(v^1_{2,6})_{+}$ and $(v^1_{2,6})_{-}$.
\end{itemize}

Similarly, the phase $\cP^{\cS^\omega}_{A_{16}^\omega}$ corresponding to $A_{16}^\omega$ is
\be
\label{eqn:Som-A16phase}
\begin{tikzpicture}[baseline,yscale=1.1, xscale=1.1]
\node at (1.5,0) {$\cP^{\cS^\omega}_{A^\omega_{16}}=$};
\begin{scope}[shift={(0,0)}]
\node (S0) at (3,2) {$(v^{2\prime}_0)_{1_1}$};
\node (S0g3) at (3,1) {$(v^{2\prime}_0)_{1_{g_3}}$};
\node (S0g2) at (3,-2) {$(v^{2\prime}_0)_{1_{g_2}}$};
\node (S0g2g3) at (3,-1) {$(v^{2\prime}_0)_{1_{g_2g_3}}$};
\node (S2p) at (5,2) {$(v^{2\prime}_{2,6})_{+}$};
\node (S2m) at (5,-2) {$(v^{2\prime}_{2,6})_{-}$};
\node (S4) at (5,0) {$v^{2\prime}_{4}$};
\node (S1) at (7,0) {$v^{2\prime}_{1,3,5,7}$};
%
%
\node[RoyalBlue] at (3.7,2.5) {$S^\omega_{[g_1^2],+1}$};
\draw[RoyalBlue, Stealth-Stealth] (S0g3) -- (S2p);
\draw[RoyalBlue, Stealth-Stealth] (S0) -- (S2p);
\draw[RoyalBlue, Stealth-Stealth] (S0g2g3) -- (S2m);
\draw[RoyalBlue, Stealth-Stealth] (S0g2) -- (S2m);
\draw[RoyalBlue, Stealth-Stealth, bend right=10] (S4) to (S2p);
\draw[RoyalBlue, Stealth-Stealth, bend left=10] (S4) to (S2m);
\draw[RoyalBlue,-stealth] (S1) edge [in=70, out=110,looseness=5] (S1);
%
\node[ForestGreen] at (7.3,1.2) {$S^\omega_{[g_1^2],-1}$};
\draw[ForestGreen, Stealth-Stealth] (S0g3) -- (S2m);
\draw[ForestGreen, Stealth-Stealth] (S0) -- (S2m);
\draw[ForestGreen, Stealth-Stealth] (S0g2g3) -- (S2p);
\draw[ForestGreen, Stealth-Stealth] (S0g2) -- (S2p);
\draw[ForestGreen, Stealth-Stealth, bend left=10] (S4) to (S2p);
\draw[ForestGreen, Stealth-Stealth, bend right=10] (S4) to (S2m);
\draw[ForestGreen,-stealth] (S1) edge [in=60, out=120,looseness=6] (S1);
%
\node[black] at (5.5, 2.5) {$S^\omega_{g_1^4}$};
\draw[black, Stealth-Stealth] (S0) -- (S4);
\draw[black, Stealth-Stealth] (S0g3) -- (S4);
\draw[black, Stealth-Stealth] (S0g2) -- (S4);
\draw[black, Stealth-Stealth] (S0g2g3) -- (S4);
\draw[black,-stealth] (S2p) edge [in=70, out=110,looseness=5] (S2p);
\draw[black,-stealth] (S2m) edge [in=-70, out=-110,looseness=5] (S2m);
\draw[black,-stealth] (S1) edge [in=-70, out=-110,looseness=5] (S1);
%
\node[red] at (6.15,1.45) {$S^\omega_{[g_1]}$};
\draw[red, Stealth-Stealth] (S0) -- (S1);
\draw[red, Stealth-Stealth] (S0g3) -- (S1);
\draw[red, Stealth-Stealth] (S0g2) -- (S1);
\draw[red, Stealth-Stealth] (S0g2g3) -- (S1);
\draw[red, Stealth-Stealth] (S2p) -- (S1);
\draw[red, Stealth-Stealth] (S2m) -- (S1);
\draw[red, Stealth-Stealth] (S4) -- (S1);
%
\end{scope}
\end{tikzpicture} 
\ee

Importantly, the twin gapless phases $\cP^{\cS^\omega}_{A^\omega_{15}}$, $\cP^{\cS^\omega}_{A^\omega_{16}}$ admit the same $\cS^\omega$-symmetry action and the same symmetry breaking pattern, confirming our result that phase transitions between twin phases do not observe hidden symmetry breaking, as discussed in \Cref{sec:no-rel-SSB}.

\vspace{2mm}
\noindent\textbf{Phase Transition between Twin Gapped Phases.} 
The transition theory $\cP^{\cS^\omega}_{A_3^\omega}$ between $\cP^{\cS^\omega}_{A_{15}^\omega}$ and $\cP^{\cS^\omega}_{A_{16}^\omega}$ can similarly be obtained from $\cP^{\Vec^\omega_G}_{A_3^\omega}$~\eqref{eq:G-Aom3-theory} by gauging $H_1$:
\be \label{eq:Som-A3-theory}
\begin{tikzpicture}[baseline,xscale=1.3]
\begin{scope}[shift={(-1,2.8)}]
\node at (1.2,-3) {$\cP^{\cS^\omega}_{A_3^\omega}=$};
\node (1) at (2.5,-3) {$(\cP^{{\omega_{\text{III}}}}_{\text{CFT}})_{0}$};
\node at (2.3,-4) {$S^\omega_{1,1_h}$, $S^\omega_{g_1^4}$};
\node at (3.5,-3) {$\oplus$};
\node[red] at (3.5,-3.8) {$S^\omega_{[g_1],1}$};
\node[red] at (5.5,-3.8) {$S^\omega_{[g_1],1}$};
\node (2) at (4.5,-3) {$(\cP^{{\omega_{\text{III}}}}_{\text{CFT}})_{1,3}$};
\node[RoyalBlue] at (4.5,-2) {$S^\omega_{[g_1^2],\pm 1}$};
\node (2b) at (4.55,-3.2) {};
\node at (5.5,-3) {$\oplus$};
\node (3b) at (6,-3.3) {};
\node (3) at (6.5,-3) {$(\cP^{{\omega_{\text{III}}}}_{\text{CFT}})_{2}$};
\node (4b) at (6.55,-3.2) {};
\draw[red,stealth-stealth] (1) [bend right=30] to (2);
\draw[red,stealth-stealth] (2) [bend right=30] to (3);
%
\draw[RoyalBlue,stealth-stealth] (1) [bend left=30] to (3);
\draw[RoyalBlue,stealth-stealth] (2) edge [in=70, out=110,looseness=5] (2);
\draw[black,-stealth] (1) edge [in=-70, out=-110,looseness=6] (1);
\draw[black,-stealth] (2) edge [in=-70, out=-110,looseness=6] (2);
\draw[black,-stealth] (3) edge [in=-70, out=-110,looseness=6] (3);
\end{scope}
\end{tikzpicture} 
\ee
Note that $S_{[g_1],1}$ acts non-invertibly: it maps $(\cP^{{\omega_{\text{III}}}}_{\text{CFT}})_{1,3}$ to $(\cP^{{\omega_{\text{III}}}}_{\text{CFT}})_{0}$ and $(\cP^{{\omega_{\text{III}}}}_{\text{CFT}})_{2}$.
\begin{itemize}
    \item For $\cP^{\cS^\omega}_{A^\omega_{15}}$, adding to $\cP^{{\omega_{\text{III}}}}_{\text{CFT}}$ the local relevant deformation $e_a$ spontaneously breaks $\Z_2^{\langle a \rangle}$. This leads to the gapless phase corresponding to the algebra \nameref{Aom:4}, which has four universes, with an extra idempotent corresponding to $v^1_4$. Next, we condensing the anyons $([c],2),([b],2),([bc],2)$ in the reduced TO $\cZ(\Vec_{\Z_2^3}^{\omega_\III})$. Using the map of anyons to $\cZ(\Vec_G^\omega)$ \eqref{app:map-of-anyon-G32omega}, we obtain four new idempotents, since $([c],2),([bc],2)$ are each mapped to an anyon in \nameref{Aom:15}, while $([b],2)$ is mapped to two distinct anyons in \nameref{Aom:15}. We thus obtain the gapped phase $\cP^{\cS^\omega}_{A_{15}^\omega}$ \eqref{eqn:VecG32A15phase_omega} with eight vacua.
    
    \item For $\cP^{\cS^\omega}_{A^\omega_{16}}$, adding to $\cP^{{\omega_{\text{III}}}}_{\text{CFT}}$ the local relevant deformation $e_b$ spontaneously breaks $\Z_2^{\langle b \rangle}$. This leads to the gapless phase corresponding to the algebra \nameref{Aom:5}, which has four universes, with an extra idempotent corresponding to $v^1_4$. Next, we condensing the anyons $([c],2),([a],2),([ac],2)$ in the reduced TO $\cZ(\Vec_{\Z_2^3}^{\omega_\III})$. Using the map of anyons to $\cZ(\Vec_G^\omega)$ \eqref{app:map-of-anyon-G32omega}, we obtain four new idempotents, since $([c],2),([a],2)$ are each mapped to an anyon in \nameref{Aom:16}, while $([ac],2)$ is mapped to two distinct anyons in \nameref{Aom:16}. We thus obtain the gapped phase $\cP^{\cS^\omega}_{A_{16}^\omega}$ \eqref{eqn:VecG32A16phase_omega-samesymm} with eight vacua.
    \end{itemize}
\begin{equation}\begin{split}
\begin{tikzpicture}
\begin{scope}[shift={(-1,2.8)}, scale =0.8]
\node (H) at (0,0) {$\cP^{\cS^\omega}_{A^\omega_3}$};
\node (H1) at (-1,-1.5) {$\cP^{\cS^\omega}_{A^\omega_{15}}$};
\node[Blue,left] at (-1.35,-0.8) {$\langle A_{15}^{\Z_2^3,\omega_{\text{III}}}\mapsto A_{15}^\omega\rangle\neq 0$};
\node[RoyalBlue,right] at (1.35,-0.8) {$\langle A_{16}^{\Z_2^3,\omega_{\text{III}}}\mapsto A_{16}^\omega\rangle\neq 0$};
\node (H2) at (1,-1.5) {$\cP^{\cS^\omega}_{A^\omega_{16}}$};
\draw[Blue,-stealth] (H) to (H1);
\draw[RoyalBlue,-stealth] (H) to (H2);
\end{scope}
\end{tikzpicture}
\end{split}\end{equation}
The different order/disorder parameters \eqref{eq:tilde_A15}-\eqref{eq:tilde_A16} for the input transition on the $\cZ(\Vec_{{\Z_2^3}}^{\omega_{\text{III}}})$ physical boundary are mapped to the same anyons in the twin Lagrangian algebras \eqref{eqn:map-of-anyon-A3omega}, giving rise to transition \eqref{eq:Som-A3-theory} for twin gapped phases with anomalous non-invertible symmetry $\cS^\omega$.

\section{Conclusions and Outlook}

The three main results in this paper are as follows: first the sharpening of the classification of condensable algebras for Drinfeld centers of group-theoretical fusion categories. Second, the definition of the concept of twin algebras in general, and their characterization in terms of group-theoretical symmetries. Then in combination with the more physics-oriented \cite{WGS} companion paper, we study implications of the twin algebras, both Lagrangian and non-Lagrangian, which have interesting new effects such as no hidden symmetry breaking and intrinsically non-Landau transitions. 

This systematic study of twins is possible due to the fact that we have a very precise and concrete characterization of twin algebras. It would be very interesting to extend this to non-group-theoretical fusion categories, such as the Tambara-Yamagami or metaplectic fusion categories and explore twins in this context. Another interesting extension is to higher-dimensions. In particular bosonic fusion 2-categories are group-theoretical and their algebra classification is very well understood \cite{Bhardwaj:2024qiv, Xu:2024pwd, Bullimore:2024khm,  Bhardwaj:2025piv,  Inamura:2025cum, Decoppet:2024htz}, and has similarities to the group-theoretical case studied here. Studying extensions to higher dimensional DQCPS and twin-phase transitions is certainly an interesting avenue to pursue.

\subsection*{Acknowledgments}
We are grateful to Andrea Antinucci, Thomas Bartsch, Christian Copetti, Andr\'{e} Henriques, Kansei Inamura, Ryohei Kobayashi, Kantaro Ohmori, Sal Pace, Alex Turzillo,  Rui Wen, Yunqin Zheng   for discussions. 
SSN thanks the anonymous referee(s?) who over the years have kept insisting in their reports on asking about the algebra structure of condensable algebras, for initiating this exciting exploration. 
A.W. thanks Xiao-Gang Wen for suggesting the computation algebra system GAP \cite{GAP4}. This work is supported by the UKRI Frontier Research Grant, underwriting the ERC Advanced Grant ``Generalized Symmetries in Quantum Field Theory and Quantum Gravity''.

\appendix

\section{Mathematical Preliminary}
\label{append:math-details}

\subsection{Definition of Condensable Algebras}
\label{app:defn-cond-alg}
We assume some familiarity with basic knowledge on tensor categories and algebras, see e.g., \cite{Fuchs:2002cm, Frohlich:2003hm, EGNO}. 
Here we present the following definitions in an algebraic way, see \cite{Fuchs:2002cm} for the same definitions using string diagrams.
\begin{definition}
    An \emph{algebra} $(A, \mu, \iota)$ in a tensor category $\cC$ consists of an object $A$ in $\cC$, multiplication $\mu\in \Hom_{\cC} (A\otimes A, A)$ and unit $\iota\in \Hom_{\cC}(1_{\cC}, A)$ satisfying compatibility conditions. 

    Similarly, a \emph{coalgebra} $(A, \Delta, \eps)$ in a tensor category $\cC$ consists of an object $A$ in $\cC$, comultiplication $\Delta \in \Hom_\cC(A, A\otimes A)$ and counit $\eps\in \Hom_{\cC}(\A, 1_\cC)$ satisfying compatibility conditions.

    A \emph{Frobenius algebra} $A$ is an object $A\in \cC$ that is both an algebra $(A, \mu, \iota)$ and a coalgebra $(A, \Delta, \eps)$ such that
    \begin{equation}
        \begin{split}
            \Delta \circ \mu & = (\id_A \otimes \mu) \circ (\Delta \otimes \id_A) \\
            & = (m \otimes \id_A) \circ (\id_A \otimes \Delta).
        \end{split}
    \end{equation}
\end{definition}
In this paper, we only consider pivotal monoidal categories $\cC$: in particular\footnote{We omit the full definition of pivotal category but the notation relevant for our discussion.}, each object $X\in \cC$ has an associated right dual object $X^\vee \in\cC$ with coevaluation morphism:
\begin{equation}
    \text{coev}_X\in \Hom_{\cC}(1, X\otimes X^\vee),
\end{equation}
and a left dual object ${}^\vee \! X=X^\vee$ with coevaluation morphism\footnote{See e.g., \cite{Fuchs:2002cm} for the relation between $\text{coev}_X$ and $\wt{\text{coev}}_X$.}
\begin{equation}
    \wt{\text{coev}}_X \in \Hom_{\cC}(1, X^\vee\otimes X).
\end{equation}

The following definitions can be found in \cite{Fuchs:2002cm, Frohlich:2003hm}:
\begin{definition}
    A Frobenius algebra $(A, \mu, \iota, \Delta, \eps)$ in a (pivotal) tensor category $\cC$ is 
    \begin{itemize}
        \item \emph{symmetric} if 
        \begin{equation}
            \begin{split}
                \bigl( (\eps \circ \mu) \otimes \id_{A^\vee} \bigr) \circ \alpha_{A,A,A^\vee}^{-1} \circ ( \id_{A} \otimes \text{coev}_{A}) \\
                = \bigl( \id_{A^\vee} \otimes (\eps \circ \mu) \bigr) \alpha_{A^{\vee}, A, A} \circ (\wt{\text{coev}}_{A}\otimes \id_A).
            \end{split}
        \end{equation}
        \item \emph{normalized-special} if
        \begin{equation}
            \mu \circ \Delta = \id_A, \quad \eps \circ \iota = \dim(A) \id_{1_\cC}.
        \end{equation}
    \end{itemize}
\end{definition}

\begin{definition}
    An algebra $(A, \mu, \iota)$ in a braided tensor category $\cB$ is
    \begin{itemize}
        \item \emph{connected} if $\dim(\Hom_{\cB}(1_\cB, A))=1$; 
        \item \emph{commutative} if $\mu = \mu \circ c_{A,A}$, where $c$ denotes the braiding in $\cB$.
    \end{itemize}
\end{definition}

Recall the following definition from \cite{Kong:2013aya}
\begin{definition}
    A \emph{condensable algebra} $A$ in a braided tensor category $\cB$ is a connected commutative symmetric normalized-special Frobenius algebra in $\cB$.
\end{definition}

\begin{definition}
    Let $A$ be a condensable algebra in a modular tensor category $\cB$, its local modules form a modular tensor category $\cB^\text{loc}_{A}$, called the \emph{reduced TO} associated to $A$.
\end{definition}
In the main text, where we focus on Drinfeld center of group-theoretical fusion categories,
an explicit equivalence of modular tensor categories $\cZ(\Vec_{H/N}^{\alpha}) \cong \cZ(\Vec_G^\omega)_{A(H, N, \gamma, \epsilon)}^{\text{loc}}$ with $\cZ(\Vec_G^\omega)_{A(H, N, \gamma, \epsilon)}^{\text{loc}}$ expressed as a subcategory of $\cZ(\Vec_G^\omega)$ is called an \emph{interface} associated to the condensable algebra $A(H, N, \gamma, \epsilon)$.

Lagrangian algebras can be defined in terms of condensable algebras with trivial reduced TOs, or equivalently in terms of quantum dimensions \cite{davydov2013witt}.
\begin{definition}
\label{defn:Lagalg}
    A \emph{Lagrangian algebra} $A$ in a braided tensor category $\cB$ is a condensable algebra such that 
    \begin{equation}
        \cB^\text{loc}_{A} \cong \Vect.
    \end{equation}
    Equivalently, a Lagrangian algebra is a condensable algebra of maximal dimension. More precisely, $\dim(A)^2=\dim(\cB)$.
\end{definition}

\subsection{Classification Data of Condensable Algebras in $\cZ(\Vec_G^\omega)$}
This appendix consists of the technical details in the classification of condensable algebras and their reduced TOs in $\cZ(\Vec_G^\omega)$. From Theorem~\ref{thm:DS-cond-alg}, such a condensable algebra can be labeled by $(H, N, \gamma, \epsilon)$, where $H\subset G$ a subgroup, $N\triangleleft H$ a normal subgroup, $\gamma:N\times N \rightarrow U(1)$ such that $d\gamma=\omega|_N$ and $\epsilon:H \times N \rightarrow U(1)$ satisfying the following compatibility conditions:
    \begin{align}
        \frac{\epsilon(h_1h_2,n)}{\epsilon(h_1,{}^{h_2}n)\epsilon(h_2,n)}&=\omega(h_1,h_2|n), \label{eqn:epsilon_relation_transgression}\\
    	\frac{\epsilon(h,n_1n_2)}{\epsilon(h,n_1)\epsilon(h,n_2)} &= \omega(h|n_1,n_2)\frac{\gamma({}^{h}n_1,{}^{h}n_2)}{\gamma(n_1,n_2)},  \label{eqn:epsilon_relation_multiplication}\\
    	\epsilon(n_1,n_2)& = \frac{\gamma(n_1,n_2)}{\gamma({}^{n_1}n_2,n_1)}\,,\label{eps_from_gamma}
    \end{align}
    where $h,h_1,h_2\in H$, $n,n_1,n_2\in N$, ${}^{g}\!f = gfg^{-1}$ for any $g,f\in G$, and for any $f,g,h\in G$ we denote 
    \begin{align} \label{eqn:alpha_proj_action}
    	\omega(f,g|h) & =\frac{\omega(f,{}^{g}h,g)}{\omega(f,g,h)\omega({}^{fg}h,f,g)},\\
        \Omega_f(g,h)=\omega(f|g,h)& =\frac{\omega(f,g,h)\omega({}^{f}\!g,{}^{f}h,f)}{\omega({}^{f}\!g,f,h)}, \label{eq:Omega_f}
    \end{align} 
    where $\omega(f,g|h)$ is the transgression.

One can note that from \eqref{eps_from_gamma}, $\eps|_{N\times N}$ is uniquely determined from $\gamma$, while \eqref{eqn:epsilon_relation_multiplication} implies that, for any $h\in H$, $\eps(h,\cdot):N\to U(1)$ is a 1-dimensional projective representation of $N$.

The reduced TO obtained from $A(H, N, \gamma, \epsilon)$ in $\cZ(\Vec_G^\omega
)$ is, by Theorem~\ref{thm:DS-reducedTO}, $\cZ(\Vec_{H/N}^\alpha)$, with
\begin{align} \label{eqn:twist_reduced_TO}
    	\alpha(x,y,z) & = \omega \bigl(s(x),s(y),s(y)^{-1}s(x)^{-1}s(xyz)\bigr)\nn\\
        & \quad \times \gamma\bigl(\tau(y,z),\tau(x,yz)\bigr)\gamma\bigl( \tau'(x,y,z)^{-1},\tau'(x,y,z) \bigr)\nn\\
        & \quad \times \gamma\bigl( \tau(y,z)\tau(x,yz),\tau'(x,y,z) \bigr)\nn\\
    	& \quad \times \epsilon\bigl( s(xyz)^{-1}s(x)s(y), \tau(x,y) \bigr)\,,
    \end{align}
    here $s:H/N\rightarrow H$ is a section of the quotient map $p:H\rightarrow H/N$, $x,y,z\in H/N$, and 
    \be
    \ba
     \tau(y,z) & = s(z)^{-1}s(y)^{-1}s(yz),\\
        \tau'(x,y,z)&= s(xyz)^{-1}s(x)s(y)s(xy)^{-1}s(xyz),
    \ea
    \ee
    are elements in $N$. As noted in \cite{Hannah:2023tae}, $\alpha$ is such that its pullback $p^*\alpha$ is equivalent to $\omega|_H$.

\section{Proofs of Algebra Properties}

\subsection{Isomorphic Condensable Algebras}
\label{append:alg_Morita_equiv}
Here we prove Lemma~\ref{lem:Morita_equiv}. 
This is achieved via Morita equivalence relation on algebras\footnote{Not to be confused with the categorical Morita equivalence, which also plays an important role in the proof, see \cite{EGNO}.}, which is equivalent to isomorphism relation for condensable algebras.


In \Cref{sec:monoidal_structure_conjugation}, we give details on conjugation action $\ad_g$ on $\Vect_G^\omega$ following \cite{Natale2017}. In \Cref{sec:induced_functor_on_Morita_dual}, we induce braided autoequivalence $\wt{\ad}_{g_1,g_2}$ on $\cZ(\Vec_G^\omega)$ from the above conjugation functor. Finally in \Cref{sec:Morita_Equiv_Cond_algs}, we show that the induced functors $\wt{\ad}_{g_1,g_2}$ identify the Morita equivalent, hence isomorphic, condensable algebras in $\cZ(\Vec_G^\omega)$. 

\subsubsection{Preliminary: Conjugation Action on $\Vec_G^\omega$}
\label{sec:monoidal_structure_conjugation}
This section consists of the conjugation action on $\Vec_G^\omega$, which identifies Morita equivalent algebras in $\Vec_G^\omega$ \cite{Natale2017}. 

For $g\in G$, the functor
\be
\label{eqn:conjugation_Vec_G}
    \ad_g: \Vec_G^{\omega} \rightarrow \Vec_G^\omega
\ee
sends an object $V$ of $\Vec_G^{\omega}$ to ${}^{g}V$, such that ${}^{g}V=V$ as a vector space, with $G$-grading 
\be
    ({}^gV)_x = V_{{}^{g}x},
\ee
for $x\in G$, and maps a morphism $f$ of $\Vec_G^\omega$ to $f$. This defines a monoidal functor with monoidal structure
\begin{align}
    (\ad_g^2)_{U,V}: {}^{g}U \ot {}^{g}V & \rightarrow {}^{g}(U \ot V)\nn\\
    u \ot v & \mapsto \Omega_g(x,y)^{-1} u \ot v,
\end{align}
where $u\in U_x$, $v\in V_y$ and $\Omega$ is defined in \eqref{eq:Omega_f}. For notational simplicity, we often write 
\be
    (\ad_g^2)_{x,y} := (\ad_g^2)_{U,V}
\ee
if the objects $U$ and $V$ are simple and graded by $x\in G$ and $y\in G$, respectively.

For any subgroup $H\subset G$ and a 2-cochain $\gamma\in C^2(H, U(1))$ such that $d\gamma = \omega|_H$, $A(H,\gamma) = \mathbb{C}_{\gamma}[H]$ denotes the (Frobenius) group algebra with multiplication twisted by $\gamma$, considered as an algebra in $\Vec_G^\omega$ (any semisimple indecomposable algebra $A$ in $\Vec_G^\omega$ such that $\dim\lb \Hom_{\Vec_G^\omega}(1,A) \rb=1$ is of this form).
The following result was shown in \cite{Natale2017} 
\begin{theorem}[\cite{Natale2017}, Thm. 1.2]
\label{thm:Natale_MorEquivAlg}
    Algebras $A(H_1,\gamma_1)$ and $A(H_2,\gamma_2)$ in $\Vec_G^\omega$ are Morita equivalent if and only if they are related by $\ad_g$,
\be
    \ad_g(A(H_1,\gamma_1))\cong A(H_2,\gamma_2),
\ee
for some $g\in G$.
\end{theorem}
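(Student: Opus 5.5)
The plan is to use the standard characterization of Morita equivalence in $\Vec_G^\omega$ via module categories, and to run the two implications separately.

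\textbf{Step 1: reduce to module categories.} Morita equivalence of $A(H_1,\gamma_1)$ and $A(H_2,\gamma_2)$ means the categories of right modules $\cM(H_i,\gamma_i):=\Mod_{\Vec_G^\omega}(A(H_i,\gamma_i))$ are equivalent as left $\Vec_G^\omega$-module categories. We use the known description of these categories. The simple objects of $\cM(H,\gamma)$ are the induced modules $\bC_x\otimes A(H,\gamma)$, labelled by cosets $xH\in G/H$. Each acquires a projective action of its stabilizer. So $\cM(H,\gamma)$ is a transitive $G$-set $G/H$ equipped with a 2-cochain twist.

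\textbf{Step 2: the ``if'' direction.} Suppose $\ad_g(A(H_1,\gamma_1))\cong A(H_2,\gamma_2)$. Using the monoidal structure $(\ad_g^2)$, one checks that $\ad_g(A(H,\gamma))=A({}^gH,\gamma^{g^{-1}}\Omega^{\pm1})$, where the $\Omega_g$ factor comes from the monoidal structure. Concretely, $\bC_g\otimes A(H,\gamma)\otimes\bC_{g^{-1}}$ is an algebra isomorphic to $\ad_g(A(H,\gamma))$. The object $\bC_g\otimes A(H,\gamma)$ is then an invertible bimodule between $A(H,\gamma)$ and $\bC_g\otimes A\otimes\bC_{g^{-1}}$. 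This gives Morita equivalence; the only work is bookkeeping the associator phases, which produce exactly $\Omega_g$.

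\textbf{Step 3: the ``only if'' direction.} Suppose $\Mod(A_1)\simeq\Mod(A_2)$ as $\Vec_G^\omega$-module categories. The equivalence sends the simple module $A_1$ to some simple $\bC_x\otimes A_2$. The stabilizer of $A_1$ under the $G$-action on isoclasses of simples is $H_1$, and the stabilizer of $\bC_x\otimes A_2$ is ${}^xH_2$. Hence $H_1={}^xH_2$. Next compare internal ends: $\underline{\End}(A_1)\cong A_1$, while $\underline{\End}(\bC_x\otimes A_2)\cong\bC_x\otimes A_2\otimes\bC_{x^{-1}}\cong\ad_x(A_2)$ as algebras. Module functor equivalences preserve internal Hom, so $A_1\cong\ad_x(A_2)$, and taking $g=x^{-1}$ gives the claim.

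\textbf{Main obstacle.} The delicate point is the cocycle identification: showing that the algebra structure on $\underline{\End}(\bC_x\otimes A_2)$ matches $\ad_x(A_2)$ with precisely the twist $\Omega_x$ of \eqref{eq:Omega_f}, rather than a cohomologous variant. I would first verify this on basis vectors $\delta_h$, tracking $\omega$ through the two associators, and then absorb any coboundary using the freedom in the isomorphism $A(H,\gamma)\cong A(H,\gamma\,dc)$.
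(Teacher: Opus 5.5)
The paper does not prove this statement. It quotes it from Natale and uses it as a black box in Appendix~\ref{append:alg_Morita_equiv}, so there is no internal argument to compare against. Your proposal gives a self-contained proof, and it is correct.

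Both directions are instances of one standard fact about the indecomposable semisimple module category $\cM=\Mod_{\Vec_G^\omega}(A)$. For any simple $M\in\cM$:
\begin{itemize}
\item $\cM\simeq\Mod(\underline{\End}(M))$;
\item $\underline{\End}(M)$ is connected, since $\Hom(1,\underline{\End}(M))\cong\End_\cM(M)$;
\item a module equivalence $F$ induces an algebra isomorphism $\underline{\End}(M)\cong\underline{\End}(F(M))$.
\end{itemize}
Take $M=\bC_x\otimes A$, whose internal End is $\bC_x\otimes A\otimes\bC_{x^{-1}}$. This gives your Step~2 without checking by hand that $\bC_g\otimes A$ is an invertible bimodule, and it gives your Step~3 verbatim. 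The stabilizer comparison in Step~3 is not needed, because an algebra isomorphism already matches supports. At the end, instead of appealing to $\ad_{x^{-1}}\ad_x\cong\id$, you can simply run the same argument with $F^{-1}$.

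The step you single out is the only real content, and it closes exactly. Let $C_x$ act by $\bC_a\mapsto\bC_{xax^{-1}}$, and take $\sigma_a\colon\bC_x\otimes\bC_a\to C_x(\bC_a)\otimes\bC_x$ to be the identity of the line $\bC_{xa}$. Use the associator $\omega(a,b,c)$ on $(\bC_a\otimes\bC_b)\otimes\bC_c\to\bC_a\otimes(\bC_b\otimes\bC_c)$. Requiring $\sigma$ to be compatible with tensor products involves the three associators at $(x,a,b)$, $({}^xa,x,b)$ and $({}^xa,{}^xb,x)$. This forces the tensor structure of $C_x$ to be
\[
J_{a,b}=\frac{\omega(x,a,b)\,\omega({}^xa,{}^xb,x)}{\omega({}^xa,x,b)}=\Omega_x(a,b),
\]
and the coherence condition reads $dJ=\omega/\omega^x$, matching the paper's $d\Omega_x=\omega/\omega^x$.

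Hence $\ad_x$ is monoidally isomorphic to $C_x$ or to $C_{x^{-1}}$, depending on how the grading in its definition is oriented; this is immaterial since $g$ is quantified. The isomorphism $\underline{\End}(\bC_x\otimes A)\cong C_x(A)\cong\ad_{x^{\pm1}}(A)$ then follows functorially, with no basis computation and no coboundary to absorb.

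This check cannot be waved through as bookkeeping. Changing $J$ by a non-trivial class $\beta\in H^2(G,U(1))$ gives another monoidal autoequivalence with the same action on objects. It sends $A(H,\gamma)$ to a twisted group algebra whose cocycle is changed by $\beta|_{{}^xH}$, and that algebra is in general not Morita equivalent to $A(H,\gamma)$. For example, take $\omega=1$, $G$ abelian, $x=1$ and $\beta|_H$ non-trivial. The theorem holds precisely because $\Omega_x$ is the tensor structure of an inner autoequivalence.
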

We denote the $\Vec_G^\omega$-module category associated to $A(H,\gamma)$ by $\cM(H,\gamma)$ (i.e., $\cM(H, \gamma)$ is the category of $A(H, \gamma)$-modules in $\Vec_G^\omega$).



\subsubsection{Induced Functors on Morita Duals}
\label{sec:induced_functor_on_Morita_dual}
We now describe the induced tensor functor on a Morita dual category $\cC_\cM^*$, from a tensor functor on $\cC$ that preserves the Morita classes of algebras (the precise meaning will be explained in what follows). Such induced functors will be used to exhaust the Morita equivalence relations on the condensable algebras in $\cZ(\Vec_G^\omega)$ in the next section. In what follows, all $\cC$-module categories are assumed to be indecomposable.

For any fusion category $\cC$ and an indecomposable $\cC$-module category $\cM$, let $A_\cM$ denote the algebra in $\cC$ such that $\cM=\Mod_\cC(A_\cM)$ and the Morita dual $\cC_\cM^*$ is 
\be
    \cC_\cM^* :=\End_\cC(\cM) \cong \Bimod_\cC(A_\cM).
\ee
Let $F: \cC \rightarrow \cC$ be a tensor autoequivalence that preserves the Morita equivalence class, i.e., 
\be
\label{eqn:Morita_class_preserving}
    \Mod_\cC(F(A)) \cong \Mod_\cC(A)
\ee 
as $\cC$-module categories for any algebra $A$ in $\cC$. 
It induces a tensor endofunctor on the Morita dual category $\cC_{\cM}^*$
\begin{equation}
    \begin{split}
        \wt{F}: \Bimod_{\cC}(\cA_{\cM}) & \rightarrow \Bimod_{\cC}(F(\cA_{\cM})), \\
        (N,p,q) & \mapsto (F(N), \wt{p}, \wt{q})
    \end{split}
\end{equation}
where $(N,p,q)$ is the $A_\cM$-bimodule consists of $N \in \cC$, left $A_\cM$-action $p:A_\cM \ot N \rightarrow N$ and right action $q:N \otimes A_\cM\rightarrow N$, to the $F(\cA_{\cM})$-bimodule $(F(N), \wt{p}, \wt{q})$ with left action
\be
    \wt{p} = F(p) \circ F^2_{A_\cM, N}
\ee
and right action
\be
    \wt{q} = F(q) \circ F^2_{N,A_\cM},
\ee
where $F^2$ denotes the tensor structure of $F$. The induced functor $\wt{F}$ preserves the Morita equivalent classes of algebras.

Equivalently, using the description of the Morita dual in terms of $\cC$-module endofunctors $\cC_\cM^* = \End_{\cC}(\cM)$, the induced functor can be described by composing with $\cC$-module equivalences. More explicitly, by assumption (see equation~\eqref{eqn:Morita_class_preserving}), there are $\cC$-module equivalences
\begin{align}
    \phi& : F_*(\cM) \rightarrow \cM,\\
    \psi& : \cM \rightarrow F_*(\cM),
\end{align}
where $F_*(\cM)$ denotes the $\cC$-module structure on $\cM$. The induced functor can be expressed as
\begin{equation}
    \begin{split}
        \wt{F}: \End_\cC(\cM) & \rightarrow \End_\cC(\cM)\\
    \Phi & \mapsto \phi \circ F_*(\Phi) \circ \psi. \label{eqn:induced_functor_module_endof}
    \end{split}
\end{equation}

Moreover, if $F$ a tensor autoequivalence on $\cC \boxtimes \cC^{\text{op}}$, for some fusion category $\cC$, i.e., $F: \cC \boxtimes \cC^{\text{op}} \rightarrow \cC \boxtimes \cC^{\text{op}}$. $\cC$ is a $\cC \boxtimes \cC^{\text{op}}$-module category and
the induced functor on $\cZ(\cC)=(\cC \boxtimes \cC^{\text{op}})_{\cC}^*$,
\be
    \wt{F}: \cZ(\cC) \rightarrow \cZ(\cC)
\ee
is a braided autoequivalence.

From the module endofunctor perspective, the objects in $\cZ(\cC)$ are generated by tensoring with objects in $\cC$ (i.e., the left and right $\cC$-module actions), and the half-braidings in $\cZ(\cC)$ are the same as the commutation relation between the (both left and right) $\cC$-module actions -- since the functors in \eqref{eqn:induced_functor_module_endof} are all $(\cC\boxtimes \cC^{\text{op}})$-module functors, $\wt{F}$ is braided. Note that different monoidal autoequivalences $F$ can induce the same $\wt{F}$.

This will be applied in the next section to identify Morita equivalent algebras in $\cZ(\Vec_G^\omega)$.


\subsubsection{Morita Equivalent/Isomorphic Classes of Condensable Algebras}
\label{sec:Morita_Equiv_Cond_algs}
We now prove which condensable algebras are Morita equivalent in $\cZ(\Vec_G^\omega)$. Here, ``an algebra $A$ in $\Vec_G^\omega$'' refers to a semisimple connected algebra $A$, where connected means that $\dim\lb \Hom_{\Vec_G^\omega}(1,A) \rb=1$. Being connected in $\cC$ guarantees its category of modules in $\cC$ is an indecomposable $\cC$-module category \cite{davydov2013witt}.

From \cite{Ostrikmodule}, $\cZ(\Vec_G^\omega)$ and $\Vec_{G\times G}^{\omega_1/\omega_2}$ are Morita equivalent with $\Bimod_{\Vec_{G\times G}^{\omega_1/\omega_2}}(A(G^{\diag},1))\cong \cZ(\Vec_G^\omega)$, where 
for $i=1,2$, $\omega_i=p_i^*\omega$ denotes the pullback class along 
\begin{align}
    p_i:G \times G & \rightarrow G\nn\\
    (g_1,g_2) & \mapsto g_i,
\end{align}
projection onto the $i$-th copy of $G$, $G^\diag$ is the image of 
\begin{equation}
    \begin{split}
        \Delta^\diag:G & \rightarrow G\times G\\
        g & \mapsto (g, g),
    \end{split}
\end{equation}
and $A(G^\diag,1)$ is the algebra
\begin{equation}
    A(G^\diag,1) \cong \bigoplus_{g\in G^\diag}\bC_g,
\end{equation}
with multiplication follows from group multiplication.

We need an explicit equivalence of categories between $(\Vec_{G\times G}^{\omega_1/\omega_2})_{\cM(G^{\diag}, 1)}^*$ and $\cZ(\Vec_G^\omega)$. 
Such equivalences of categories are not unique, but the resulting Morita equivalence relation on condensable algebras does not depend on this choice. We fix the equivalence where, for $x\in G$, the $x$-graded component of an object in $\cZ(\Vec_G^\omega)$ is identified with the $(x,1)$-graded component of an object in $(\Vec_{G\times G}^{\omega_1/\omega_2})_{\cM(G^{\diag}, 1)}^*=\Bimod_{\Vec_{G\times G}^{\omega_1/\omega_2}}(A(G^{\diag},1))$, and the $\omega$-projective action of $g\in G$ in $\cZ(\Vec_G^\omega)$ is the bimodule action from the $(g,g)$-graded component of $A(G^{\diag},1)$ in $\Vec_{G\times G}^{\omega_1/\omega_2}$:

As objects in $\cZ(\Vec_G^\omega)$, the $g\in G$ action on the $x$-graded component comes from the $A(G^{\diag},1)$-bimodule structure as
\be
    (q|_{({}^gx, 1)\otimes (g,g)})^{-1} \circ p|_{(g,g)\otimes (x,1)},
\ee
where $p$ denotes the left $A(G^{\diag},1)$-module structure and $q$ the right $A(G^{\diag},1)$-module structure\footnote{One can also think of this action as given by the braiding.}. 

Under this equivalence of categories, the induced functor $\wt{\ad}_{(g,g')}$ changes the $G$-grading on an object via conjugation by $g$, and changes the $G$-action, derived from the $A(G^{\diag}, 1)$-bimodule structure, to 
\begin{align}
    {}^gx \,\wt{\cdot}\, v_{{}^gy} & = \frac{(\ad_{(g,g')}^2)_{(x,x),(y,1)}}{(\ad_{(g,g')}^2)_{({}^xy,1),(x,x)}} {}^g(x \cdot v_{y}) \nn\\
    & = \frac{\Omega_g({}^x y, x)}{\Omega_g(x,y)} {}^g(x \cdot v_{y}),
\end{align}
where $x,y\in G$, $\wt{\cdot}$ denotes the $G$-action after applying the functor $\wt{\ad}_{(g,g')}$, and $\cdot$ denotes the $G$-action before applying the functor, ${}^g(x \cdot v_{y})$ meaning conjugating the grading of $x \cdot v_{y}$ by $g$. The ratio of the $\Omega_g$ terms comes from the coherence isomorphisms of the monoidal functor $\ad_{(g,g')}$. (Note that a different choice of the above mentioned equivalence of categories can result in conjugation by $g'$, and the $G$-actions modify accordingly.)

Braided functors send condensable algebras to condensable algebras. Now we can apply the induced braided functor $\wt{\ad}_{(g,g')}$ to the condensable algebra $A(H,N,\gamma,\epsilon)$ in $\cZ(\Vec_G^\omega)$ to derive 
\be
    \wt{\ad}_{(g,g')}(A(H,N,\gamma,\epsilon)) = A({}^gH, {}^g N, \gamma',\epsilon'),
\ee
where, to determine $\gamma'$ and $\epsilon'$, we focus on basis elements $v_{1,{}^gn}$ and $v_{1,{}^gm}$, for some $m,n\in N$. $\gamma'$ can be deduced from the multiplication of $v_{1,{}^g m}, v_{1,{}^g n}\in A({}^gH, {}^g N, \gamma',\epsilon')$, where $\wt{\ad}_{(g,g')}$ modifies the multiplication by $(\ad_{(g,g')}^2)_{(m,1),(n,1)}$, i.e., 
\be
    v_{1,{}^g m} v_{1,{}^g n} = \gamma(m,n)\Omega_g(m,n)^{-1} v_{1,{}^g(mn)}.
\ee
Similarly, $\epsilon'$ can be computed from the ${}^g\!H$-action on $v_{1,{}^gn}$: for any $h\in H$, 
\begin{align}
    {}^gh \,\wt{\cdot}\, v_{1,{}^gn} & = \frac{\Omega_g({}^h n, h)}{\Omega_g(h,n)}{}^g(h\cdot v_{1,n}) \nn\\
    & = \frac{\Omega_g({}^h n, h)}{\Omega_g(h,n)} \epsilon(h,n) v_{1,{}^g(hnh^{-1})}.
\end{align}

Theorem~\ref{thm:Natale_MorEquivAlg} shows that two algebras in $\Vec_{G\times G}^{\omega_1/\omega_2}$ are Morita equivalent if and only if they are related by the conjugation action $\ad_{(g,g')}$~\eqref{eqn:conjugation_Vec_G} for some element $(g,g') \in G \times G$. Hence two condensable algebras $A(H, N, \gamma, \epsilon)$ and $A(H',N', \gamma', \epsilon')$ in $\cZ(\Vec_G^\omega)$ are Morita equivalent if and only if they are related by the induced functor $\wt{\ad}_{(g,g')}$. The backward direction is clear, and the forward direction can be shown by a contrapositive argument: if two such algebras are not related by any of the induced functor, then their equivalent counterpart as algebras in $\Vec_{G\times G}^{\omega_1/\omega_2}$ are not conjugate to each other via any $\ad_{(g,g')}$, which cannot be Morita equivalent by Theorem~\ref{thm:Natale_MorEquivAlg}, hence the original algebras in $\cZ(\Vec_G^\omega)$ are not Morita equivalent. 

Hence we conclude that two condensable algebras $A(H, N, \gamma, \epsilon)$ and $A(H',N', \gamma', \epsilon')$
are Morita equivalent in $\cZ(\Vec_{G}^\omega)$ if and only if 
\begin{align}
    H' & = {}^g\!H, & \gamma'({}^g h_1, {}^g h_2) & = \frac{\gamma(h_1,h_2)}{\Omega_g(h_1,h_2)},\nn\\
    N' & = {}^g\!N, & \epsilon'({}^{g}h, {}^{g}n) & =\frac{\Omega_g({}^{h}n,h)}{\Omega_g(h,n)}\epsilon(h,n),
\end{align}
for some $g\in G$.

Finally, condensable algebras are commutative by definition, and two commutative algebras are Morita equivalent if and only if they are isomorphic \cite{Kong:2007yv, DAVYDOV2010319}. Hence the statement follows.

\subsection{Partial Order on Condensable Algebras}
\label{append:partialorder}
The proof of Proposition~\ref{prop:partial-order} on the partial ordering of condensable algebras uses tools developed in Ref.~\cite{davydov2017lagrangian}. Rather than reproducing such constructions in full, we refer readers to that work and provide only the necessary elements for our proof.

Recall the definition of the condensable algebra $A(H,N,\gamma, \epsilon)$ in $\cZ(\Vect_G^{\omega})$ from \cite{davydov2017lagrangian}. $A(H,N,\gamma, \epsilon)$ is defined by applying a functor denoted by $E_{H}$\footnote{Here we use a slightly different notation from \cite{davydov2017lagrangian} by labeling the functor with the subgroup $H$ to emphasize the source category of the functor.} to the algebra $\bC[N,\gamma,\epsilon]$ in $\cZ(H,\omega|_H)$, i.e.,
\begin{equation}
\label{eqn:induce-cond-alg}
    A(H,N,\gamma, \epsilon):=E_H(\bC[N,\gamma,\epsilon]),
\end{equation}
where $\bC[N,\gamma,\epsilon]$ as an algebra in $\cZ(H,\omega|_H)$ has basis elements $e_n$ for $n\in N$, graded as $|e_n|=n$, with $H$-action $h\cdot e_n=\epsilon(h,n)e_{hnh^{-1}}$ for $h\in H$ and multiplication defined by $e_{n_1} e_{n_2} = \gamma(n_1,n_2) e_{n_1 n_2}$, where $n_1, n_2\in N$. 

The functor $E_H$ in general sends a $G$-graded vector space $V$ with $\omega$-projective $H$-action to the following subspace of $\text{Map}(G,V)$:
\begin{align} \label{eqn:defn_transfer_functor}
    E_H(V) = \{ a: G \rightarrow V\,|\, & a(xh)=\omega(h^{-1},x^{-1}|n) h^{-1}\cdot a(x), \nn\\
    & h\in H, x\in G, |a|=n \}.
\end{align}
$E_H(V)$ is an object of $\cZ(\Vect_G^{\omega})$, with the $G$-grading on $\text{Map}(G,V)$ defined by $|a|=g\in G$ if and only if $|a(x)|=x^{-1}gx$ for all $x\in G$, and the $\omega$-projective $G$-action on $\text{Map}(G,V)$ is defined by $(g\cdot a)(x)=\omega(x^{-1},g|f)^{-1}a(g^{-1}x)$, for a homogeneous $a\in \text{Map}(G,V)$ graded by $|a|=f\in G$. 

To prove the first ordering on the condensable algebras, note that $\bC[N',\gamma|_{N'\times N'}, \epsilon|_{H\times N'}]$ is a subalgebra of $\bC[N,\gamma,\epsilon]$ in $\cZ(\Vect_{H}^{\omega|_{H}})$ via the inclusion
\begin{align}
    \iota: \bC[N',\gamma|_{N'\times N'}, \epsilon|_{H\times N'}] & \hookrightarrow \bC[N,\gamma,\epsilon] \nn\\
    e_n & \mapsto e_n,
\end{align}
where $n\in N'$. Applying the functor $E_H$ results in the following inclusion 
\begin{align}
    E_H(\bC[N',\gamma|_{N'\times N'}, \epsilon|_{H\times N'}]) & \hookrightarrow E_H(\bC[N,\gamma,\epsilon]) \nn\\
    a & \mapsto \iota \circ a,
\end{align}
which can be checked to be an algebra homomorphism in $\cZ(\Vect_G^{\omega})$.

For $H'\subset H$, forgetting the action from element $h\in H$ such that $h\not\in H'$, the two algebras $\bC[N,\gamma,\epsilon]$ and $\bC[N,\gamma,\epsilon|_{H'\times N}]$ can be identified as algebras in $\cZ(\Vect_H)$. Definition \eqref{eqn:defn_transfer_functor} of the functors $E_H$ and $E_{H'}$ guarantees that an element $a\in E_{H}(\bC[N,\gamma,\epsilon])$ can be viewed as an element of $E_{H'}(\bC[N,\gamma,\epsilon|_{H'\times N}])$, hence there is a natural inclusion 
\begin{align}
    E_{H}(\bC[N,\gamma,\epsilon]) & \hookrightarrow E_{H'}(\bC[N,\gamma,\epsilon|_{H'\times N}])\nn\\
    a & \mapsto a.
\end{align}
It is now straightforward to check that this inclusion preserves both the $G$-grading and the $G$-action as defined above. Furthermore, it is an algebra homomorphism as the algebra structures on $E_{H'}(\bC[N,\gamma,\epsilon|_{H'\times N}])$ and $E_{H}(\bC[N,\gamma,\epsilon])$ are induced from those on $\bC[N,\gamma,\epsilon|_{H'\times N}]$ and $\bC[N,\gamma,\epsilon]$ respectively, whose multiplications are both defined by the same 2-cocycle $\gamma$.

\section{Distinct Maps of Anyons from Twins}
\label{app:all-maps-of-anyons}

\subsection{$G=(\Z_2\times \Z_2)\ltimes \Z_8$}

Here, we fix $G = (\Z_2 \times \Z_2) \ltimes \Z_8 $ we collect all the pairs of distinct maps of anyons defined by the folded Lagrangian for each twin algebras in $\cZ(\Vec_{G})$, as studied in \Cref{sec:G32-43_examples}. Note that although twin algebras are defined on the same anyons, their corresponding maps of anyons are distinct even on the object level.

\vspace{2mm}
\noindent\textbf{First Pair.}
For twin algebras
\begin{align}
    A_1^{\langle g_1\rangle} & =A(G,\langle g_1\rangle,1,1), \\
    A_2^{\langle g_1\rangle} & =A(G,\langle g_1\rangle,1,\epsilon^{\langle g_1\rangle}_2),
\end{align}
in $\cZ(\Vec_{(\Z_2 \times \Z_2) \ltimes \Z_8 })$.

The folded Lagrangians for $A_i^{\langle g_1\rangle}$, determines the map of anyons from the reduced TO $\cZ(\Vec_{\Z_2\times\Z_2})$. Recalling the definition \eqref{eq:Hdiag} and denoting the $\Z_2\times\Z_2$ group generators as $m_1,m_2$, the subgroups for both folded Lagrangians can be taken to be:
\be\ba
    G^\diag_1=G^\diag_2&=\langle (g_1,1), (g_2g_3,m_1), (g_3,m_2) \rangle\,.
\ea\ee 
However, while the folded Lagrangian for $A_1^{\langle g_1\rangle}$
has trivial 2-cocycle, for  $A_2^{\langle g_1\rangle}$ we must take a non-trivial 2-cocycle since $A_2^{\langle g_1\rangle}$ in order for the subalgebra condition for \eqref{eq:alg_to_include} to hold (recall that $\eps_2^{\langle g_1\rangle}$ is non-trivial).
For $A_1^{\langle g_1\rangle}$, the map of anyons is:
\be\ba 
\cZ(\Vec_{\Z_2^2}) & \rightarrow \cZ(\Vec_G) \cr
1  &\mapsto  ([1],1) \oplus ([g_1^4],1) \oplus ([g_1^2],1) \oplus ([g_1],1)\cr
e_1  &\mapsto  ([1],r_3) \oplus ([g_1^2],1) \oplus ([g_1^4],r_3) \oplus ([g_1],1)\cr
e_2  &\mapsto  ([1],r_4) \oplus ([g_1^2],g_3=-1) \oplus ([g_1^4], r_4) \oplus ([g_1],1)\cr
m_1  &\mapsto   ([g_2g_3],1) \oplus ([g_1g_2g_3],1)\cr
m_2  &\mapsto  ([g_3],1) \oplus ([g_3g_1^6],1) \oplus ([g_1g_3],1)\,,
\ea\ee
while the one for  $A_2^{\langle g_1\rangle} $ gives:
\be\ba 
\cZ(\Vec_{\Z_2^2}) & \rightarrow \cZ(\Vec_G) \cr
1  &\mapsto  ([1],1) \oplus ([g_1^4],1) \oplus ([g_1^2],1) \oplus ([g_1],1)\cr
e_1  &\mapsto  ([1],r_3) \oplus ([g_1^2],1) \oplus ([g_1^4],r_3) \oplus ([g_1],1)\cr
e_2  &\mapsto  ([1],r_4) \oplus ([g_1^2],g_3=-1) \oplus ([g_1^4], r_4) \oplus ([g_1],1)\cr
m_1  &\mapsto  ([g_2g_3],g_3=-1) \oplus ([g_1g_2g_3],1)\cr
m_2  &\mapsto  ([g_3],g_2g_3=-1) \oplus ([g_3g_1^6],1) \oplus ([g_1g_3],1)\,.
\ea\ee
the electric anyons map in the same way, while the map of magnetic anyons is different, as determind from the non-trivial 2-cocycle in the data for the folded Lagrangian data for $A_2^{\langle g_1\rangle}$.

\vspace{2mm}
\noindent\textbf{Second Pair.}
Here we consider the other pair of twin algebras distinguished by generalized string order parameters
\begin{align}
    A_1^{\langle g_1 g_3 \rangle} & = A(G, \langle g_1 g_3 \rangle, 1, 1), \\ 
    A_2^{\langle g_1 g_3 \rangle} & = A(G, \langle g_1 g_3 \rangle, 1, \epsilon^{\langle g_1 g_3 \rangle}_2),
\end{align}
in the same TO as above. 

Similarly to the previous case, we can compute the folded Lagrangians and hence maps of anyons: they are now specified by
\be\ba
    G^\diag_1=G^\diag_2&=\langle (g_1g_3,1), (g_2g_3,m_1), (g_3,m_2) \rangle\,,
\ea\ee 
with trivial 2-cocycle for $A_1^{\langle g_1g_3\rangle}$ and non-trivial one for $A_2^{\langle g_1g_3\rangle}$.
For $A_1^{\langle g_1g_3\rangle}$, the map of anyons is:
\be\ba 
\cZ(\Vec_{\Z_2^2}) & \rightarrow \cZ(\Vec_G) \cr
1  &\mapsto  ([1],1) \oplus ([g_1^4],1) \oplus ([g_1^2], 1) \oplus ([g_1g_3],1)\cr
e_1  &\mapsto  ([1],r_3) \oplus ([g_1^2],1) \oplus ([g_1^4],r_3) \oplus ([g_1g_3],1)\cr
e_2  &\mapsto  ([1],r_5) \oplus ([g_1^2],g_3=-1\,g_1=-1) \cr
&\quad\; \oplus ([g_1^4], r_5) \oplus ([g_1g_3],1)\cr
m_1  &\mapsto   ([g_2g_3],1) \oplus ([g_1g_2],1)\cr
m_2  &\mapsto  ([g_3],1) \oplus ([g_1],1) \oplus ([g_3g_1^6],1) \,,
\ea\ee
while the one for  $A_2^{\langle g_1\rangle} $ gives:
\be\ba 
\cZ(\Vec_{\Z_2^2}) & \rightarrow \cZ(\Vec_G) \cr
1  &\mapsto  ([1],1) \oplus ([g_1^4],1) \oplus ([g_1^2], 1) \oplus ([g_1g_3],1)\cr
e_1  &\mapsto  ([1],r_3) \oplus ([g_1^2],1) \oplus ([g_1^4],r_3) \oplus ([g_1g_3],1)\cr
e_2  &\mapsto  ([1],r_5) \oplus ([g_1^2],g_3=-1\,g_1=-1) \\
& \quad\; \oplus ([g_1^4], r_5) \oplus ([g_1g_3],1)\cr
m_1  &\mapsto  ([g_2g_3], g_3=-1) \oplus ([g_1g_2],1)\cr
m_2  &\mapsto   ([g_3],g_2g_3=-1)\oplus ([g_1],1)\oplus ([g_3g_1^6],1)\,.
\ea\ee
As for the previous pair, the electric anyons map in the same way, while the map of magnetic anyons differ because of the 2-cocycle in the folded Lagrangian data.

\vspace{2mm}
\noindent\textbf{Third Pair: Gassmann Triple Twin.}
For $i=1,2$, algebras $A(H_i,1,1,1)$ in ${\cZ(\Vec_{(\Z_2 \times \Z_2) \ltimes \Z_8})}$ both define interfaces to reduced TO $\cZ(\Vec_{\Z_2\times\Z_2})$, as shown in \eqref{fig:G3243condquiche}. Their corresponding maps of anyons from $\cZ(\Vec_{\Z_2\times\Z_2})$ to ${\cZ(\Vec_{(\Z_2 \times \Z_2) \ltimes \Z_8})}$ are computed from the folded Lagrangian, as we explained in \Cref{sec:Fold}. Although the anyon decomposition of the twin algebras $A(H_i,1,1,1)$ is the same, their corresponding folded Lagrangians, hence maps of anyons, are inequivalent, even at the level of simple objects (anyons). In other words, they define inequivalent interfaces.

The above mentioned folded Lagrangians are determined from subgroups $H^\diag_i\subset G\times\Z_2\times\Z_2$, which, recalling \eqref{eq:Hdiag}-\eqref{eqn:HinG32-43} and denoting the $\Z_2\times\Z_2$ group generators as $m_1,m_2$, are taken to be: 
\be\ba
    H^\diag_1&=\langle (g_2g_3,m_1), (g_3,m_2) \rangle\,,\\
    H^\diag_2&=\langle (g_2g_3,m_1), (g_3g_1^4,m_2) \rangle\,.
\ea\ee

The folded Lagrangian for $A(H_1, 1, 1, 1)$, determines the map of anyons:\footnote{For $i=1,2,$ $C_G(g_3 g_1^{4(i-1)})\cong\Z_2^{(g_3g_1^{4(i-1)})}\times D_8^{(g_1^2,g_2)}$ and $([g_3g_1^{4(i-1)}],2)$ carries the 2-dim irrep of $D_8$.}
\be\ba \label{eq:G32-43_FL1}
\cZ(\Vec_{\Z_2^2}) & \rightarrow \cZ(\Vec_G) \cr
1  & \mapsto ([1],1)\oplus([1],r_1)\oplus([1],r_8)\oplus([1],r_{10})\cr
e_1  & \mapsto   ([1], r_2) \oplus ([1], r_3) \oplus([1],r_8)\oplus([1],r_{10})\cr
e_2  & \mapsto  ([1], r_4) \oplus ([1], r_5) \oplus ([1], r_{9}) \oplus ([1], r_{10} )\cr
m_1  & \mapsto ([g_2 g_3],1) \oplus ([g_2 g_3], g_1^4=-1)\cr
m_2  & \mapsto ([g_3],1) \oplus ([g_3], g_1^6=-1) \oplus ([g_3],2)\,.\cr
\ea\ee
In turn the folded Lagrangian for $A(H_2, 1, 1, 1)$ gives maps of anyons:
\be\ba \label{eq:G32-43_FL2}
\cZ(\Vec_{\Z_2^2}) & \rightarrow \cZ(\Vec_G) \cr
1  & \mapsto ([1],1)\oplus([1],r_1)\oplus([1],r_8)\oplus([1],r_{10})\cr
e_1  & \mapsto   ([1], r_2) \oplus ([1], r_3) \oplus([1],r_8)\oplus([1],r_{10})\cr
e_2  & \mapsto  ([1], r_4) \oplus ([1], r_5) \oplus ([1], r_{9}) \oplus ([1], r_{10} )\cr
m_1  & \mapsto  ([g_2 g_3],1) \oplus ([g_2 g_3], g_3=-1 \,g_1^4=-1)\cr
m_2  & \mapsto  ([g_3^{\phantom{-1}}\!\!\!\! g_1^4],1) \oplus ([g_3^{\phantom{-1}}\!\!\!\! g_1^4],\,g_1^6=-1) \oplus ([g_3^{\phantom{-1}}\!\!\!\! g_1^4],2)\,.\cr
\ea\ee
While the electric anyons $e_1, e_2$ map in the same way, the magnetic anyons $m_1, m_2$ map differently.

In \eqref{eq:G32-43_FL1} the anyons $m_1,m_2$, which are twisted sector, uncharged anyons for $\Z_2\times\Z_2$, are mapped to anyons in twisted sectors for $H_1$ (generated by $g_2g_3,g_3$) that are also uncharged under $H_1$, since they all carry irreps in which the group elements in $H_1$ are represented trivially. 

In turn, in \eqref{eq:G32-43_FL2}, $m_1,m_2$ are mapped to anyons in twisted sectors for $H_2$, generated by $g_2g_3,g_3g_1^4$ and carry trivial $H_2$ charge (since all $H_2$ group elements are represented trivially). \footnote{To compare with \eqref{eq:G32-43_FL1}, note that $g_3= g_1^{-1}(g_3^{\phantom{-1}}\!\!\!\! g_1^4) g_1^{\phantom{-1}}\!\!\!\! \,,$
from which we can write
\begin{align}
&([g_3^{\phantom{-1}}\!\!\!\! g_1^4],1) \oplus ([g_3^{\phantom{-1}}\!\!\!\! g_1^4],\,g_1^6=-1) \oplus ([g_3^{\phantom{-1}}\!\!\!\! g_1^4],2)\\
=&([g_3],1) \oplus ([g_3], g_2 g_3=-1 \,g_1^6=-1) \oplus ([g_3],2)
\end{align}
where in the irrep carried by $([g_3^{\phantom{-1}}\!\!\!\! g_1^4],\,g_1^6=-1)$, $g_2g_3g_1^6=-1$, which then implies that in the corresponding anyon in the second line $g_1^{-1}(g_2g_3g_1^6)g_1=g_2g_3=-1$: the image of $m_2$ in \eqref{eq:G32-43_FL2} is therefore different and inequivalent to its image in \eqref{eq:G32-43_FL1}. This is clear also for the image of $m_1$.}

In conclusion, although $A(H_i,1,1,1)$ have the same anyon decomposition, the anyons in their respective folded Lagrangians differ in how the $\Z_2\times\Z_2$ twisted sector anyons map: there exists a choice of representatives such that $A(H_i,1,1,1)$ corresponds to a $\Vec_G$-symmetric phase preserving $H_i$, consistently with the $G$-action described in \Cref{sec:basis-diff-action-same-OPE}.

\subsection{Maps of Anyons}
\label{app:map-of-anyon-G32omega}
\nameref{Aom:3} defines a map of anyons to be:
\begin{align}
    1 &\mapsto 1 \oplus ([1],r_1) \oplus ([1],r_8) \cr 
e_{ac} &\mapsto ([1],r_2) \oplus ([1],r_3) \oplus ([1],r_8) \cr 
e_{ab} &\mapsto ([1],r_4) \oplus ([1],r_5) \oplus ([1],r_9) \cr 
e_{bc} &\mapsto ([1],r_6) \oplus ([1],r_7) \oplus ([1],r_9) \cr 
e_{a} &\mapsto ([1],r_{10}) \cr 
e_{c} &\mapsto ([1],r_{10}) \cr 
e_{b} &\mapsto ([1],r_{10}) \cr 
e_{abc} &\mapsto ([1],r_{10}) \cr 
([c],2) &\mapsto ([g_2g_3],2) \cr 
([c],-2) &\mapsto ([g_2g_3],-2) \cr 
([b],2) &\mapsto ([g_3],2\;g_1^6=-\sqrt{2}\zeta_{16}) \oplus ([g_3],2\;g_1^6=\sqrt{2}\zeta_{16}) \cr 
([b],-2) &\mapsto ([g_3],-2\;g_1^6=-\sqrt{2}\zeta_{16}) \oplus ([g_3],-2\;g_1^6=\sqrt{2}\zeta_{16}) \cr 
([bac],2i) &\mapsto ([g_1^4],2i\;g_1g_2g_3=-2\zeta_8) \oplus ([g_1^4],2i\;g_1g_2g_3=2\zeta_8) \cr
& \qquad \oplus ([g_1^4],2i\;g_1g_2=-2) \oplus ([g_1^4],2i\;g_2g_1^7=-2) \cr 
([bac],-2i) &\mapsto ([g_1^4],-2i\;g_1=\sqrt{2}\zeta_{16}\;g_3g_1^6=2\zeta_8^3) \cr
& \qquad \oplus ([g_1^4],-2i\;g_1=\sqrt{2}\zeta_{16}\;g_3g_1^6=-2\zeta_8^3) \cr
&\qquad\oplus ([g_1^4],-2i\;g_1=-\sqrt{2}\zeta_{16}\;g_3g_1^6=2\zeta_8^3) \cr
& \qquad \oplus ([g_1^4],-2i\;g_1=-\sqrt{2}\zeta_{16}\;g_3g_1^6=-2\zeta_8^3) \cr 
([bc],2) &\mapsto ([g_2],2) \cr 
([bc],-2) &\mapsto ([g_2],-2) \cr 
([ab],2) &\mapsto ([g_2g_3],2) \cr 
([ab],-2) &\mapsto ([g_2g_3],-2) \cr 
([ac],2) &\mapsto ([g_3],2\;g_1^6=-\sqrt{2}\zeta_{16}) \oplus ([g_3],2\;g_1^6=\sqrt{2}\zeta_{16}) \cr 
([ac],-2) &\mapsto ([g_3],-2\;g_1^6=-\sqrt{2}\zeta_{16}) \oplus ([g_3],-2\;g_1^6=\sqrt{2}\zeta_{16}) \cr 
([a],2) &\mapsto ([g_2],2) \cr 
([a],-2) &\mapsto ([g_2],-2) \cr
\end{align}

\clearpage
\clearpage
\onecolumngrid

\begin{figure}
\begin{minipage}{\textwidth}
\begin{center}
\begin{tikzpicture}[vertex/.style={draw}, scale=0.9]
 \begin{scope}[shift={(0,0)}]
    \foreach \coord/\i/\j in {
(0.,-1.35)/1/{\nameref{alg:1}},
(0.,-2.7)/2/{\nameref{alg:2}},
(0.,-4.05)/3/{\nameref{alg:3}},
(-2.5,-5.4)/4/{\nameref{alg:4}$=A(H_1,1,1,1)$},
(2.5,-5.4)/5/{\nameref{alg:5}$=A(H_2,1,1,1)$},
(-7.7,-8)/6/{\nameref{alg:6}},
(-6.6,-8)/7/{\nameref{alg:7}},
(-5.5,-8)/8/{\nameref{alg:8}},
(-4.4,-8)/9/{\nameref{alg:9}},
(-3.3,-8)/10/{\nameref{alg:10}},
(-2.2,-8)/11/{\nameref{alg:11}},
(2.2,-8)/12/{\nameref{alg:12}},
(3.3,-8)/13/{\nameref{alg:13}},
(4.4,-8)/14/{\nameref{alg:14}},
(5.5,-8)/15/{\nameref{alg:15}},
(6.6,-8)/16/{\nameref{alg:16}},
(7.7,-8)/17/{\nameref{alg:17}},
(-1.1,-8)/18/{\nameref{alg:18}},
(0.,-8)/19/{\nameref{alg:19}},
(1.1,-8)/20/{\nameref{alg:20}},
(-7.7,-11)/21/{\nameref{alg:21}},
(-5.5,-11)/22/{\nameref{alg:22}},
(5.5,-11)/23/{\nameref{alg:23}},
(7.7,-11)/24/{\nameref{alg:24}},
(-3.3,-11)/25/{\nameref{alg:25}},
(-1.1,-11)/26/{\nameref{alg:26}},
(1.1,-11)/27/{\nameref{alg:27}},
(3.3,-11)/28/{\nameref{alg:28}}}
     {
      \node[vertex,align=center] (p\i) at \coord {\j};
     }
  \foreach [count=\r] \row in 
{{0,1,0,0,0,0,0,0,0,0,0,0,0,0,0,0,0,0,0,0,0,0,0,0,0,0,0,0},
{0,0,1,0,0,0,0,0,0,0,0,0,0,0,0,0,0,0,0,0,0,0,0,0,0,0,0,0},
{0,0,0,1,1,0,0,0,0,0,0,0,0,0,0,0,0,0,0,0,0,0,0,0,0,0,0,0},
{0,0,0,0,0,1,1,1,1,1,1,0,0,0,0,0,0,1,1,1,0,0,0,0,0,0,0,0},
{0,0,0,0,0,0,0,0,0,0,0,1,1,1,1,1,1,1,1,1,0,0,0,0,0,0,0,0},
{0,0,0,0,0,0,0,0,0,0,0,0,0,0,0,0,0,0,0,0,1,0,0,0,1,0,0,0},
{0,0,0,0,0,0,0,0,0,0,0,0,0,0,0,0,0,0,0,0,0,1,0,0,1,0,0,0},
{0,0,0,0,0,0,0,0,0,0,0,0,0,0,0,0,0,0,0,0,1,0,0,0,0,1,0,0},
{0,0,0,0,0,0,0,0,0,0,0,0,0,0,0,0,0,0,0,0,0,1,0,0,0,1,0,0},
{0,0,0,0,0,0,0,0,0,0,0,0,0,0,0,0,0,0,0,0,1,0,0,0,0,0,1,0},
{0,0,0,0,0,0,0,0,0,0,0,0,0,0,0,0,0,0,0,0,0,1,0,0,0,0,1,0},
{0,0,0,0,0,0,0,0,0,0,0,0,0,0,0,0,0,0,0,0,0,0,1,0,0,0,1,0},
{0,0,0,0,0,0,0,0,0,0,0,0,0,0,0,0,0,0,0,0,0,0,0,1,0,0,1,0},
{0,0,0,0,0,0,0,0,0,0,0,0,0,0,0,0,0,0,0,0,0,0,1,0,0,1,0,0},
{0,0,0,0,0,0,0,0,0,0,0,0,0,0,0,0,0,0,0,0,0,0,0,1,0,1,0,0},
{0,0,0,0,0,0,0,0,0,0,0,0,0,0,0,0,0,0,0,0,0,0,1,0,1,0,0,0},
{0,0,0,0,0,0,0,0,0,0,0,0,0,0,0,0,0,0,0,0,0,0,0,1,1,0,0,0},
{0,0,0,0,0,0,0,0,0,0,0,0,0,0,0,0,0,0,0,0,0,0,0,0,1,0,0,1},
{0,0,0,0,0,0,0,0,0,0,0,0,0,0,0,0,0,0,0,0,0,0,0,0,0,1,0,1},
{0,0,0,0,0,0,0,0,0,0,0,0,0,0,0,0,0,0,0,0,0,0,0,0,0,0,1,1},
{0,0,0,0,0,0,0,0,0,0,0,0,0,0,0,0,0,0,0,0,0,0,0,0,0,0,0,0},
{0,0,0,0,0,0,0,0,0,0,0,0,0,0,0,0,0,0,0,0,0,0,0,0,0,0,0,0},
{0,0,0,0,0,0,0,0,0,0,0,0,0,0,0,0,0,0,0,0,0,0,0,0,0,0,0,0},
{0,0,0,0,0,0,0,0,0,0,0,0,0,0,0,0,0,0,0,0,0,0,0,0,0,0,0,0},
{0,0,0,0,0,0,0,0,0,0,0,0,0,0,0,0,0,0,0,0,0,0,0,0,0,0,0,0},
{0,0,0,0,0,0,0,0,0,0,0,0,0,0,0,0,0,0,0,0,0,0,0,0,0,0,0,0},
{0,0,0,0,0,0,0,0,0,0,0,0,0,0,0,0,0,0,0,0,0,0,0,0,0,0,0,0},
{0,0,0,0,0,0,0,0,0,0,0,0,0,0,0,0,0,0,0,0,0,0,0,0,0,0,0,0}}
   {
     \foreach [count=\c] \cell in \row{
            \ifnum\cell=1%
                \draw[-stealth] (p\r) edge [thick] (p\c);
            \fi
        }
    }
\end{scope}
\end{tikzpicture}
\caption{Part of the Hasse diagram for $\cZ(\Vec_{(\Z_2\times \Z_2)\ltimes \Z_8})$ that connects to the twin algebras $A_4=A(H_1, 1,1,1)$ and $A_5=A(H_2, 1, 1, 1)$. The algebra $A_{14}, A_{21}, A_{24}$ are  condensable algebras that contain both twins. 
The algebras are detailed in table \ref{tab:G3248alg}. The top most algebra is the identity algebra. The bottom row are the Lagrangian algebras, that correspond to topological boundary conditions. }
\label{fig:HasseG3243}
\end{center}
\end{minipage}
\end{figure}

\clearpage
\newpage 
\onecolumngrid
\section{Tables of Condensable Algebras in $\cZ(\Vec_{(\Z_2 \times \Z_2) \ltimes \Z_8 }^\omega)$}
\label{app:TabConds}

\onecolumngrid

\vspace*{-5mm}
\begin{table}[H]
\begin{center}
{\scriptsize
\begin{tabular}{|c|}
\hline
Condensable Algebra in $\cZ(\Vec_{(\Z_2\times \Z_2)\ltimes \Z_8})$ \\ 
\hline
\nameref{alg:1}$=A(G=(\Z_2\times\Z_2) \ltimes \Z_8=\langle g_2,g_3,g_1 \rangle, 1, 1, 1)$\\
$\phantom{\cZ(\Vec_{(\Z_2\times\Z_2) \ltimes \Z_8^\omega})}
\xlabel[$A_{1}$]{alg:1} \hfill ([1],1) \hfill 
\cZ(\Vec_{(\Z_2\times\Z_2) \ltimes \Z_8})$\\
\hline

\nameref{alg:2}$=A(\Z_2\times D_8=\langle g_2g_3,g_3,g_1^6 \rangle, 1, 1, 1)$\\
$\phantom{\cZ(\Vec_{\Z_2\times D_8})}
\xlabel[$A_{2}$]{alg:2} \hfill ([1],1)\oplus([1],r_1) \hfill 
\cZ(\Vec_{\Z_2\times D_8})$\\
\hline

\nameref{alg:3}$=A(\Z_2\times\Z_2\times\Z_2=\langle g_2g_3,g_3,g_1^4 \rangle, 1, 1, 1)$\\
$\phantom{\cZ(\Vec_{\Z_2\times\Z_2\times\Z_2})}
\xlabel[$A_{3}$]{alg:3} \hfill ([1],1)\oplus([1],r_1)\oplus([1],r_8) \hfill 
\cZ(\Vec_{\Z_2\times\Z_2\times\Z_2})$\\
\hline

\textcolor{blue}{\nameref{alg:4}$=A(H_1=\Z_2\times\Z_2=\langle g_2g_3,g_3 \rangle, 1, 1, 1)$}\\
$\phantom{\cZ(\Vec_{\Z_2\times\Z_2})}
\xlabel[$\textcolor{blue}{A_{4}}$]{alg:4} \hfill 
\textcolor{blue}{([1],1)\oplus([1],r_1)\oplus([1],r_8)\oplus([1],r_{10})} \hfill 
\cZ(\Vec_{\Z_2\times\Z_2})$\\
\hline

\textcolor{blue}{\nameref{alg:5}$=A(H_2=\Z_2\times\Z_2=\langle g_2g_3,g_3g_1^4 \rangle, 1, 1, 1)$}\\
$\phantom{\cZ(\Vec_{\Z_2\times\Z_2})}
\xlabel[$\textcolor{blue}{A_{5}}$]{alg:5} \hfill 
\textcolor{blue}{([1],1)\oplus([1],r_1)\oplus([1],r_8)\oplus([1],r_{10})} \hfill 
\cZ(\Vec_{\Z_2\times\Z_2})$\\
\hline

\nameref{alg:6}$=A(\Z_2\times\Z_2=\langle g_2g_3,g_3 \rangle, \Z_2=\langle g_2g_3 \rangle, 1, 1)$\\
$\phantom{\cZ(\Vec_{\Z_2})}
\xlabel[$A_{6}$]{alg:6} \hfill ([1],1)\oplus([1],r_1)\oplus([1],r_8)\oplus([1],r_{10})\oplus([g_2g_3],1)\oplus([g_2g_3],g_1^4=-1) \hfill 
\cZ(\Vec_{\Z_2})$\\
\hline

\nameref{alg:7}$=A(\Z_2\times\Z_2=\langle g_2g_3,g_3 \rangle, \Z_2=\langle g_2g_3 \rangle, 1, \eps)$\\
$\phantom{\cZ(\Vec_{\Z_2})}
\xlabel[$A_{7}$]{alg:7} \hfill ([1],1)\oplus([1],r_1)\oplus([1],r_8)\oplus([1],r_{10})\oplus([g_2g_3],g_3=-1)\oplus([g_2g_3],g_3=-1\;g_1^4=-1) \hfill 
\cZ(\Vec_{\Z_2})$\\
\hline

\nameref{alg:8}$=A(\Z_2\times\Z_2=\langle g_2g_3,g_3 \rangle, \Z_2=\langle g_3 \rangle, 1, 1)$\\
$\phantom{\cZ(\Vec_{\Z_2})}
\xlabel[$A_{8}$]{alg:8} \hfill ([1],1)\oplus([1],r_1)\oplus([1],r_8)\oplus([1],r_{10})\oplus([g_3],1)\oplus([g_3],g_1^6=-1)\oplus([g_3],2) \hfill 
\cZ(\Vec_{\Z_2})$\\
\hline

\nameref{alg:9}$=A(\Z_2\times\Z_2=\langle g_2g_3,g_3 \rangle, \Z_2=\langle g_3 \rangle, 1, \eps)$\\
$\phantom{\cZ(\Vec_{\Z_2})}\phantom{\cZ(\Vec_{\Z_2})}
\xlabel[$A_{9}$]{alg:9} \hfill ([1],1)\oplus([1],r_1)\oplus([1],r_8)\oplus([1],r_{10})\oplus([g_3],g_2g_3=-1)\oplus([g_3],g_2g_3=-1\;g_1^6=-1)\oplus([g_3],2) \phantom{\cZ(\Vec_{\Z_2})}
\hfill 
\cZ(\Vec_{\Z_2})$\\
\hline

\nameref{alg:10}$=A(\Z_2\times\Z_2=\langle g_2g_3,g_3 \rangle, \Z_2=\langle g_2 \rangle, 1, 1)$\\
$\phantom{\cZ(\Vec_{\Z_2})}
\xlabel[$A_{10}$]{alg:10} \hfill ([1],1)\oplus([1],r_1)\oplus([1],r_8)\oplus([1],r_{10})\oplus([g_2],1)\oplus([g_2],g_1^4=-1) \hfill 
\cZ(\Vec_{\Z_2})$\\
\hline

\nameref{alg:11}$=A(\Z_2\times\Z_2=\langle g_2g_3,g_3 \rangle, \Z_2=\langle g_2 \rangle, 1, \eps)$\\
$\phantom{\cZ(\Vec_{\Z_2})}
\xlabel[$A_{11}$]{alg:11} \hfill ([1],1)\oplus([1],r_1)\oplus([1],r_8)\oplus([1],r_{10})\oplus([g_2],g_2g_3=-1)\oplus([g_2],g_2g_3=-1\;g_1^4=-1) \hfill 
\cZ(\Vec_{\Z_2})$\\
\hline

\nameref{alg:12}$=A(\Z_2\times\Z_2=\langle g_2g_3,g_3g_1^4 \rangle, \Z_2=\langle g_2g_1^4 \rangle, 1, 1)$\\
$\phantom{\cZ(\Vec_{\Z_2})}
\xlabel[$A_{12}$]{alg:12} \hfill ([1],1)\oplus([1],r_1)\oplus([1],r_8)\oplus([1],r_{10})\oplus([g_2],1)\oplus([g_2],g_2g_3=-1\;g_1^4=-1) \hfill 
\cZ(\Vec_{\Z_2})$\\
\hline

\nameref{alg:13}$=A(\Z_2\times\Z_2=\langle g_2g_3,g_3g_1^4 \rangle, \Z_2=\langle g_2g_1^4 \rangle, 1, \eps)$\\
$\phantom{\cZ(\Vec_{\Z_2})}
\xlabel[$A_{13}$]{alg:13} \hfill ([1],1)\oplus([1],r_1)\oplus([1],r_8)\oplus([1],r_{10})\oplus([g_2],g_1^4=-1)\oplus([g_2],g_2g_3=-1) \hfill 
\cZ(\Vec_{\Z_2})$\\
\hline

\nameref{alg:14}$=A(\Z_2\times\Z_2=\langle g_2g_3,g_3g_1^4 \rangle, \Z_2=\langle g_3g_1^4 \rangle, 1, 1)$\\
$\phantom{\cZ(\Vec_{\Z_2})}
\xlabel[$A_{14}$]{alg:14} \hfill ([1],1)\oplus([1],r_1)\oplus([1],r_8)\oplus([1],r_{10})\oplus([g_3],1)\oplus([g_3],g_2g_3=-1\;g_1^6=-1)\oplus([g_3],2) \hfill 
\cZ(\Vec_{\Z_2})$\\
\hline

\nameref{alg:15}$=A(\Z_2\times\Z_2=\langle g_2g_3,g_3g_1^4 \rangle, \Z_2=\langle g_3g_1^4 \rangle, 1, \eps)$\\
$\phantom{\cZ(\Vec_{\Z_2})}
\xlabel[$A_{15}$]{alg:15} \hfill ([1],1)\oplus([1],r_1)\oplus([1],r_8)\oplus([1],r_{10})\oplus([g_3],g_1^6=-1)\oplus([g_3],g_2g_3=-1)\oplus([g_3],2) \hfill 
\cZ(\Vec_{\Z_2})$\\
\hline

\nameref{alg:16}$=A(\Z_2\times\Z_2=\langle g_2g_3,g_3g_1^4 \rangle, \Z_2=\langle g_2g_3 \rangle, 1, 1)$\\
$\phantom{\cZ(\Vec_{\Z_2})}
\xlabel[$A_{16}$]{alg:16} \hfill ([1],1)\oplus([1],r_1)\oplus([1],r_8)\oplus([1],r_{10})\oplus([g_2g_3],1)\oplus([g_2g_3],g_3=-1\;g_1^4=-1) \hfill 
\cZ(\Vec_{\Z_2})$\\
\hline

\nameref{alg:17}$=A(\Z_2\times\Z_2=\langle g_2g_3,g_3g_1^4 \rangle, \Z_2=\langle g_2g_3 \rangle, 1, \eps)$\\
$\phantom{\cZ(\Vec_{\Z_2})}
\xlabel[$A_{17}$]{alg:17} \hfill ([1],1)\oplus([1],r_1)\oplus([1],r_8)\oplus([1],r_{10})\oplus([g_2g_3],g_3=-1)\oplus([g_2g_3],g_1^4=-1) \hfill 
\cZ(\Vec_{\Z_2})$\\
\hline

\nameref{alg:18}$=A(\Z_2=\langle g_2g_3 \rangle, 1, 1, 1)$\\
$\phantom{\cZ(\Vec_{\Z_2})}
\xlabel[$A_{18}$]{alg:18} \hfill ([1],1)\oplus([1],r_1)\oplus([1],r_4)\oplus([1],r_5)\oplus([1],r_8)\oplus([1],r_9)\oplus2([1],r_{10}) \hfill 
\cZ(\Vec_{\Z_2})$\\
\hline

\nameref{alg:19}$=A(\Z_2=\langle g_3 \rangle, 1, 1, 1)$\\
$\phantom{\cZ(\Vec_{\Z_2})}
\xlabel[$A_{19}$]{alg:19} \hfill ([1],1)\oplus([1],r_1)\oplus([1],r_2)\oplus([1],r_3)\oplus2([1],r_8)\oplus2([1],r_{10}) \hfill 
\cZ(\Vec_{\Z_2})$\\
\hline

\nameref{alg:20}$=A(\Z_2=\langle g_2 \rangle, 1, 1, 1)$\\
$\phantom{\cZ(\Vec_{\Z_2})}
\xlabel[$A_{20}$]{alg:20} \hfill ([1],1)\oplus([1],r_1)\oplus([1],r_6)\oplus([1],r_7)\oplus([1],r_8)\oplus([1],r_9)\oplus2([1],r_{10}) \hfill 
\cZ(\Vec_{\Z_2})$\\
\hline
\hline \\[-2mm]

\nameref{alg:21}$=A(\Z_2\times\Z_2=\langle g_2g_3,g_3 \rangle, \Z_2\times\Z_2=\langle g_2g_3,g_3 \rangle, 1, 1)$\\
$\xlabel[$A_{21}$]{alg:21} ([1],1)\oplus([1],r_1)\oplus([1],r_8)\oplus([1],r_{10})\oplus([g_2g_3],1)\oplus([g_2g_3],g_1^4=-1)\oplus([g_3],1)$\\
$\oplus([g_3],g_1^6=-1)\oplus([g_3],2)\oplus([g_2],1)\oplus([g_2],g_1^4=-1)$\\
\hline

\nameref{alg:22}$=A(\Z_2\times\Z_2=\langle g_2g_3,g_3 \rangle, \Z_2\times\Z_2=\langle g_2g_3,g_3 \rangle, \gamma, \eps)$\\
$\xlabel[$A_{22}$]{alg:22} ([1],1)\oplus([1],r_1)\oplus([1],r_8)\oplus([1],r_{10})\oplus([g_2g_3],g_3=-1)\oplus([g_2g_3],g_3=-1\;g_1^4=-1)\oplus([g_3],g_2g_3=-1)$\\
$\oplus([g_3],g_2g_3=-1\;g_1^6=-1)\oplus([g_3],2)\oplus([g_2],1\
 g_2g_3=-1)\oplus([g_2],g_2g_3=-1\;g_1^4=-1)$\\
\hline

\nameref{alg:23}$=A(\Z_2\times\Z_2=\langle g_2g_3,g_3g_1^4 \rangle, \Z_2\times\Z_2=\langle g_2g_3,g_3g_1^4 \rangle, 1, 1)$\\
$\xlabel[$A_{23}$]{alg:23} ([1],1)\oplus([1],r_1)\oplus([1],r_8)\oplus([1],r_{10})\oplus([g_2g_3],1)\oplus([g_2g_3],g_3=-1\;g_1^4=-1)\oplus([g_3],1)$\\
$\oplus([g_3],g_2g_3=-1\;g_1^6=-1)\oplus([g_3],2)\oplus([g_2],1)\oplus([g_2],g_2g_3\
=-1\;g_1^4=-1)$\\
\hline

\nameref{alg:24}$=A(\Z_2\times\Z_2=\langle g_2g_3,g_3g_1^4 \rangle, \Z_2\times\Z_2=\langle g_2g_3,g_3g_1^4 \rangle, \gamma, \eps)$\\
$\xlabel[$A_{24}$]{alg:24} ([1],1)\oplus([1],r_1)\oplus([1],r_8)\oplus([1],r_{10})\oplus([g_2g_3],g_3=-1)\oplus([g_2g_3],g_1^4=-1)\oplus([g_3],g_1^6=-1)$\\
$\oplus([g_3],g_2g_3=-1)\oplus([g_3],2)\oplus([g_2],g_1^4=-1)\oplus([g_2]\
,1\;g_2g_3=-1)$\\
\hline

\nameref{alg:25}$=A(\Z_2=\langle g_2g_3 \rangle, \Z_2=\langle g_2g_3 \rangle, 1, 1)$\\
$\xlabel[$A_{25}$]{alg:25} ([1],1)\oplus([1],r_1)\oplus([1],r_4)\oplus([1],r_5)\oplus([1],r_8)\oplus([1],r_9)\oplus2([1],r_{10})\oplus([g_2g_3],1)$\\
$\oplus([g_2g_3],g_3=-1)\oplus([g_2g_3],g_1^4=-1)\oplus([g_2g_3],g_3=-1\;g_1^4=-\
1)$\\
\hline

\nameref{alg:26}$=A(\Z_2=\langle g_3 \rangle, \Z_2=\langle g_3 \rangle, 1, 1)$\\
$\xlabel[$A_{26}$]{alg:26} ([1],1)\oplus([1],r_1)\oplus([1],r_2)\oplus([1],r_3)\oplus2([1],r_8)\oplus2([1],r_{10})\oplus([g_3],1)\oplus([g_3],g_1^6=-1)$\\
$\oplus([g_3],g_2g_3=-1)\oplus([g_3],g_2g_3=-1\;g_1^6=-1)\oplus2([g_3],2)$\\
\hline

\nameref{alg:27}$=A(\Z_2=\langle g_2 \rangle, \Z_2=\langle g_2 \rangle, 1, 1)$\\
$\xlabel[$A_{27}$]{alg:27} ([1],1)\oplus([1],r_1)\oplus([1],r_6)\oplus([1],r_7)\oplus([1],r_8)\oplus([1],r_9)\oplus2([1],r_{10})\oplus([g_2],1)$\\
$\oplus([g_2],g_1^4=-1)\oplus([g_2],g_2g_3=-1)\oplus([g_2],g_2g_3=-1\;g_1^4=-1)$\\
\hline

\nameref{alg:28}$=A(1, 1, 1, 1)$\\
$\xlabel[$A_{28}$]{alg:28} ([1],1)\oplus([1],r_1)\oplus([1],r_2)\oplus([1],r_3)\oplus([1],r_4)\oplus([1],r_5)\oplus([1],r_6)$\\
$\oplus([1],r_7)\oplus2([1],r_8)\oplus2([1],r_9)\oplus4([1],r_{10})$\\
\hline
\end{tabular}
}
\end{center}
\caption{Condensable algebras for $\cZ(\Vec_{(\Z_2\times \Z_2)\ltimes \Z_8})$ that either contain, or are contained within the twin algebras \nameref{alg:4} and \nameref{alg:5}. The top line in each entry gives the data $A(H, N, \gamma, \epsilon)$, the second row the decomposition into anyons and reduced topological order (TO). For the Lagrangian algebras (shown below the double line) the reduced TO is trivial.  The irreps are described in Table \ref{tab:G3243chars} and the associated Hasse diagram is in Figure \ref{fig:HasseG3243}. \label{tab:G3248alg}}
\end{table}

\onecolumngrid
\begin{table}[H]
\renewcommand{\arraystretch}{1.24}
\begin{center}
{\scriptsize
\begin{tabular}{|c|}
\hline
Condensable Algebra in $\cZ(\Vec_{(\Z_2\times \Z_2)\ltimes \Z_8})$ \\[0.5pt] 
\hline
\nameref{Aom:1}$=A(G=(\Z_2\times\Z_2) \ltimes \Z_8=\langle g_2,g_3,g_1 \rangle, 1, 1, 1)$\\
$\xlabel[$A^\omega_{1}$]{Aom:1}([1],1)$\\
$\cZ(\Vec_{G}^\omega)$\\
\hline

\nameref{Aom:2}$=A(\Z_2 \times D_8=\langle g_2g_3,g_3,g_1^6 \rangle, 1, 1, 1)$\\
$\xlabel[$A^\omega_{2}$]{Aom:2}([1],1) \oplus ([1],r_1)$\\
$\cZ(\Vec_{\Z_2 \times D_8}^\alpha)$\\[1pt]
\hline

\nameref{Aom:3}$=A(\Z_2 \times \Z_2 \times \Z_2=\langle g_2g_3,g_3,g_1^4 \rangle, 1, 1, 1)$\\
$\xlabel[$A^\omega_{3}$]{Aom:3}([1],1) \oplus ([1],r_1) \oplus ([1],r_8)$\\
$\cZ(\Vec_{\Z_2 \times \Z_2 \times \Z_2}^{\omega_\III})$\\[1.5pt]
\hline

\nameref{Aom:4}$=A(H_1=\langle g_2g_3,g_3 \rangle, 1, 1, 1)$\\
$\xlabel[$\textcolor{blue}{A^\omega_{4}}$]{Aom:4}([1],1) \oplus ([1],r_1) \oplus ([1],r_8) \oplus ([1],r_{10})$\\
$\cZ(\Vec_{\Z_2 \times \Z_2})$\\
\hline

\nameref{Aom:5}$=A(H_2=\langle g_2g_3,g_3g_1^4 \rangle, 1, 1, 1)$\\
$\xlabel[$\textcolor{blue}{A^\omega_{5}}$]{Aom:5}([1],1) \oplus ([1],r_1) \oplus ([1],r_8) \oplus ([1],r_{10})$\\
$\cZ(\Vec_{\Z_2 \times \Z_2})$\\
\hline

\nameref{Aom:6}$=A(H_1=\langle g_2g_3,g_3 \rangle, \Z_2=\langle g_2g_3 \rangle, 1, 1)$\\
$\xlabel[$\textcolor{cyan}{A^{\omega}_{6}}$]{Aom:6}([1],1) \oplus ([1],r_1) \oplus ([1],r_8) \oplus ([1],r_{10}) \oplus ([g_2g_3],2)$\\
$\cZ(\Vec_{\Z_2})$\\
\hline

\nameref{Aom:11}$=A(H_2=\langle g_2g_3,g_3g_1^4 \rangle, \Z_2=\langle g_2g_3 \rangle, 1, 1)$\\
$\xlabel[$\textcolor{cyan}{A^{\omega}_{11}}$]{Aom:11}([1],1) \oplus ([1],r_1) \oplus ([1],r_8) \oplus ([1],r_{10}) \oplus ([g_2g_3],2)$\\
$\cZ(\Vec_{\Z_2})$\\
\hline

\nameref{Aom:8}$=A(H_1=\langle g_2g_3,g_3 \rangle, \Z_2=\langle g_2 \rangle, 1, 1)$\\
$\xlabel[$\textcolor{JungleGreen}{A^\omega_{8}}$]{Aom:8}([1],1) \oplus ([1],r_1) \oplus ([1],r_8) \oplus ([1],r_{10}) \oplus ([g_2],2)$\\
$\cZ(\Vec_{\Z_2})$\\
\hline

\nameref{Aom:9}$=A(H_2=\langle g_2g_3,g_3g_1^4 \rangle, \Z_2=\langle g_2g_1^4 \rangle, 1, 1)$\\
$\xlabel[$\textcolor{JungleGreen}{A^\omega_{9}}$]{Aom:9}([1],1) \oplus ([1],r_1) \oplus ([1],r_8) \oplus ([1],r_{10}) \oplus ([g_2],2)$\\
$\cZ(\Vec_{\Z_2})$\\
\hline

\nameref{Aom:7}$=A(H_1=\langle g_2g_3,g_3 \rangle, \Z_2=\langle g_3 \rangle, 1, 1)$\\
$\xlabel[$\textcolor{PineGreen}{A^\omega_{7}}$]{Aom:7}([1],1) \oplus ([1],r_1) \oplus ([1],r_8) \oplus ([1],r_{10}) \oplus ([g_3],2\;g_1^6=-\sqrt{2}\zeta_{16}) \oplus ([g_3],2\;g_1^6=\sqrt{2}\zeta_{16})$\\
$\cZ(\Vec_{\Z_2})$\\
\hline

\nameref{Aom:10}$=A(H_2=\langle g_2g_3,g_3g_1^4 \rangle, \Z_2=\langle g_3g_1^4 \rangle, 1, 1)$\\
$\xlabel[$\textcolor{PineGreen}{A^\omega_{10}}$]{Aom:10}([1],1) \oplus ([1],r_1) \oplus ([1],r_8) \oplus ([1],r_{10}) \oplus ([g_3],2\;g_1^6=-\sqrt{2}\zeta_{16}) \oplus ([g_3],2\;g_1^6=\sqrt{2}\zeta_{16})$\\
$\cZ(\Vec_{\Z_2})$\\
\hline

\nameref{Aom:12}$=A(\Z_2=\langle g_2g_3 \rangle, 1, 1, 1)$\\
$\xlabel[$A^\omega_{12}$]{Aom:12}([1],1) \oplus ([1],r_1) \oplus ([1],r_4) \oplus ([1],r_5) \oplus ([1],r_8) \oplus ([1],r_9) \oplus 2([1],r_{10})$\\
$\cZ(\Vec_{\Z_2})$\\
\hline

\nameref{Aom:13}$=A(\Z_2=\langle g_3 \rangle, 1, 1, 1)$\\
$\xlabel[$A^\omega_{13}$]{Aom:13}([1],1) \oplus ([1],r_1) \oplus ([1],r_2) \oplus ([1],r_3) \oplus 2([1],r_8) \oplus 2([1],r_{10})$\\
$\cZ(\Vec_{\Z_2})$\\
\hline

\nameref{Aom:14}$=A(\Z_2=\langle g_2 \rangle, 1, 1, 1)$\\
$\xlabel[$A^\omega_{14}$]{Aom:14}([1],1) \oplus ([1],r_1) \oplus ([1],r_6) \oplus ([1],r_7) \oplus ([1],r_8) \oplus ([1],r_9) \oplus 2([1],r_{10})$\\
$\cZ(\Vec_{\Z_2})$\\
\hline
\hline

\nameref{Aom:15}$=A(H_1=\langle g_2g_3,g_3 \rangle, H_1=\langle g_2g_3,g_3 \rangle, 1, 1)$\\
$\xlabel[$\textcolor{purple}{A^\omega_{15}}$]{Aom:15}([1],1) \oplus ([1],r_1) \oplus ([1],r_8) \oplus ([1],r_{10}) \oplus ([g_2g_3],2) \oplus ([g_3],2\;g_1^6=-\sqrt{2}\zeta_{16}) \oplus ([g_3],2\;g_1^6=\sqrt{2}\zeta_{16}) \oplus ([g_2],2)$\\
{Trivial}\\
\hline

\nameref{Aom:16}$=A(H_2=\langle g_2g_3,g_3g_1^4 \rangle, H_2=\langle g_2g_3,g_3g_1^4 \rangle, 1, 1)$\\
$\xlabel[$\textcolor{purple}{A^\omega_{16}}$]{Aom:16}([1],1) \oplus ([1],r_1) \oplus ([1],r_8) \oplus ([1],r_{10}) \oplus ([g_2g_3],2) \oplus ([g_3],2\;g_1^6=-\sqrt{2}\zeta_{16}) \oplus ([g_3],2\;g_1^6=\sqrt{2}\zeta_{16}) \oplus ([g_2],2)$\\
{Trivial}\\
\hline

\nameref{Aom:17}$=A(\Z_2=\langle g_2g_3 \rangle, \Z_2=\langle g_2g_3 \rangle, 1, 1)$\\
$\xlabel[$A^\omega_{17}$]{Aom:17}([1],1) \oplus ([1],r_1) \oplus ([1],r_4) \oplus ([1],r_5) \oplus ([1],r_8) \oplus ([1],r_9) \oplus 2([1],r_{10}) \oplus 2([g_2g_3],2)$\\
{Trivial}\\
\hline

\nameref{Aom:18}$=A(\Z_2=\langle g_3 \rangle, \Z_2=\langle g_3 \rangle, 1, 1)$\\
$\xlabel[$A^\omega_{18}$]{Aom:18}([1],1) \oplus ([1],r_1) \oplus ([1],r_2) \oplus ([1],r_3) \oplus 2([1],r_8) \oplus 2([1],r_{10}) \oplus 2([g_3],2\;g_1^6=-\sqrt{2}\zeta_{16}) \oplus 2([g_3],2\;g_1^6=\sqrt{2}\zeta_{16})$\\
{Trivial}\\
\hline

\nameref{Aom:19}$=A(\Z_2=\langle g_2 \rangle, \Z_2=\langle g_2 \rangle, 1, 1)$\\
$\xlabel[$A^\omega_{19}$]{Aom:19}([1],1) \oplus ([1],r_1) \oplus ([1],r_6) \oplus ([1],r_7) \oplus ([1],r_8) \oplus ([1],r_9) \oplus 2([1],r_{10}) \oplus 2([g_2],2)$\\
{Trivial}\\
\hline

\nameref{Aom:20}$=A(1, 1, 1, 1)$\\
$\xlabel[$A^\omega_{20}$]{Aom:20}([1],1) \oplus ([1],r_1) \oplus ([1],r_2) \oplus ([1],r_3) \oplus ([1],r_4) \oplus ([1],r_5) \oplus ([1],r_6) \oplus ([1],r_7) \oplus 2([1],r_8) \oplus 2([1],r_9) \oplus 4([1],r_{10})$\\
{Trivial}\\
\hline
\end{tabular}
}
\end{center}
\caption{Condensable algebras for $\cZ(\Vec_{(\Z_2\times \Z_2)\ltimes \Z_8}^\omega)$ that either contain, or are contained within the twin algebras \nameref{Aom:4} and \nameref{Aom:5}. The top line in each entry gives the data $A(H, N, \gamma, \epsilon)$, the second row the decomposition into anyons and reduced topological order (TO). For the Lagrangian algebras (shown below the double line) the reduced TO is trivial.  The irreps are described in Table \ref{tab:G3243chars} and the associated Hasse diagram is in Figure \ref{fig:Hasse3243_om}. 
\label{tab:G_om}}
\end{table}

\twocolumngrid

\clearpage
\newpage

\clearpage
\newpage
\twocolumngrid
\bibliographystyle{ytphys}
\small 
\baselineskip=.94\baselineskip
\let\bbb\bibitem\def\bibitem{\itemsep4pt\bbb}
\bibliography{GenSym}


\end{document}